\newtheorem{theorem}{Theorem}
\newtheorem*{theorem*}{Theorem}
\newtheorem{lemma}{Lemma}
\newtheorem{proposition}{Proposition}
\newtheorem{observation}{Observation}
\newtheorem{definition}{Definition}
\newtheorem{claim}{Claim}
\pgfplotsset{compat=newest}
\DeclareMathOperator{\supp}{supp}
\DeclarePairedDelimiter{\norm}{\lVert}{\rVert}
\renewcommand{\vec}[1]{\mathbf{#1}}
\renewcommand{\bar}[1]{\overline{#1}}
\newcommand{\mat}[1]{\ensuremath{#1}}
\newcommand{\binr}[1]{\check{#1}}
\newcommand{\transpose}{\mathsf{T}}
\newcommand{\mdot}{|[fill=black]|} 
\newcommand{\tc}{\ensuremath{TC}}
\newcommand{\bc}{\ensuremath{BC}}
\newcommand{\cc}{\ensuremath{CC}}
\newcommand{\dc}{\ensuremath{DC}}
\newcommand{\ev}{\ensuremath{EV}}
\newcommand{\pr}{\ensuremath{PR}}
\newcommand{\cntvec}[1]{\vec{r}_{#1}}
\title{Triangle Centrality}
\author{Paul Burkhardt \footnote{Research Directorate, National Security Agency,
Fort~Meade, MD 20755. Email: pburkha@nsa.gov}}
\date{October 1, 2024}
\begin{document}
\maketitle

\begin{abstract}
  Triangle centrality is introduced for finding important vertices in a graph
  based on the concentration of triangles surrounding each vertex. It has the
  distinct feature of allowing a vertex to be central if it is in many triangles
  or none at all.

  Given a simple, undirected graph $G=(V,E)$ with $n=|V|$ vertices and $m=|E|$
  edges, let $\triangle(v)$ and $\triangle(G)$ denote the respective triangle
  counts of $v$ and $G$. Let $N(v)$ be the neighborhood set of
  $v$. Respectively, $N_\triangle(v)$ and $N_\triangle[v] = \{v\}\cup
  N_\triangle(v)$ denote the set of neighbors that are in triangles with $v$ and
  the closed set including $v$. Then the triangle centrality for a vertex $v$ is

  \begin{equation*}
    \tc(v) = \frac{\frac{1}{3}\sum_{u\in N_\triangle[v]} \triangle(u)
      + \sum_{w\in \{N(v)\setminus N_\triangle(v)\}} \triangle(w)}
    {\triangle(G)}.
  \end{equation*}

  We show experimentally that triangle centrality is broadly applicable to many
  different types of networks. Our empirical results demonstrate that 30\% of
  the time triangle centrality identified central vertices that differed with
  those found by five well-known centrality measures, which suggests novelty
  without being overly specialized. It is also asymptotically faster to compute
  on sparse graphs than all but the most trivial of these other measures.

  We introduce optimal algorithms that compute triangle centrality in
  $O(m\bar\delta)$ time and $O(m+n)$ space, where $\bar\delta\le O(\sqrt{m})$ is
  the \emph{average degeneracy} introduced by Burkhardt, Faber, and Harris
  (2020). In practical applications, $\bar\delta$ is much smaller than
  $\sqrt{m}$ so triangle centrality can be computed in nearly linear time. On a
  Concurrent Read Exclusive Write (CREW) Parallel Random Access Machine (PRAM),
  we give a near work-optimal parallel algorithm that takes $O(\log n)$ time
  using $O(m\sqrt{m})$ CREW PRAM processors. In MapReduce, we show it takes four
  rounds using $O(m\sqrt{m})$ communication bits and is therefore optimal. We
  also derive a linear algebraic formulation of triangle centrality which can be
  computed in $O(m\bar\delta)$ time on sparse graphs.

  \textit{Keywords}: graph, triangle, centrality, algorithm, parallel, pram,
  mapreduce
\end{abstract}

\section{\label{sec:introduction}Introduction}
Given a network of entities, namely a simple, undirected graph $G=(V,E)$ with
$n=|V|$ vertices connected by $m=|E|$ edges, we wish to know the most important
vertices in the graph. This has many applications in web search, network
routing, or simply identifying influential individuals in society and media. We
introduce a new graph centrality measure that models influence by the
concentration of triangles surrounding a vertex. Here, a triangle is a complete
graph of three vertices and is both a 3-clique and a 3-cycle.

Our new triangle centrality~\cite{bib:nsa2017} has the distinct feature of
identifying importance using the cohesive subnetwork of a vertex without
requiring that vertex to have a high degree of direct connectivity. It is based
on the partitioned sum of triangle counts between the triangle neighbors and
non-triangle neighbors of a vertex, normalized over the total triangle count of
the graph. A vertex is important if it is at the center of many triangles
between itself and its neighbors. The values for triangle centrality are bounded
in the interval $[0,1]$ to indicate the proportion of triangles centered at a
vertex.

Triangle centrality is unique because of its duality principle of allowing a
vertex to be central if it is in many triangles or none at all. If a vertex is
in few to no triangles, it can still rank highly given the contribution from its
non-triangle neighbors. This distinguishes our measure from those based on the
clustering coefficient because a vertex that is not in any triangles has zero
clustering coefficient and therefore cannot be central in such measures.

Our centrality measure captures influence in networks with hierarchical
structure, such as those where organizational leaders delegate actions to a
small number of direct subordinates, who by nature of their roles are more
highly cohesive. For an explicit example, consider a hierarchical organization
such as a corporation in which the Chief Executive Officer (CEO) has a very
small number of direct subordinates. But these subordinates, e.g., other
officers in the company, have mutual associates such as office managers,
administrators, and staff that assist them in carrying out the CEO's
objectives. Thus, the top of the organization may have very few contacts, but
the subordinate contacts are likely connected because they must share similar
information or orders handed down by the top. In turn, these subordinates may
have many mutual associates that are also members of triangles. Thus, the top
vertex in the hierarchy could have few or no triangles, but if there are many
triangles concentrated around its direct contacts and their contacts, it
suggests that top vertex is important in the network.

We also believe triangle centrality is more robust to noise and adversarial
gaming because it requires the cooperation from a pair of connected vertices to
contribute to the rank. A vertex that does not have triangles cannot contribute
to the importance of its neighbors. In contrast, an adversary or cheater could
inflate the rank in measures that depend heavily on direct neighbors by spamming
or creating many spurious links.

We will give a precise, mathematical definition for our triangle
centrality. Clearly, our centrality is of no use in triangle-free graphs, such
as bipartite graphs and trees. We show experimentally that triangle centrality
is broadly applicable to many different types of networks. Our empirical results
demonstrate that 30\% of the time triangle centrality identified central
vertices that differed with those found by five well-known centrality measures,
indicating novelty without being overly specialized. We believe it strikes a
good balance between normality and novelty, and it is asymptotically faster to
compute on sparse graphs than all but the most trivial of these other measures.

The runtime for computing triangle centrality is bounded by counting
triangles. It is known that an optimal triangle counting algorithm takes
$O(m\bar\delta) = O(ma)$ time, where $a$ is the \emph{arboricity} and
$\bar\delta \le 2a \le O(\sqrt{m})$ is the \emph{average degeneracy} introduced
in 2020 by Burkhardt et al.~\cite{bib:burkhardt2020}. We will give optimal
algorithms for computing triangle centrality in $O(m\bar\delta)$ time and
$O(m+n)$ space. In practical applications, $\bar\delta$ is much smaller than
$\sqrt{m}$ so triangle centrality can be computed in nearly linear time. We also
give parallel algorithms for Concurrent Read Exclusive Write (CREW) Parallel
Random Access Machine (PRAM) and MapReduce models. Our CREW algorithm is nearly
work-optimal, taking $O(\log n)$ time and $O(m\sqrt{m})$ processors. Our
MapReduce algorithm takes $O(1)$ rounds and communicates $O(m\sqrt{m})$ bits and
is therefore optimal. We also derive a linear algebraic formulation of triangle
centrality which can be computed in $O(m\bar\delta)$ time on sparse graphs.

We summarize our contribution in Section~\ref{sec:contribution} and set the
motivation in Section~\ref{sec:motivation}. We define triangle centrality and
introduce basic observations in Sections~\ref{sec:formulation} and
\ref{sec:observation}. We compare our triangle centrality to common centrality
measures in Sections~\ref{sec:related} and \ref{sec:compare}. Optimal algorithms
are given in Sections~\ref{sec:algorithm} and \ref{sec:parallel_alg}, followed
by runtime performance in Section~\ref{sec:performance}.

\section{\label{sec:notation}Notation}
The vertices in $G$ are labeled $[n]=1,2,3...n$. A complete graph of $n$
vertices is an $n$-clique and is denoted by $K_n$. The neighborhood (adjacency)
of a vertex $v$ is $N(v)=\{u \mid (u,v) \in E \}$, and the number of neighbors
of a vertex is its degree $d(v)=|N(v)|$. A path is a consecutive sequence of
adjacent edges in $E$. The distance $d_G(u,v)$ is the shortest-path length
between vertices $v$ and $u$.

The average degeneracy $\bar\delta$ of $G$ is defined as

\[
\bar\delta = \frac{1}{m}\sum_{(v,u)\in E} \min\{d(v),d(u)\},
\]
\noindent
then it follows from the arboricity~\cite{bib:chiba_nishizeki1985} that
$\bar\delta \le 2a \le O(\sqrt{m})$. Recall the arboricity is the minimum number
of disjoint forests whose union covers $E$.

Let $\triangle(v)$ denote the triangle count for a vertex $v$ and $\triangle(G)$
for the total triangle count in $G$. The triangle neighborhood of $v$ is given
by $N_\triangle (v) = \{u\in N(v) \mid N(u) \cap N(v) \ne \varnothing\}$, where
the closed triangle neighborhood of $v$ is $N_\triangle[v] = \{v\} \cup
N_\triangle(v)$. We will refer to $N_\triangle[v]$ as the triangle core of $v$.

Let $\pi : V \rightarrow [n]$ be an ordering on vertices in $G$ such $\pi(u) <
\pi(v)$ if $d(u) < d(v)$ or in the case of a tie then $u < v$. We will say that
vertex $u$ is a higher-ordered or higher-ranked vertex than $v$ if $\pi(u) >
\pi(v)$. We denote $N_H(v) = \{u \in N(v) \mid \pi(u) > \pi(v)\}$ as the
higher-ordered neighborhood of $v$ and similarly $N_L(v)$ for the set containing
$\pi(u) < \pi(v)$ lower-ordered neighbors.

The $(ij)$th entry of a matrix $\mat{X}$ may be written as $x(i,j)$, $x_{ij}$,
or also $X_{ij}$. The $j$th column vector is written $\vec{X_j}$, and the
corresponding row vector is given by the transpose $\vec{X_j}^\transpose$. The
$\ell_1$ norm of a vector $\vec{x}$ is denoted $\norm{\vec{x}}_1=\sum_i \lvert
x_i \rvert$. The support of a vector $\vec{x}$, denoted by $\supp \vec{x}$,
refers to the set of indices corresponding to non-zero entries in $\vec{x}$,
thus $\supp \vec{x} = \{i \mid x(i) \ne 0\}$. We use $\circ$ to denote the
Hadamard product operator for elementwise matrix multiplication. The symbol
$\omega$ denotes the matrix multiplication exponent. The binary (Boolean) form
of a matrix (or vector), where non-zeros take the value of 1, is written as
$\binr{\mat{X}}$ in the case of a matrix $\mat{X}$ and similarly for a vector.

Let $\mat{A} \in \{0,1\}^{n \times n}$ be the symmetric adjacency matrix for
$G$. The graph triangle matrix $\mat{T}=\mat{A}^2\circ \mat{A}$ holds the
triangle counts between a vertex and its neighbors~\cite{bib:burkhardt2017}. The
vector $\vec{1}$ is a vector of all ones, and $\mat{I}$ is the identity matrix.

We will use the abbreviations in Table~\ref{tbl:abbrev} for the different
centrality measures appearing in this article. For a centrality measure $i$, let
$\cntvec{i}$ denote a vector holding the centrality score of every vertex in $G$
(e.g., $\cntvec{\ev}$).

\begin{table}[H]
  \caption{\label{tbl:abbrev}Centrality Glossary}
  \centering
  \begin{tabular}{ll}
    \toprule
    \tc & Triangle centrality \\ 
    \bc & Betweenness centrality \\
    \cc & Closeness centrality \\
    \dc & Degree centrality \\
    \ev & Eigenvector centrality \\
    \pr & PageRank centrality \\
    \bottomrule
  \end{tabular}
\end{table}

Table~\ref{tbl:symbol} summarizes the mathematical symbols used throughout this
article.

\begin{table}[H]
  \caption{\label{tbl:symbol}Symbol Glossary}
  \centering
  \begin{tabular}{ll}
    \toprule
    $G$ & Simple undirected graph \\
    $K_n$ & Complete graph on $n$ vertices (n-clique) \\
    $n$ & Vertex count of $G$ \\
    $m$ & Edge count of $G$ \\
    $d(v)$ & Degree of vertex $v$ \\
    $d_G(u,v)$ & Distance between vertices $u,v$ in $G$ \\
    $a$ & Arboricity of $G$ \\
    $\bar\delta$ & Average degeneracy of $G$ \\
    $\triangle(G)$ & Total triangle count in $G$ \\
    $\triangle(v)$ & Local triangle count of vertex $v$ \\
    $N(v)$ & Neighborhood of vertex $v$ \\
    $N_\triangle (v)$ & Triangle neighborhood of vertex $v$ \\
    $N_\triangle[v]$ & Closed triangle neighborhood of vertex $v$
    (triangle core) \\
    $N_H(v)$ & Higher-ordered neighborhood of vertex $v$ \\
    $N_L(v)$ & Lower-ordered neighborhood of vertex $v$ \\
    $\mat{X}$ & Matrix notation \\
    $\vec{x}$ & Vector notation \\
    $x(i,j)$ (or $x_{ij},X_{ij}$) & $(ij)$th entry of matrix $\mat{X}$ \\
    $\vec{X_j}$ & $j$th column vector of matrix $\mat{X}$ \\
    $\transpose$ & Transpose, e.g., $\mat{X}^\transpose$ (matrix transpose) \\
    $\supp \vec{x}$ & Support of vector $\vec{x}$ \\
    $\norm{\vec{x}}_1$ & $\ell_1$ norm of vector $\vec{x}$ \\
    $\circ$ & Hadamard product operator (element-wise multiplication) \\
    $\omega$ & Matrix multiplication exponent \\
    $\vec{1}$ & Vector of all ones \\
    $\mat{I}$ & Identity matrix \\
    $\mat{A}$ & Adjacency matrix of $G$ \\
    $\mat{T}$ & Graph triangle matrix ($\mat{T}=\mat{A}^2\circ \mat{A}$) \\
    $\binr{\mat{T}}$ & Binary graph triangle matrix \\
    \bottomrule
  \end{tabular}
\end{table}

\section{\label{sec:contribution}Contribution}
We introduce triangle centrality for identifying important vertices in a
graph. It has the following advantages:

\begin{itemize}
\item duality principle allowing high- or low-triangle counts;

\item influence does not depend on a high number of direct contacts;
  
\item finishes in a constant number of steps (non-iterative);

\item fast asymptotic runtime equivalent to that of triangle counting.
\end{itemize}

Triangle centrality is formally defined by this next definition followed by a
linear algebraic formulation.

\newtheorem*{def:tricent}{Definition~\ref{def:tricent}}
\begin{def:tricent}
  The triangle centrality of a vertex $v$ is

  \begin{equation*}
    \tc(v) = \frac{\frac{1}{3}\sum_{u\in N_\triangle[v]} \triangle(u) +
      \sum_{w\in \{N(v)\setminus N_\triangle(v)\}} \triangle(w)}
    {\triangle(G)}.
  \end{equation*}
\end{def:tricent}

\newtheorem*{prop:tricent_algebraic}{Proposition~\ref{prop:tricent_algebraic}}
\begin{prop:tricent_algebraic}
  The linear algebraic triangle centrality for all vertices is given by,

  \begin{equation*}
    \cntvec{\tc} = \frac{\left(3\mat{A} - 2\binr{\mat{T}} + \mat{I}\right)
      \mat{T}\vec{1}}{\vec{1}^\transpose \mat{T} \vec{1}}.
  \end{equation*}
\end{prop:tricent_algebraic}

Since triangle centrality depends only upon triangle counts, we show that
triangle centrality can be optimally computed leading to the following
algorithmic results.

\newtheorem*{thm:tricent}{Theorem~\ref{thm:tricent}}
\begin{thm:tricent}
  Triangle centrality can be computed in $O(m\bar\delta)$ time and $O(m+n)$
  space for all vertices in $G$.
\end{thm:tricent}

\newtheorem*{thm:tricent_algebraic}{Theorem~\ref{thm:tricent_algebraic}}
\begin{thm:tricent_algebraic}
  There is a linear algebraic algorithm that computes the triangle centrality in
  $O(m\bar\delta)$ time and $O(m+n)$ space for all vertices in $G$ given a
  sparse matrix $\mat{A}$.
\end{thm:tricent_algebraic}

\newtheorem*{thm:tricent_fastmatrix}{Theorem~\ref{thm:tricent_fastmatrix}}
\begin{thm:tricent_fastmatrix}
  Triangle centrality can be computed in $n^{\omega+o(1)}$ time using fast
  matrix multiplication, where $\omega$ is the matrix multiplication exponent,
  for all vertices in $G$.
\end{thm:tricent_fastmatrix}

We also give parallel algorithms to compute the triangle centrality on a CREW
PRAM and in MapReduce. The MapReduce algorithm is optimal, and the CREW PRAM is
optimal up to a logarithmic factor, yielding the next results.

\newtheorem*{thm:tricent_crew}{Theorem~\ref{thm:tricent_crew}}
\begin{thm:tricent_crew}
  Triangle centrality can be computed on a CREW PRAM in $O(\log n)$ time using
  $O(m\sqrt{m})$ processors for all vertices in $G$.
\end{thm:tricent_crew}

\newtheorem*{thm:tricent_mapreduce}{Theorem~\ref{thm:tricent_mapreduce}}
\begin{thm:tricent_mapreduce}
  Triangle centrality can be computed using MapReduce in four rounds and
  $O(m\sqrt{m})$ communication bits for all vertices in $G$.
\end{thm:tricent_mapreduce}

\section{\label{sec:motivation}Motivation}
We use triangle counts for our centrality measure because the role of triangles
in network cohesion has been well established in social network
analysis~\cite{bib:watts_strogatz1998, bib:newman2001, bib:newman2002,
  bib:newman_park2003, bib:friggeri2011}. If an individual has two friends and
these two friends are also friends, then the trio is more cohesive. A
concentration of triangles increases network density, allowing information to
spread more rapidly because there are more connected pathways. The importance of
triangles in cohesive networks was formally developed by Friggeri et al. in
2011~\cite{bib:friggeri2011}, but cohesive networks based on triangles were
explored earlier with the introduction of the $k$-truss by Cohen in
2005~\cite{bib:cohen2005}, later published in 2009~\cite{bib:cohen2009}. The
$k$-truss is a maximal subgraph in which each edge is incident to $k$
triangles. See~\cite{bib:burkhardt2020} for a more complete analysis of graph
trusses. In 1998, Watts and Strogatz~\cite{bib:watts_strogatz1998} found that
triangles were integral to the property of real-world networks and introduced
the clustering coefficient as a measure of how likely a pair of neighbors of a
vertex may themselves be directly connected. Triangles are also a key component
of clusters in social networks~\cite{bib:newman2001, bib:newman2002,
  bib:newman_park2003}.

We claim that importance is due to the concentration of triangles around a
vertex, allowing that vertex to either be involved in many triangles or
not. Meaning, an important vertex is at the center of many triangles, but it may
or may not be incident to many triangles. Our goal is to capture this
counter-intuitive duality, which appears in networks exhibiting organizational
hierarchies where both direct and indirect connectivity need to be
accounted. Triangle centrality is able to capture important figures who are
either very active in the network or inactive remaining ``behind-the-scenes''
delegating to subordinates. To the best of our knowledge, this duality is not
explicitly modeled by other centrality measures.

For a vertex $v$ to be important in our measure, it requires the support of a
pair of adjacent vertices $\{u,w\}$ that must either be in a triangle with $v$
or in a triangle with a neighbor of $v$. We posit if $v$ has neighbors $u$ that
are involved in many triangles and hence are themselves important, then these
$u$ affirm $v$ is also important regardless of whether $v$ itself is in any
triangles. In a real-world scenario, this could mean that the original pair of
connected vertices have conferred and agreed upon the selection of the mutual
vertex as a contact. This differs starkly from other centrality measures where
the contribution or ``vote'' for a vertex comes from neighbors regardless of the
connection between neighbors. Thus, an e-mail spammer can appear as an important
vertex in these measures.

To illustrate our premise, we argue that vertex ``a'' in each of the graphs in
Figure~\ref{fig:tc_ex} is the most triangle-centric vertex and henceforth the
most important vertex. We will compare the ranking of vertex ``a'' in our
discussion of other centrality measures at the end of Section~\ref{sec:related},
and later in Section~\ref{sec:compare}, we will analyze triangle centrality
results with these other measures on 20 realistic graphs.

\begin{figure}[H]
\centering
\begin{subfigure}[t]{0.35\textwidth}
  \centering
  \begin{tikzpicture}
    [
      scale=0.75, inner sep=2.5,
      every node/.style = {circle,draw=black,thin,fill=white}
    ]
    \node (a) {a};
    \node (b) at ([shift={(120:1.5)}]a) {};
    \node (c) at ([shift={(60:1.5)}]a) {};
    \node (d) at ([shift={(0:1.5)}]a) {};
    \node (e) at ([shift={(300:1.5)}]a) {};
    \node (f) at ([shift={(240:1.5)}]a) {};
    \node (g) at ([shift={(180:1.5)}]a) {};

    \node (1) at ([shift={(75:0.75)}]b) {};
    \node (2) at ([shift={(105:0.75)}]b) {};
    \node (3) at ([shift={(135:0.75)}]b) {};
    \node (4) at ([shift={(165:0.75)}]b) {};

    \node (5) at ([shift={(105:0.75)}]c) {};
    \node (6) at ([shift={(75:0.75)}]c) {};
    \node (7) at ([shift={(45:0.75)}]c) {};
    \node (8) at ([shift={(15:0.75)}]c) {};

    \node (9) at ([shift={(45:0.75)}]d) {};
    \node (10) at ([shift={(15:0.75)}]d) {};
    \node (11) at ([shift={(345:0.75)}]d) {};
    \node (12) at ([shift={(315:0.75)}]d) {};

    \node (13) at ([shift={(345:0.75)}]e) {};
    \node (14) at ([shift={(315:0.75)}]e) {};
    \node (15) at ([shift={(285:0.75)}]e) {};
    \node (16) at ([shift={(255:0.75)}]e) {};

    \node (17) at ([shift={(285:0.75)}]f) {};
    \node (18) at ([shift={(255:0.75)}]f) {};
    \node (19) at ([shift={(225:0.75)}]f) {};
    \node (20) at ([shift={(195:0.75)}]f) {};

    \node (21) at ([shift={(225:0.75)}]g) {};
    \node (22) at ([shift={(195:0.75)}]g) {};
    \node (23) at ([shift={(165:0.75)}]g) {};
    \node (24) at ([shift={(135:0.75)}]g) {};

    \draw (a) to (b);
    \draw (a) to (c);
    \draw (a) to (d);
    \draw (a) to (e);
    \draw (a) to (f);
    \draw (a) to (g);
    \draw (b) to (c);
    \draw (d) to (e);
    \draw (f) to (g);

    \draw (b) to (1);
    \draw (b) to (2);
    \draw (b) to (3);
    \draw (b) to (4);
    \draw (c) to (5);
    \draw (c) to (6);
    \draw (c) to (7);
    \draw (c) to (8);
    \draw (d) to (9);
    \draw (d) to (10);
    \draw (d) to (11);
    \draw (d) to (12);
    \draw (e) to (13);
    \draw (e) to (14);
    \draw (e) to (15);
    \draw (e) to (16);
    \draw (f) to (17);
    \draw (f) to (18);
    \draw (f) to (19);
    \draw (f) to (20);
    \draw (g) to (21);
    \draw (g) to (22);
    \draw (g) to (23);
    \draw (g) to (24);
  \end{tikzpicture}
  \subcaption{\label{fig:tc_ex_1}}
\end{subfigure}
~
\begin{subfigure}[t]{0.35\textwidth}
  \centering
  \begin{tikzpicture}
    [
      scale=0.75, inner sep=2.5,
      every node/.style = {circle,draw=black,thin,fill=white}
    ]
    \def \n{6}
    \node (a) {a};

    \foreach \x in {0,...,5} {
      \node (\x)
      at ($ ([shift={(45:2cm)}]a) + (360/\n*\x:0.75) $) {};
    }
    \foreach \x [evaluate=\x as \ystart using int(\x+1)] in {0,...,4} {
      \foreach \y in {\ystart,...,5} {
        \draw (\x) -- (\y);
      }
    }
    \draw (a) -- (4);
    
    \foreach \x in {0,...,5} {
      \node (\x)
      at ($ ([shift={(-45:2cm)}]a) + (360/\n*\x:0.75) $) {};
    }
    \foreach \x [evaluate=\x as \ystart using int(\x+1)] in {0,...,4} {
      \foreach \y in {\ystart,...,5} {
        \draw (\x) -- (\y);
      }
    }
    \draw (a) -- (2);

    \foreach \x in {0,...,5} {
      \node (\x)
      at ($ ([shift={(135:2cm)}]a) + (360/\n*\x:0.75) $) {};
    }
    \foreach \x [evaluate=\x as \ystart using int(\x+1)] in {0,...,4} {
      \foreach \y in {\ystart,...,5} {
        \draw (\x) -- (\y);
      }
    }
    \draw (a) -- (5);

    \foreach \x in {0,...,5} {
      \node (\x)
      at ($ ([shift={(225:2cm)}]a) + (360/\n*\x:0.75) $) {};
    }
    \foreach \x [evaluate=\x as \ystart using int(\x+1)] in {0,...,4} {
      \foreach \y in {\ystart,...,5} {
        \draw (\x) -- (\y);
      }
    }
    \draw (a) -- (1);
  \end{tikzpicture}
  \subcaption{\label{fig:tc_ex_2}}
\end{subfigure}

\bigskip
\begin{subfigure}[t]{0.35\textwidth}
  \centering
  \begin{tikzpicture}
    [
      scale=0.75, inner sep=2.5,
      every node/.style = {circle,draw=black,thin,fill=white}
    ]
    \node (1) {};
    \node (2) at ([shift={(75:1cm)}]1) {};
    \node (3) at ([shift={(105:1cm)}]1) {};
    \node (4) at ([shift={(135:1cm)}]1) {};
    \node (5) at ([shift={(165:1cm)}]1) {};
    \node (6) at ([shift={(195:1cm)}]1) {};
    \node (7) at ([shift={(225:1cm)}]1) {};
    \node (8) at ([shift={(255:1cm)}]1) {};
    \node (9) at ([shift={(285:1cm)}]1) {};
    \draw (1) -- (2);
    \draw (1) -- (3);
    \draw (1) -- (4);
    \draw (1) -- (5);
    \draw (1) -- (6);
    \draw (1) -- (7);
    \draw (1) -- (8);
    \draw (1) -- (9);
    \node (a) [right=1.5cm of 1] {a};
    \node (10) [above=0.5cm of a] {};
    \node (11) at ([shift={(120:1cm)}]10) {};
    \node (12) at ([shift={(60:1cm)}]10) {};
    \node (13) at ([shift={(30:1.15cm)}]a) {};
    \node (14) at ([shift={(330:1.15cm)}]a) {};
    \node (15) [below=0.5cm of a] {};
    \node (16) at ([shift={(240:1cm)}]15) {};
    \node (17) at ([shift={(300:1cm)}]15) {};
    \draw (1) -- (a);
    \draw (a) -- (10);
    \draw (a) -- (13);
    \draw (a) -- (14);
    \draw (a) -- (15);
    \draw (10) -- (11);
    \draw (10) -- (12);
    \draw (11) -- (12);
    \draw (13) -- (14);
    \draw (15) -- (16);
    \draw (15) -- (17);
    \draw (16) -- (17);
  \end{tikzpicture}
  \subcaption{\label{fig:tc_ex_3}}
\end{subfigure}
~
\begin{subfigure}[t]{0.35\textwidth}
  \centering
  \begin{tikzpicture}
    [
      scale=0.75, inner sep=2.5,
      every node/.style = {circle,draw=black,thin,fill=white}
    ]
    \def \n{5}
    \node (a) {a};

    \node (1) at ([shift={(180:1cm)}]a) {};
    \node (2) at ([shift={(108:1cm)}]1) {};
    \node (3) at ([shift={(36:1cm)}]2) {};
    \node (4) at ([shift={(324:1cm)}]3) {};

    \node (5) [below=of a] {};
    \node (6) at ([shift={(240:1cm)}]5) {}; 
    \node (7) at ([shift={(300:1cm)}]5) {};

    \node (8) [right=1.5cm of a]  {};
    \node (9) at ([shift={(105:1cm)}]8) {};
    \node (10) at ([shift={(75:1cm)}]8) {};
    \node (11) at ([shift={(45:1cm)}]8) {};
    \node (12) at ([shift={(15:1cm)}]8) {};
    \node (13) at ([shift={(345:1cm)}]8) {};
    \node (14) at ([shift={(315:1cm)}]8) {};
    \node (15) at ([shift={(285:1cm)}]8) {};
    \node (16) at ([shift={(255:1cm)}]8) {};
    \node (17) at ([shift={(225:1cm)}]8) {};

    \draw (a) -- (1);
    \draw (a) -- (2);
    \draw (a) -- (3);
    \draw (a) -- (4);
    \draw (1) -- (2);
    \draw (1) -- (3);
    \draw (1) -- (4);
    \draw (2) -- (3);
    \draw (2) -- (4);
    \draw (3) -- (4);

    \draw (a) -- (5);
    \draw (5) -- (6);
    \draw (5) -- (7);
    \draw (6) -- (7);

    \draw (a) -- (8);
    \draw (8) -- (9);
    \draw (8) -- (10);
    \draw (8) -- (11);
    \draw (8) -- (12);
    \draw (8) -- (13);
    \draw (8) -- (14);
    \draw (8) -- (15);
    \draw (8) -- (16);
    \draw (8) -- (17);
  \end{tikzpicture}
  \subcaption{\label{fig:tc_ex_4}}
\end{subfigure}
\caption{\label{fig:tc_ex}Vertex ``a'' in each graph is the most central
  vertex.}
\end{figure}

\section{\label{sec:formulation}Formulation}
Our notion of centrality is based on the triangle counts from a vertex and each
of its neighbors. The contribution of triangle counts for the centrality of a
vertex comes from its subgraph of diameter four. We claim a vertex is important
if it is in many triangles or if its neighbors are in many triangles. This
leads us to partition the triangle counts between neighbors of a vertex that are
in triangles with it and those that are not. This allows a vertex to be highly
ranked if it is at the center of many triangles in its subnetwork without having
to be directly incident to triangles. Recall $N_\triangle[v]$ is the triangle
core of $v$, meaning the set containing $v$ and its triangle neighbors. Our
precise definition of triangle centrality is given next.

\begin{definition}
  \label{def:tricent}
  The triangle centrality of a vertex $v$ is

  \begin{equation*}
    \tc(v) = \frac{\frac{1}{3}\sum_{u\in N_\triangle[v]} \triangle(u)
      + \sum_{w\in \{N(v)\setminus N_\triangle(v)\}} \triangle(w)}
    {\triangle(G)}.
  \end{equation*}
\end{definition}

This definition captures the observations that the total triangle count for a
vertex and its neighbors cannot exceed $\triangle(G)$, and conversely, a vertex
and its neighbors may not be in any triangles. The sum of triangle counts over
the triangle core $N_\triangle[v]$ in Definition~\ref{def:tricent} is multiplied
by a $\frac{1}{3}$ factor to prevent overcounting triangles because the same
triangle is counted by each of its three vertices. If the graph is a clique,
then it suggests each vertex is equally central. Definition~\ref{def:tricent}
also leads to convenient expressions for other special cases that we describe in
Section~\ref{sec:observation}. The range of values triangle centrality can take
implies the proportion of triangles centered at a vertex.

\begin{proposition}
  \label{prop:tricent_range}
  The triangle centrality of a vertex takes values in the range $[0,1]$.
\end{proposition}

\begin{proof}
  We begin by analyzing the numerator in Definition~\ref{def:tricent}.
  
  If a vertex and its neighbors are incident to all triangles in $G$, then the
  second term in the numerator is 0 and the first term gives $\triangle(G)$
  because each triangle is counted once.

  If a vertex is not incident to any triangles, then the first term in the
  numerator is 0 and the second term can range from 0 to $\triangle(G)$.

  Consequently, dividing the numerator by $\triangle(G)$ ensures the values are
  bounded in the range $[0,1$].
\end{proof}

An example calculation of Definition~\ref{def:tricent} is given in
Figure~\ref{fig:simple_ex}. Overall there are three triangles in
Figure~\ref{fig:simple_ex}, two from $N_\triangle[v]$ and one from a neighbor of
$v$ that is not in $N_\triangle[v]$. Without the $\frac{1}{3}$ factor the
triangles in $N_\triangle[v]$ would be overcounted, once from each vertex of a
triangle. We argue that $v$ is at the center of the concentration of triangles
in Figure~\ref{fig:simple_ex}, and therefore, its triangle centrality value is
exactly one.

\begin{figure}[H]
  \centering
  \framebox[1.1\width]{%
  \begin{tikzpicture}
    [inner sep=3, node distance=0.5cm,
    vertex/.style={circle,draw=black,thin},
    edge/.style={thick}]
    \node [vertex] (v) {v};
    \node [vertex] (c) [left=of v] {};
    \node [vertex] (a) [above=of c] {};
    \node [vertex] (b) [below=of c] {};
    \node [vertex] (d) [right=of v] {};
    \node [vertex] (e) [above right=of d] {};
    \node [vertex] (f) [below right=of d] {};
    \draw [edge] (v) to (a);
    \draw [edge] (v) to (b);
    \draw [edge] (v) to (c);
    \draw [edge] (a) to (c);
    \draw [edge] (b) to (c);
    \draw [edge] (v) to (d);
    \draw [edge] (d) to (e);
    \draw [edge] (d) to (f);
    \draw [edge] (e) to (f);
    \matrix(m) [xshift=.5cm, right=of d] 
    [matrix of nodes,ampersand replacement=\&,
    column 1/.style={anchor=west}] {
      $\sum_{u\in N_\triangle[v]} \triangle(u)=6$ \\
      $\sum_{w\in \{N(v)\setminus N_\triangle(v)\}} \triangle(w)=1$\\
      $\triangle(G)=3$ \\
      $\tc(v)=\dfrac{\frac{1}{3}(6)+1}{3}=1$\\
    };
  \end{tikzpicture}
}
  \caption{\label{fig:simple_ex}Example calculation of triangle centrality.}
\end{figure}

\subsection{\label{sec:algebraic}Algebraic Form}
The triangle centrality given in Definition~\ref{def:tricent} can also be
formulated in linear algebra using the adjacency matrix $\mat{A}$ and the graph
triangle matrix. This author introduced the graph triangle matrix
in~\cite{bib:burkhardt2017} and defined it as $\mat{T}=\mat{A}^2\circ
\mat{A}$. Thus, $\mat{T}$ encodes the triangle neighbors of each vertex, just as
$\mat{A}$ encodes all neighbors. It follows that the triangle counts can be
obtained from $\mat{T}$.

\begin{theorem*}[Thm.~1~\cite{bib:burkhardt2017}]
  Given $G$ and the Hadamard product $(\mat{A}^2\circ \mat{A})$, then
  $\triangle(v) = \frac{1}{2} \sum_v (\mat{A}^2\circ \mat{A})_v$ and
  $\triangle(G) = \frac{1}{6} \sum_{ij} (\mat{A}^2\circ \mat{A})_{ij}$.
\end{theorem*}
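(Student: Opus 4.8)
The plan is to interpret each entry of the graph triangle matrix $T=A^2\circ A$ combinatorially and then count triangle--edge incidences with the correct multiplicities. First I would note that $(A^2)_{ij}=\sum_k a(i,k)a(k,j)$ counts the walks of length two from $i$ to $j$, equivalently the number of common neighbors of $i$ and $j$. Taking the Hadamard product with $A$ multiplies this by $a(i,j)$, so $(A^2\circ A)_{ij}$ is nonzero only when $\{i,j\}\in E$, and in that case it equals the number of vertices $k$ adjacent to both $i$ and $j$. Since $i$ and $j$ are themselves adjacent, each such $k$ yields a distinct triangle $\{i,j,k\}$; hence $(A^2\circ A)_{ij}$ is exactly the number of triangles containing the edge $\{i,j\}$.

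For the per-vertex identity, I would sum the entries in the row (equivalently, by symmetry of $A$, the column) indexed by $v$. Each triangle $\{v,j,k\}$ through $v$ is incident to exactly two edges at $v$, namely $\{v,j\}$ and $\{v,k\}$, so it is counted once in $(A^2\circ A)_{vj}$ and once in $(A^2\circ A)_{vk}$. Thus every triangle through $v$ contributes $2$ to the row sum, giving $\triangle(v)=\tfrac12\sum_u (A^2\circ A)_{vu}$.

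For the global identity, I would sum over all entries and track the total multiplicity with which each triangle is counted. A triangle has three edges, and by the symmetry $a(i,j)=a(j,i)$ each edge $\{i,j\}$ contributes to both $(A^2\circ A)_{ij}$ and $(A^2\circ A)_{ji}$; hence each triangle is counted $3\times 2=6$ times in $\sum_{ij}(A^2\circ A)_{ij}$, yielding $\triangle(G)=\tfrac16\sum_{ij}(A^2\circ A)_{ij}$. Equivalently, summing the per-vertex identity over all $v$ gives $\sum_v \triangle(v)=3\triangle(G)$, since each triangle is counted once at each of its three vertices, so $\sum_{ij}(A^2\circ A)_{ij}=2\sum_v\triangle(v)=6\triangle(G)$.

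The main obstacle here is not any single difficult step but the bookkeeping of these multiplicities, i.e.\ correctly isolating the factor $2$ at the vertex level and the factor $6$ at the graph level. I would guard against an off-by-a-constant error with a trace cross-check: expanding $\sum_{ij}(A^2\circ A)_{ij}=\sum_{i,j,k} a(i,j)a(i,k)a(k,j)$ and applying the symmetry of $A$ shows this equals $\tr(A^3)=\sum_{i,j,k} a(i,j)a(j,k)a(k,i)$, the number of closed walks of length three, which is the classical $6\triangle(G)$. This independent route confirms the factor of $6$ and, combined with the row-sum step, the factor of $2$.
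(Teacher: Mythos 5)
Your proof is correct. The paper does not actually prove this statement---it is imported from \cite{bib:burkhardt2016} and justified only by the remark that a nonzero $t(i,j)$ equals the number of triangles incident to the edge $\{i,j\}$---but your argument is exactly the standard one behind that remark: identify $(A^2\circ A)_{ij}$ with the per-edge triangle count, then account for the multiplicity $2$ at a vertex (two incident edges per triangle) and $6$ globally (three edges, each appearing in two symmetric entries), with the $\tr(A^3)$ identity as a consistency check. Nothing is missing.
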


Since $\mat{A}$ is symmetric, then $\mat{T}=\mat{T}^\transpose$. The matrix
$\mat{T}=\mat{A}^2\circ \mat{A}$ cannot have more non-zeros than $\mat{A}$ due
to the Hadamard operation, and the value of a non-zero $t(i,j)$ indicates the
count of triangles incident to the $\{i,j\}$ edge. Hence,

\begin{align*}
  \triangle(v) &= \frac{1}{2} \norm{\vec{T_v}}_1
  = \vec{T_v}^\transpose \vec{1} \\
  \triangle(G) &= \frac{1}{6} \vec{1}^\transpose \mat{T} \vec{1}.
\end{align*}

This leads us to the linear algebraic formulation for triangle centrality.

\begin{proposition}
  \label{prop:tricent_algebraic}
  The linear algebraic triangle centrality for all vertices is given by,

  \begin{equation*}
    \cntvec{\tc} = \frac{\left(3\mat{A} - 2\binr{\mat{T}}
      + \mat{I}\right)\mat{T} \vec{1}}{\vec{1}^\transpose \mat{T} \vec{1}}.
  \end{equation*}
\end{proposition}

\begin{proof}
  The non-zeros in each $\vec{T_v}$ column vector correspond to the triangle
  neighbors of $v$, hence $N_\triangle(v) = \supp \vec{T_v}$. Using this and
  substituting $\triangle(v) = \frac{1}{2} \norm{\vec{T_v}}_1$ and $\triangle(G)
  = \frac{1}{6} \vec{1}^\transpose \mat{T} \vec{1}$ into
  Definition~\ref{def:tricent} yields

  \begin{displaymath}
    \tc(v) = \frac{\sum_{u\in \{v\} \cup\, \supp \vec{T_v}} \norm{\vec{T_u}}_1
      + 3\sum_{w\in \supp (\vec{A_v} - \vec{T_v})} \norm{\vec{T_w}}_1}
    {\vec{1}^\transpose \mat{T} \vec{1}}.
  \end{displaymath}

  Now recall that a matrix-vector multiplication produces a linear combination
  of the matrix column vectors, where each corresponds to an index of a non-zero
  value in the input vector and is scaled by that value. Then summing the
  triangle counts for all triangle neighbors of $v$, hence $\sum_{u\in
    N_\triangle(v)} \norm{\vec{T_u}}_1$, can be achieved by the
  $(\mat{T}\vec{\binr{\mat{T}}_v})^\transpose \vec{1}$ product. The core
  triangle sum $\sum_{u\in \{v\} \cup\, \supp \vec{T_v} } \norm{\vec{T_u}}_1$,
  which includes the triangle count for $v$, can be obtained by
  $(\mat{T}\vec{\binr{\mat{T}}_v} + \vec{T_v})^\transpose \vec{1}$. Similarly,
  to get the sum of triangle counts from non-triangle neighbors, we use
  $\left(\mat{T}(\vec{A_v}-\vec{\binr{\mat{T}}_v})\right)^\transpose
  \vec{1}$. The next steps demonstrate how to transform the triangle centrality
  for a single vertex into matrix-vector products.

  \begin{align*}
    \tc(v) &= \frac{\sum_{u\in \{v\} \cup\, \supp \vec{T_v}} \norm{\vec{T_u}}_1
      + 3\sum_{w\in \supp (\vec{A_v} - \vec{T_v})} \norm{\vec{T_w}}_1}
    {\vec{1}^\transpose \mat{T} \vec{1}} \\
    &= \frac{(\mat{T} \vec{\binr{\mat{T}}_v} + \vec{T_v})^\transpose \vec{1}
      + 3\left(
      \mat{T} (\vec{A_v} - \vec{\binr{\mat{T}}_v})
      \right)^\transpose\vec{1}}
    {\vec{1}^\transpose \mat{T} \vec{1}} \\
    &= \frac{\left(3\vec{A_v} - 2\vec{\binr{\mat{T}}_v}
      + \vec{I_v} \right)^\transpose \mat{T} \vec{1}}
    {\vec{1}^\transpose \mat{T} \vec{1}}.
    \tag{$T=T^\transpose$ and $\vec{T_v}=\mat{T}\vec{I_v}$}
  \end{align*}

  Then extending for all vertices leads to the desired matrix formulation.
\end{proof}

Given $\mat{T}$, the linear algebraic triangle centrality can be computed with
just two matrix-vector multiplications, two matrix additions, and a single inner
product.

\section{\label{sec:observation}Observations}
In some cases, Definition~\ref{def:tricent} can lead to the surprising result
that the triangle centrality for a vertex is not explicitly dependent on the
number of triangles. This can be useful to quickly test and validate
implementations of triangle centrality.

\begin{observation}
  \label{obs:in_clique}
  The triangle centrality for every vertex is $1$ if $G$ is a clique.
\end{observation}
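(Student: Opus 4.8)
The plan is to verify Observation~\ref{obs:in_clique} directly from the definition in Equation~\ref{eqn:tricent} by computing each quantity appearing in the numerator and denominator when $G=K_n$. First I would record the relevant facts about a clique: every vertex has degree $n-1$, every pair of adjacent vertices shares $n-2$ common neighbors, and hence every neighbor of $v$ is a triangle neighbor. This means $N_\triangle(v)=N(v)$, so the set $N(v)\setminus N_\triangle(v)$ is empty and the second sum in the numerator vanishes outright. Consequently the triangle core satisfies $N^+_\triangle(v)=\{v\}\cup N(v)$, which is the entire vertex set $V$.

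Next I would compute the two triangle counts. For a single vertex in $K_n$, the triangle count is the number of edges among its $n-1$ neighbors, so $\triangle(u)=\binom{n-1}{2}$ for every $u$. Since $N^+_\triangle(v)=V$ has $n$ elements, the core sum becomes
\begin{equation*}
  \sum_{u\in N^+_\triangle(v)} \triangle(u) = n\binom{n-1}{2}.
\end{equation*}
For the denominator, the total triangle count of a clique is $\triangle(G)=\binom{n}{3}$, as already noted in the introduction. Substituting these into Equation~\ref{eqn:tricent} with the empty second sum gives
\begin{equation*}
  TC(v) = \frac{\tfrac{1}{3}\,n\binom{n-1}{2}}{\binom{n}{3}},
\end{equation*}
and the final step is the elementary simplification showing this ratio equals $1$, using $\binom{n}{3}=\tfrac{n}{3}\binom{n-1}{2}$.

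I do not anticipate a genuine obstacle here, as the result is essentially an exercise in bookkeeping; the only point requiring a moment of care is recognizing that the $\tfrac{1}{3}$ factor in the numerator is exactly what compensates for the triple-counting of triangles across the $n$ vertices of the core, which is precisely the design intent stated after the definition. An alternative and perhaps cleaner route would be to invoke the algebraic formulation in Equation~\ref{eqn:tricent_algebraic}: in a clique $T$ is a constant times $A$ (each edge lies in $n-2$ triangles), $\check{T}=A$ since every edge is in at least one triangle, and $A\vec{1}=(n-1)\vec{1}$, so the operator $3A-2\check{T}+I=A+I$ acts by a scalar on the vector $T\vec{1}=(n-2)(n-1)\vec{1}$. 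This reduces $\vec{C}$ to a scalar multiple of $\vec{1}$, and normalizing by $\vec{1}^\top T\vec{1}$ forces every entry to be $1$; I would present the combinatorial computation as the main argument and remark on the algebraic check as confirmation.
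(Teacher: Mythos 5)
Your argument is correct and is essentially identical to the paper's own proof: both observe that in a clique every neighbor is a triangle neighbor (so the second sum vanishes and $N^+_\triangle(v)=V$), substitute $\triangle(u)=\binom{n-1}{2}$ and $\triangle(G)=\binom{n}{3}$, and simplify via $\binom{n}{3}=\frac{n}{3}\binom{n-1}{2}$. The supplementary check via the algebraic formulation is a nice confirmation but not part of the paper's proof.
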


\begin{proof}
  Since $G$ is a clique, then every neighbor of a vertex $v$ is a triangle
  neighbor; also $d(v)=n-1$ for each vertex and $\triangle(G)$ equates to
  $\binom{n}{3}$. By Definition~\ref{def:tricent} this leads to,

  \begin{align*}
    \tc(v) &= \frac{\frac{1}{3}\sum_{u\in N_\triangle[v]} \triangle(u)}
    {\triangle(G)}
    = \frac{\frac{1}{3}\sum_{v\in V} \binom{d(v)}{2}}{\binom{n}{3}} \\
    &= \frac{\frac{n}{3}\binom{n-1}{2}}{\binom{n}{3}}
    = \frac{\binom{n}{3}}{\binom{n}{3}} = 1.\qedhere
  \end{align*}
\end{proof}

\begin{observation}
  \label{obs:bridge_clique}
  The triangle centrality for a vertex is $\frac{3}{k}$ if it is not in
  triangles and its neighbors are in copies of $K_k$ containing all triangles in
  $G$, where $k\ge 3$.
\end{observation}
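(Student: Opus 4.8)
The plan is to exploit the two hypotheses---that $v$ lies in no triangle and that every triangle of $G$ sits inside a clique attached at a single neighbor of $v$---to collapse Equation~\ref{eqn:tricent} to a ratio of binomial coefficients in which the degree of $v$ cancels. First I would record the structural consequences of $v$ being triangle-free. Since no neighbor of $v$ shares a common neighbor with $v$, we have $N_\triangle(v)=\emptyset$, hence $N^+_\triangle(v)=\{v\}$ and $\triangle(v)=0$; the first summand $\frac{1}{3}\sum_{u\in N^+_\triangle(v)}\triangle(u)$ therefore vanishes. Moreover $N(v)\setminus N_\triangle(v)=N(v)$, so the numerator of $TC(v)$ reduces cleanly to $\sum_{w\in N(v)}\triangle(w)$.

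Next I would pin down the clique structure. Because $v$ is in no triangle, $v$ cannot be adjacent to two mutually adjacent vertices, so $v$ meets each copy of $K_k$ in exactly one vertex; consequently the cliques are vertex-disjoint and there is precisely one copy per neighbor, giving $d(v)$ copies in all. Each neighbor $w$, viewed as a vertex of its $K_k$, is joined to the other $k-1$ clique vertices, which are themselves mutually adjacent, so $w$ lies in $\binom{k-1}{2}$ triangles; since all triangles of $G$ are confined to these cliques and $v$ contributes none, this accounts for every triangle through $w$, i.e. $\triangle(w)=\binom{k-1}{2}$. Summing over the $d(v)$ neighbors yields a numerator of $d(v)\binom{k-1}{2}$.

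For the denominator I would count $\triangle(G)$ the same way: each $K_k$ contributes $\binom{k}{3}$ triangles, the copies are disjoint, and by hypothesis they contain all triangles of $G$, so $\triangle(G)=d(v)\binom{k}{3}$. Substituting both expressions into Equation~\ref{eqn:tricent} gives
\[
TC(v)=\frac{d(v)\binom{k-1}{2}}{d(v)\binom{k}{3}}=\frac{\binom{k-1}{2}}{\binom{k}{3}},
\]
and expanding the binomials simplifies this to $\tfrac{3}{k}$. The one point worth taking care over---rather than a genuine obstacle---is the structural claim that $v$ meets each clique in a single vertex and that the copies are vertex-disjoint, which is exactly what forces $\triangle(w)=\binom{k-1}{2}$; everything else is routine binomial arithmetic, and the pleasant feature is that $d(v)$ cancels, so the value $\tfrac{3}{k}$ is independent of how many cliques hang off $v$.
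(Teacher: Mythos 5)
Your proof is correct and follows essentially the same route as the paper's: the triangle-core term vanishes because $N^+_\triangle(v)=\{v\}$ with $\triangle(v)=0$, the numerator becomes $d(v)\binom{k-1}{2}$, the denominator is $d(v)\binom{k}{3}$, and the ratio simplifies to $\tfrac{3}{k}$. The extra care you take in justifying that $v$ meets each clique in one vertex is a welcome elaboration of what the paper leaves implicit, but it does not change the argument.
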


\begin{proof}
  Since all triangles are in copies of $K_k$ adjacent to $v$, then
  $\triangle(G)=d(v)\binom{k}{3}$, and the sum of triangle counts for the
  neighbors of $v$ is $d(v)\binom{k-1}{2}$. Therefore, the triangle centrality
  for a vertex $v$ whose neighbors are each in a $K_k$ and $v$ itself has no
  triangles is $\frac{3}{k}$ as follows from Definition~\ref{def:tricent}.

  \begin{align*}
    \tc(v) &= \frac{\sum_{u\in N(v)} \triangle(u)}{\triangle(G)}
    = \frac{\binom{k-1}{2}}{\binom{k}{3}} = \frac{3}{k}.\qedhere
  \end{align*}
\end{proof}

\begin{observation}
  \label{obs:disjoint_clique}
  The triangle centrality for a vertex is $\frac{1}{p}$ if it is in one of $p\ge
  1$ disjoint copies of $K_k$ containing all triangles in $G$, where $k\ge 3$.
\end{observation}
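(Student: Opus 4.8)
The plan is to reuse the template of Observation~\ref{obs:in_clique} and Observation~\ref{obs:bridge_clique}: identify the two neighbor sets appearing in Equation~\ref{eqn:tricent}, evaluate the triangle counts by symmetry within a single clique, and then collapse the numerator with a binomial identity so that the per-clique triangle count cancels against $\triangle(G)$. I take the hypothesis to mean that $G$ is the vertex-disjoint union of $p$ copies of $K_k$, so that $v$ has no neighbors outside its own clique; this is the natural reading that makes the cliques ``contain all triangles in $G$'' and specializes to Observation~\ref{obs:in_clique} at $p=1$.

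First I would record the combinatorial quantities. Each $K_k$ contributes $\binom{k}{3}$ triangles and there are $p$ disjoint copies, so $\triangle(G)=p\binom{k}{3}$. By the symmetry of $K_k$, every vertex $u$ of a clique lies in exactly $\binom{k-1}{2}$ triangles, hence $\triangle(u)=\binom{k-1}{2}$ for every $u$ in $v$'s clique.

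Next I would pin down the neighbor sets. Because the cliques are vertex-disjoint, no triangle can span two of them, so every triangle through $v$ lies entirely in $v$'s own $K_k$; therefore $N_\triangle(v)$ is exactly the set of the other $k-1$ clique vertices and $|N^+_\triangle(v)|=k$. Since $v$ has no neighbors outside its clique, $N(v)\setminus N_\triangle(v)=\emptyset$, so the second sum in Equation~\ref{eqn:tricent} vanishes and only the triangle-core term survives.

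Substituting gives $TC(v)=\tfrac{1}{3}\,k\binom{k-1}{2}\big/\big(p\binom{k}{3}\big)$. The crux, though it is only a one-line identity, is $\tfrac{1}{3}k\binom{k-1}{2}=\binom{k}{3}$: the numerator equals exactly one clique's worth of triangles, which is the same cancellation that justifies the $\tfrac{1}{3}$ factor in Observation~\ref{obs:in_clique}. After it, the ratio is $\binom{k}{3}\big/\big(p\binom{k}{3}\big)=\tfrac{1}{p}$, as claimed. I do not anticipate a genuine obstacle; the only point needing care is the justification that $v$ contributes no non-triangle-neighbor term, which follows immediately from disjointness.
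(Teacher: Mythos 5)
Your proposal is correct and follows essentially the same route as the paper's proof: compute $\triangle(G)=p\binom{k}{3}$ and $\triangle(u)=\binom{k-1}{2}$ by symmetry, note the non-core sum vanishes, and collapse $\frac{k}{3}\binom{k-1}{2}=\binom{k}{3}$ to get $\frac{1}{p}$. You are merely slightly more explicit than the paper about why the second sum in Equation~\ref{eqn:tricent} drops out.
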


\begin{proof}
  Since all triangles are in the $p$ disjoint copies of $K_k$, then
  $\triangle(G)=p\binom{k}{3}$, and any vertex $v$ in a $K_k$ has
  $\triangle(v)=\binom{k-1}{2}$ triangles. By Definition~\ref{def:tricent}, the
  triangle centrality is then,

  \begin{align*}
    \tc(v) &= \frac{\frac{1}{3}\sum_{u\in N_\triangle[v]} \triangle(u)}
    {\triangle(G)}
    = \frac{\frac{k}{3}\binom{k-1}{2}}{\triangle(G)} \\
    &= \frac{\binom{k}{3}}{\triangle(G)}
    = \frac{\binom{k}{3}}{p\binom{k}{3}} = \frac{1}{p}.\qedhere
  \end{align*}
\end{proof}

\begin{observation}
  \label{obs:clique_chain}
  The triangle centrality for a vertex can be one of the following if $G$ is a
  chain of $p\ge 3$ copies of $K_k$, each connected by a single vertex, where
  $k\ge 3$:

  \begin{enumerate}[label={(\arabic*)}]
  \item $\frac{(2k+2)}{pk}$ if joining two internal $K_k$'s
  \item $\frac{(2k+1)}{pk}$ if joining the head or tail $K_k$
  \item $\frac{(k+2)}{pk}$ does not join $K_k$'s and is not in head or tail
    $K_k$
  \item $\frac{(k+1)}{pk}$ does not join $K_k$'s and is in head or tail $K_k$
  \end{enumerate}
\end{observation}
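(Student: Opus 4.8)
The plan is to exploit the block structure of the chain. First I would observe that consecutive copies of $K_k$ share exactly one vertex and no edge, so no triangle can span two cliques: any three mutually adjacent vertices must lie in a common $K_k$ (for a vertex outside a clique to close a triangle it would have to be adjacent to two of its members, forcing it into that clique). Hence the cliques partition the triangles of $G$, giving $\triangle(G)=p\binom{k}{3}$. I would also record the per-vertex counts that drive everything: a vertex lying in a single clique, which I will call a \emph{non-joining} vertex, is in $\binom{k-1}{2}$ triangles, while a \emph{joining} vertex shared by two cliques is in exactly $2\binom{k-1}{2}$ triangles, since its two cliques contribute independently.

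Next I would show the second sum in Equation~\ref{eqn:tricent} vanishes for every vertex. Each vertex $v$ is adjacent only to its clique-mates, and any clique-mate forms a triangle with $v$ through a third vertex of the same $K_k$ (using $k\ge 3$); therefore $N(v)=N_\triangle(v)$ and $N(v)\setminus N_\triangle(v)=\emptyset$. This collapses the centrality to $TC(v)=\tfrac{1}{3}\bigl(\sum_{u\in N^+_\triangle(v)}\triangle(u)\bigr)/\triangle(G)$ for all four vertex types, so the entire calculation reduces to summing triangle counts over the closed triangle core $N^+_\triangle(v)$.

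The core of the argument is then a case-by-case tally of how many neighbors of $v$ are joining versus non-joining. The structural facts I would use are that the head clique and the tail clique each contain exactly one joining vertex, whereas each internal clique contains exactly two. Writing each vertex's contribution as a multiple of $\binom{k-1}{2}$ (coefficient $1$ for a non-joining vertex, $2$ for a joining vertex), the four sums become: case (iv), a non-joining vertex in the head/tail clique, contributes $1+(k-2)+2=k+1$; case (iii), a non-joining vertex in an internal clique, contributes $1+(k-3)+4=k+2$; case (ii), a joining vertex touching the head/tail clique, contributes $2+(k-1)+k=2k+1$; and case (i), a joining vertex between two internal cliques, contributes $2+k+k=2k+2$. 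In each coefficient the leading term accounts for $v$ itself, and the remaining terms split its neighbors by type.

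Finally I would substitute $\triangle(G)=p\binom{k}{3}$ and use the identity $\binom{k}{3}=\tfrac{k}{3}\binom{k-1}{2}$, so that both the $\tfrac{1}{3}$ factor and the common $\binom{k-1}{2}$ cancel, leaving exactly $\tfrac{2k+2}{pk}$, $\tfrac{2k+1}{pk}$, $\tfrac{k+2}{pk}$, and $\tfrac{k+1}{pk}$ for cases (i)--(iv) respectively. The main obstacle is purely the bookkeeping: ensuring $v$'s own triangle count is included with the correct weight and that the joining-neighbor count correctly distinguishes head/tail cliques (one joining vertex) from internal cliques (two joining vertices), which is precisely what separates the four cases.
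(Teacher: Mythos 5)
Your proposal is correct and follows essentially the same route as the paper's proof: compute $\triangle(G)=p\binom{k}{3}$, assign each vertex a triangle count of $\binom{k-1}{2}$ or $2\binom{k-1}{2}$ according to whether it joins two cliques, tally neighbors by type in each of the four cases, and cancel via $\binom{k}{3}=\tfrac{k}{3}\binom{k-1}{2}$. The only difference is that you explicitly verify the non-triangle-neighbor sum vanishes (i.e., $N(v)\setminus N_\triangle(v)=\emptyset$) and that no triangle spans two cliques, points the paper leaves implicit.
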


\begin{proof}
  We begin with some preliminary facts. Any vertex $v$ in a $K_k$ that does not
  join another $K_k$ has degree $d(v)=k-1$ and $\triangle(v)=\binom{k-1}{2}$
  triangles. Any vertex $v$ that joins two $K_k$'s has degree $d(v)=2(k-1)$ and
  $\triangle(v)=2\binom{k-1}{2}$. Since all triangles are in the $K_k$'s, then
  $\triangle(G) = p\binom{k}{3}$. We will use the next relation to prove each of
  the cases.

  \begin{displaymath}
    \frac{1}{3\triangle(G)}\binom{k-1}{2} =
    \frac{\binom{k-1}{2}}{3p\binom{k}{3}} = \frac{1}{pk}.
  \end{displaymath}

  In case (1), a vertex $v$ joins two internal $K_k$'s and since $G$ is a chain
  of at least four $K_k$'s, then two neighbors also join $K_k$ and these will
  have the same number of triangles as $v$. There are $d(v)-2=2(k-2)$ remaining
  neighbors. The triangle centrality for $v$ is,

  \begin{align*}
    \tc(v) &= \frac{1}{3\triangle(G)}\binom{k-1}{2}\Bigl(2+4+2(k-2)\Bigr) \\
    &= \frac{(2k+2)}{pk}.
  \end{align*}

  In case (2), a vertex $v$ joins a head or tail $K_k$ to the chain. It has one
  neighbor that also joins $K_k$'s and $d(v)-1=2k-3$ remaining neighbors. The
  triangle centrality for $v$ is,

  \begin{align*}
    \tc(v) &= \frac{1}{3\triangle(G)}\binom{k-1}{2}\Bigl(2+2+2k-3\Bigr) \\
    &= \frac{(2k+1)}{pk}.
  \end{align*}

  In case (3), a vertex $v$ does not join any $K_k$ and is not in the head or
  tail $K_k$. Two of its neighbors join $K_k$'s so there are $d(v)-2=k-3$
  remaining neighbors. The triangle centrality for $v$ is,

  \begin{align*}
    \tc(v) &= \frac{1}{3\triangle(G)}\binom{k-1}{2}\Bigl(1+4+k-3\Bigr) \\
    &= \frac{(k+2)}{pk}.
  \end{align*}

  In case (4), a vertex $v$ does not join any $K_k$ but is in the head or tail
  $K_k$. One neighbor joins $K_k$'s. The triangle centrality for $v$ is,

  \begin{align*}
    \tc(v) &= \frac{1}{3\triangle(G)}\binom{k-1}{2}\Bigl(1+2+k-2\Bigr) \\
    &= \frac{(k+1)}{pk}.\qedhere
  \end{align*}
\end{proof}

The next observation follows immediately from
Observation~\ref{obs:clique_chain}.

\begin{observation}
  \label{obs:clique_ring}
  The triangle centrality for a vertex is $\frac{(2k+2)}{pk}$ if it joins copies
  of $K_k$, otherwise it is $\frac{(k+2)}{pk}$, where $G$ is a ring of $p\ge 3$
  copies of $K_k$, each connected by a single vertex, where $k\ge 3$.
\end{observation}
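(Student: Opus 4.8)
The plan is to reduce Observation~\ref{obs:clique_ring} directly to the already-proven Observation~\ref{obs:clique_chain}, since a ring of cliques differs from a chain only in its boundary structure. First I would observe that in a ring of $p\ge 3$ copies of $K_k$ joined pairwise at single vertices, there are no head or tail cliques: the cyclic arrangement makes every $K_k$ ``internal'' in the sense that both of its joining vertices connect to other cliques. This immediately eliminates cases ii) and iv) of Observation~\ref{obs:clique_chain}, which were precisely the cases accounting for the head/tail irregularity. What remains are exactly the two behaviors exhibited by interior vertices of the chain, governed by whether a vertex joins two cliques or lies strictly inside a single clique.

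The key computation to transfer is the total triangle count. In a ring, each of the $p$ copies of $K_k$ still contributes $\binom{k}{3}$ triangles, and since the cliques share only single vertices (not edges), no triangle is counted twice; hence $\triangle(G)=p\binom{k}{3}$, identical to the chain case. This means the governing relation $\frac{1}{3\triangle(G)}\binom{k-1}{2}=\frac{1}{pk}$ from the proof of Observation~\ref{obs:clique_chain} carries over verbatim. I would then invoke the two surviving case analyses directly: a joining vertex $v$ has $d(v)=2(k-1)$ with two neighbors that are themselves joining vertices, reproducing the case i) tally $2+4+2(k-2)$ and yielding $TC(v)=\frac{(2k+2)}{pk}$; a non-joining vertex inside a clique has $d(v)=k-1$ with two of its neighbors being joining vertices, reproducing the case iii) tally $1+4+k-3$ and yielding $TC(v)=\frac{(k+2)}{pk}$.

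The one point requiring genuine care is verifying that the local neighborhood structure used in cases i) and iii) of the chain proof is truly preserved in the ring, including the diameter-four contribution. The triangle centrality of $v$ depends on the triangle counts of $v$, its triangle neighbors, and its non-triangle neighbors, so I must confirm that for each vertex the multiset of relevant triangle counts matches the internal-vertex case of the chain. For a joining vertex this requires that both adjacent joining vertices (one in each of the two cliques $v$ connects) have the same triangle count $2\binom{k-1}{2}$ as $v$ itself, which holds because in a ring every joining vertex sits in exactly two cliques. For a non-joining vertex, the two joining vertices among its neighbors contribute the doubled count while the remaining $k-3$ neighbors contribute the single count, exactly as in case iii). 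Since $p\ge 3$ guarantees the ring does not degenerate and keeps the two cliques a joining vertex connects distinct, these structural facts hold uniformly.

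The main obstacle I anticipate is purely bookkeeping rather than conceptual: ensuring the condition $p\ge 3$ genuinely suffices. For $p=3$ one must check that the three cliques form a proper ring with three joining vertices and that no unexpected coincidences (such as two joining vertices being adjacent or a clique being traversed by a short cycle through the ring) introduce extra triangles. Confirming $\triangle(G)=p\binom{k}{3}$ with no cross-clique triangles is the crux; once that is settled, the result follows as an immediate corollary of Observation~\ref{obs:clique_chain} by discarding the head/tail cases, which is precisely why the statement is framed as following immediately.
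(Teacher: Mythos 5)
Your proposal is correct and matches the paper's intent exactly: the paper offers no separate proof, stating only that the observation follows immediately from Observation~\ref{obs:clique_chain}, and your reduction (discarding the head/tail cases ii and iv, verifying $\triangle(G)=p\binom{k}{3}$ and the local neighborhood structure so that cases i and iii carry over) is precisely the argument being invoked. You have simply made explicit the bookkeeping the paper leaves implicit.
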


Let us consider the example in Figure~\ref{fig:tc_ex_2} to further demonstrate
the intuition behind Definition~\ref{def:tricent}. There are four 6-cliques
connected to vertex ``a.'' The local triangle count for ``a'' is zero so it
cannot contribute to the centrality of its neighbors, and thus, every vertex in
the 6-cliques must have the same triangle centrality value. Each 6-clique
contains $\binom{6}{3}=20$ triangles leading to $\triangle(G)=80$.

Although each clique vertex in Figure~\ref{fig:tc_ex_2} is locally central
within its clique, there are four such cliques so the overall importance of any
one of the clique vertices should be one-fourth that of a vertex that centered
all triangles. This aligns with Observation~\ref{obs:disjoint_clique} where each
clique vertex has a triangle centrality of $\frac{1}{4}=0.25$.

Now observe that each 6-clique vertex has a local triangle count of
$\binom{5}{2}=10$, which is one-eighth of the triangle count. Although vertex
``a'' is not in any triangles, its four neighbors account for one-half of the
total number of triangles. Thus, we say that vertex ``a'' is the center of half
the triangles in the graph. It follows from Observation~\ref{obs:bridge_clique}
that ``a'' has triangle centrality $\frac{3}{6}=0.5$.

There are some cases where the rank of triangle vertices and their non-triangle
neighbors are the same. Imagine the simple case where the graph has one triangle
and each vertex of that triangle also has neighbors not in the triangle. Then,
we make the following observation.

\begin{observation}
  \label{obs:single}
  The triangle centrality is $1$ for a triangle vertex and its non-triangle
  neighbors if $\triangle(G)=1$.
\end{observation}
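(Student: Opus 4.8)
The plan is to apply Equation~\ref{eqn:tricent} directly, exploiting the fact that the stated setting forces $G$ to contain exactly one triangle. Let that triangle have vertices $\{x,y,z\}$, so $\triangle(G)=1$, each of $x,y,z$ has triangle count $1$, and every other vertex has triangle count $0$. First I would handle a triangle vertex, say $x$. Its only triangle is $\{x,y,z\}$, hence $N_\triangle(x)=\{y,z\}$ and the core is $N^+_\triangle(x)=\{x,y,z\}$, making the core triangle sum $\triangle(x)+\triangle(y)+\triangle(z)=3$. Every non-triangle neighbor of $x$ has triangle count $0$, so the second summation in Equation~\ref{eqn:tricent} vanishes, giving $TC(x)=\frac{\frac{1}{3}(3)}{1}=1$; the identical computation applies to $y$ and $z$.

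Next I would compute the centrality of a non-triangle neighbor $w$ of a triangle vertex. Since $w$ lies in no triangle, $N_\triangle(w)=\emptyset$ and $N^+_\triangle(w)=\{w\}$, so the core sum collapses to $\triangle(w)=0$. The centrality then reduces to $\frac{\sum_{w'\in N(w)}\triangle(w')}{\triangle(G)}=\sum_{w'\in N(w)}\triangle(w')$. The remaining task is to show this sum equals exactly $1$, i.e.\ that $w$ is adjacent to precisely one of the three triangle vertices.

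Here is the one step requiring care; the surrounding arithmetic is trivial. The only vertices in $G$ with nonzero triangle count are $x,y,z$, so $\sum_{w'\in N(w)}\triangle(w')$ simply counts how many of $\{x,y,z\}$ are neighbors of $w$. Because $w$ is a neighbor of a triangle vertex, it is adjacent to at least one of them; and since $x,y,z$ are mutually adjacent, if $w$ were adjacent to two of them it would complete a second triangle, contradicting the single-triangle hypothesis. Hence exactly one of $x,y,z$ neighbors $w$, the sum equals $1$, and $TC(w)=\frac{0+1}{1}=1$. Combining the two cases shows that every triangle vertex and every one of its neighbors has triangle centrality $1$, as claimed. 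I expect the adjacency argument ruling out $w$ touching two triangle vertices to be the only nontrivial point.
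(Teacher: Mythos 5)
Your proof is correct. The paper itself offers no argument for this observation, dismissing it with ``The proof of Observation~\ref{obs:single} is obvious,'' so there is nothing to compare against; your write-up simply supplies the details the author omitted. Your case analysis is the natural one: the triangle vertices get $TC=1$ from the core sum $\frac{1}{3}(1+1+1)$ over $N^+_\triangle$, and for a non-triangle neighbor $w$ the entire value comes from the second summation in Equation~\ref{eqn:tricent}. You correctly isolate the one point that actually needs justification, namely that $w$ is adjacent to \emph{exactly} one of the three triangle vertices: adjacency to two of them would, by their mutual adjacency, put $w$ in a second triangle and contradict the hypothesis. One minor caveat: the hypothesis as literally stated (``the neighbors are not in other triangles'') does not by itself rule out additional triangles in a distant part of $G$, which would make $\triangle(G)>1$ and drop the value below $1$; your reading that $G$ contains exactly one triangle is the one intended by the paper (the surrounding text says ``the graph has one triangle''), but it is worth flagging that this is an interpretive choice rather than a forced consequence of the statement.
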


The proof of Observation~\ref{obs:single} is obvious. In this case, the triangle
centrality does not discriminate between triangle vertices and their neighbors,
which may be counter-intuitive. Although it can be argued that the triangle
vertices are more important, our goal was to also give importance to vertices
that may not be in triangles. The side effect of this is elucidated by
Observation~\ref{obs:single}.

\section{\label{sec:related}Related Work}
There are many different graph centrality measures, but we will limit our
discussion to closeness~\cite{bib:bavelas1950},
degree~\cite{bib:shaw1954,bib:nieminen1974},
eigenvector~\cite{bib:bonacich1972}, betweenness~\cite{bib:freeman1977}, and
PageRank~\cite{bib:pagerank1998} because these are well-known and are available
in many graph libraries including the Matlab graph toolkit. We refer the reader
to~\cite{bib:rodrigues2018, bib:das2018, bib:networks2010,
  bib:borgatti_everett2006, bib:borgatti2005} for more detail on these common
centrality measures. For convenience, we give a short review of these measures
in Appendix~\ref{sec:review}.

We remark that there is no ``best'' centrality measure. We also stress that the
centrality measures we discuss derive importance based on the network structure,
which may not align with importance due to actual roles and functions of the
real-world entities represented in the graph. The degree of success with
centrality measures therefore relies on how closely the structural network
aligns with the semantic or functional network. Here, we briefly describe and
compare the common centrality measures to triangle centrality in an effort to
show that for some contexts, triangle centrality may be more
appropriate. Moreover, triangle centrality is asymptotically faster than these
other measures with the exception of degree centrality. Recall that
$\bar\delta\ll \sqrt{m}$ in practice so triangle centrality is nearly linear in
runtime on sparse graphs where $m=O(n)$. A summary of the worst-case runtime
(work) for these centralities is given Table~\ref{tbl:work}. Later in
Section~\ref{sec:compare}, we give a more exhaustive comparison of these
centralities on real-world graphs.

\begin{table}[H]
  \caption{\label{tbl:work}Runtime Asymptotic Upper-Bounds}
  \centering
  \begin{tabular}{ll}
    \toprule
    Centrality & Work \\
    \midrule
    Degree & $O(m)$ \\
    Triangle & $O(m\bar\delta)$ \\
    Betweenness & $O(mn)$ \\
    Closeness & $O(mn)$ \\
    PageRank & $O(n^3)$ \\
    Eigenvector & $O(n^3)$ \\
    \bottomrule
  \end{tabular}
\end{table}

Closeness centrality was introduced in 1950 by
Bavelas~\cite{bib:bavelas1950}. It relies on distances and is therefore useful
in gauging longer-range interactions. But a vertex with high degree such as a
star-like subgraph can exert undue influence in this centrality
measure. Closeness centrality also requires finding all-pairs shortest paths and
hence takes $O(mn)$ time, which is much slower than computing triangle
centrality.

Degree centrality was proposed in 1954 by Shaw~\cite{bib:shaw1954} and refined
later in 1974 by Nieminen~\cite{bib:nieminen1974}. It depends only on the
degrees of vertices. Consequently, an e-mail spammer could rank highly in degree
centrality. In contrast, a low-degree vertex can be important in triangle
centrality because its triangle count can be far higher than its degree. The
maximum number of triangles for a vertex is quadratic in its degree, given by
$\binom{d(v)}{2}$. This is compounded if the neighbors of a low-degree vertex
also have close to their maximum triangle count. Such a vertex would be
considered unimportant in degree centrality, which may be misleading in some
contexts as we have argued. For example, vertex ``a'' in
Figure~\ref{fig:tc_ex_2} is ranked last in degree centrality. Then, in
Figure~\ref{fig:tc_ex_1}, all vertices are equally ranked because of uniform
degree, and therefore, no distinction is made among them. Yet clearly the
structure of the network in Figure~\ref{fig:tc_ex_1} is not uniform, and it can
be argued a distinction can be made between the vertices. But degree centrality
is fast to compute, taking only $O(m)$ time. Triangle centrality takes
$O(m\bar\delta)$ time, which in practice is nearly linear-time.

Eigenvector centrality was formalized by Bonacich in
1972~\cite{bib:bonacich1972}. It can be seen as the weighted sum of connections
from any distance in the graph. The central vertex ``a'' in
Figure~\ref{fig:tc_ex_1} is ranked the highest by eigenvector centrality. But
the disadvantage of eigenvector centrality is it is influenced by
degree. Low-degree vertices that act as bridges connecting dense subgraphs would
be ranked poorly by eigenvector centrality, such as vertex ``a'' in
Figure~\ref{fig:tc_ex_2}, but clearly such vertices are important because their
removal would disconnect the subgraphs. The vertex ``a'' in
Figure~\ref{fig:tc_ex_2} is ranked last in eigenvector centrality. It is also
relatively expensive to compute eigenvectors, especially for very large graphs,
making eigenvector centrality less scalable than our triangle
centrality. Eigenvector centrality can be computed in $O(n^3)$ time using the
power iteration method~\cite{bib:golub_vanloan2013}.

Betweenness centrality, introduced in 1977 by Freeman~\cite{bib:freeman1977}, is
similar to closeness centrality in both ranking and runtime. Betweenness
centrality directly accounts for longer range interactions, explicitly relying
on the number of paths through a vertex. Like triangle centrality, it identifies
vertex ``a'' in Figure~\ref{fig:tc_ex_1} and~\ref{fig:tc_ex_2} as being the most
important. But this model of importance relies heavily on distances and does not
capture local subnetwork characteristics. A vertex with many leaf nodes can rank
higher in betweenness centrality than a vertex whose neighbors are
interconnected because the shortest paths between the leaf nodes must go through
their common neighbor. But it can be argued that the vertex whose neighbors are
interconnected is more important. For example, betweenness centrality ranks the
vertex with the most leaf nodes in Figure~\ref{fig:tc_ex_3}
and~\ref{fig:tc_ex_4} as the most important, rather than vertex ``a.'' It is
also more expensive to compute betweenness centrality than triangle centrality
because it requires finding all-pairs shortest paths, taking $O(mn)$
time~\cite{bib:brandes2001}.

PageRank centrality was published in 1998 by Brin and Page as the underlying
technology for the Google search engine~\cite{bib:pagerank1998}. The PageRank
centrality is a variant of eigenvector centrality and therefore has similar
advantages and disadvantages. It is possible that a vertex with many low-quality
connections may still be considered high-ranking in PageRank by nature of having
high degree, making it susceptible to spurious results or gaming. PageRank does
not identify vertex ``a'' as the most central vertex in any of the examples in
Figure~\ref{fig:tc_ex}. Computing the PageRank takes the same time as
eigenvector centrality and hence is slower than computing triangle centrality.

A comparison of the relative ranking of vertex ``a'' in each of the graphs in
Figure~\ref{fig:tc_ex} by the aforementioned centrality measures is given in
Table~\ref{tbl:ex_compare}. We used the graph toolkit in Matlab to compute
rankings from these centrality measures. Only triangle centrality ranks ``a''
first in all Figure~\ref{fig:tc_ex} graphs. It is followed by closeness
centrality which ranks ``a'' first in all but Figure~\ref{fig:tc_ex_4}. The
PageRank centrality does not rank ``a'' first in any of the graphs. The graphs
in Figure~\ref{fig:tc_ex} are idealized but give some insight on the role of
triangles and degree in measuring importance. It is clear that degree plays a
much less significant role for triangle centrality than it does for the other
measures. We consider this is an advantage for triangle centrality because it is
harder to inflate rankings. In Section~\ref{sec:compare}, we compare these
measures on more realistic graphs.

\begin{table}[H]
  \caption{\label{tbl:ex_compare}Rank-Order Comparison for Vertex ``a'' in
    Figure~\ref{fig:tc_ex_1}-\ref{fig:tc_ex_4}}
  \centering
  \begin{tabular}{l *6{S[table-format=2]}}
    \toprule
    {}
    & {Triangle}
    & {Betweenness}
    & {Closeness}
    & {Degree}
    & {Eignevector}
    & {PageRank} \\
    \midrule
    Figure~\ref{fig:tc_ex_1} & 1 & 1 & 1 & 1 & 1 & 7 \\
    Figure~\ref{fig:tc_ex_2} & 1 & 1 & 1 & 25 & 25 & 25 \\
    Figure~\ref{fig:tc_ex_3} & 1 & 2 & 1 & 2 & 2 & 2 \\
    Figure~\ref{fig:tc_ex_4} & 1 & 2 & 2 & 2 & 1 & 2 \\
    \bottomrule
  \end{tabular}
\end{table}

\section{\label{sec:compare}Comparison}
This section compares triangle centrality to the five other centrality measures
discussed in this article using more realistic graphs than those in
Figure~\ref{fig:tc_ex}. Table~\ref{tbl:graphdata} lists the 20 real-world graphs
used in the comparisons. These graphs represent a broad variety of networks to
demonstrate the versatility of triangle centrality. We used Matlab to compute
betweenness, closeness, degree, eigenvector, and PageRank centralities. Prior to
computing the centrality measures, we ensured the graphs were symmetrized,
weights and loops were removed, and vertices were labeled from $1..n$ without
gaps, hence obtaining simple, undirected, and unweighted graphs. There was a
plurality of agreement between triangle centrality and the other measures in
many of these graphs, thus supporting the efficacy of triangle centrality. An
analysis of the results will follow, but first we illustrate four of the smaller
networks to aid in visual comparisons of the centralities. If names of nodes in
these four networks were known, we assigned numeric vertex labels corresponding
to the lexicographic ordering on names. In these next figures, the highest
ranked vertices are indicated by the centrality measures that ranked them.

\begin{table}[H]
  \caption{\label{tbl:graphdata} Test Graphs}
  \sisetup{input-ignore={,},group-separator={,},group-minimum-digits=3}
  \centering
  \footnotesize
  \begin{tabular}{S[table-format=2] l *2{S[table-format=7]} S[table-format=8] r}
  \toprule
  {No.}
  & \multicolumn{1}{c}{Graph}
  & {n (vertices)}
  & {m (edges)}
  & {$\triangle(G)$ (triangles)}
  & \multicolumn{1}{c}{Ref.} \\
  \midrule
  1 & Borgatti 2006 Figure 3
  & 19 & 32 & 13 & \cite{bib:borgatti2006} \\
  2 & Zachary's karate club
  & 34 & 78 & 45 & \cite{bib:zachary1977} \\
  3 & Lusseau's Dolphin network
  & 62 & 159 & 95 & \cite{bib:lusseau2003} \\
  4 & Krebs' 9/11 hijackers network
  & 62 & 153 & 133 & \cite{bib:krebs2002b} \\
  5 & Knuth's Les miserables network
  & 77 & 254 & 467 & \cite{bib:knuth_graphbase} \\
  6 & Krebs' US Political Books network
  & 105 & 441 & 560 & \cite{bib:krebs_polybooks} \\
  7 & Newman's David Copperfield word adjacencies
  & 112 & 425 & 284 & \cite{bib:newman2006} \\
  8 & Girvan-Newman Division IA College Football network
  & 115 & 613 & 810 & \cite{bib:girvan_newman2002} \\
  9 & Watts-Strogatz C. elegans neural network
  & 297 & 2148 & 3241 & \cite{bib:watts_strogatz1998} \\
  10 & Adamic-Glance 2004 US Election Political Blogosphere
  & 1224 & 16,715 & 101,042 & \cite{bib:adamic_glance2005} \\
  11 & Newman's Netscience Co-Authorship network
  & 1461 & 2742 & 3764 & \cite{bib:newman2006} \\
  12 & Watts-Strogatz Western States Power Grid
  & 4941 & 6594 & 651 & \cite{bib:watts_strogatz1998} \\
  13 & SNAP ca-HepTh & 9875 & 25,973 & 28,339 & \cite{bib:snapnets} \\
  14 & SNAP ca-AstroPh & 18,771 & 198,050 & 1,351,441 & \cite{bib:snapnets} \\
  15 & SNAP e-mail-Enron & 36,692 & 183,831 & 727,044 & \cite{bib:snapnets} \\
  16 & Newman's Condensed Matter Physics network
  & 39,577 & 175,692 & 378,063 & \cite{bib:newman2001} \\
  17 & SNAP web-Stanford & 281,903 & 1,992,636 & 11,329,473 & \cite{bib:snapnets} \\
  18 & SNAP com-DBLP & 317,080 & 1,049,866 & 2,224,385 & \cite{bib:snapnets} \\
  19 & SNAP com-Amazon & 334,863 & 925,872 & 667,129 & \cite{bib:snapnets} \\
  20 & SNAP roadNet-PA & 1,088,092 & 1,541,898 & 67,150 & \cite{bib:snapnets} \\
  \bottomrule
  \end{tabular}
\end{table}

Figure~\ref{fig:borgatti} depicts a network that appeared in Borgatti's 2006
article~\cite[Figure~3]{bib:borgatti2006}. This network is interesting because 4
of the 19 vertices, more than 20\%, are considered central according to the
centrality measures described in this article. There appear to be two clusters
that can be disconnected if the vertex ranked highest by betweenness or
closeness is removed. But there is a greater agreement among the centrality
measures on the vertex ranked highest by triangle centrality.

\begin{figure}[H]
\centering
  \begin{tikzpicture}
    [scale=0.5, inner sep=1.5, minimum size=14,
    vertex/.style={circle,draw=black,thin},
    edge/.style={thin}]

    \node [vertex] (a) {a};
    \node [vertex] (b) at ([shift={(70:2.85)}]a) {b};
    \node [vertex, fill=gray!30] (d) at ([shift={(25:1.5)}]a) {d};
    \node [vertex] (f) at ([shift={(335:2.5)}]a) {f};
    \node [vertex] (e) at ([shift={(15:1.5)}]b) {e};
    \node [vertex] (c) at ([shift={(275:3.15)}]d) {c};
    \node [vertex] (g) at ([shift={(5:1.75)}]d) {g};
    \node [vertex, fill=gray!30] (h) at ([shift={(5:2.15)}]c) {h};
    \node [vertex, fill=gray!30] (i) at ([shift={(350:2.45)}]h) {i};
    \node [vertex, fill=gray!30] (j) at ([shift={(358:2.5)}]i) {j};
    \node [vertex] (p) at ([shift={(320:2.25)}]i) {p};
    \node [vertex] (n) at ([shift={(290:3.00)}]i) {n};
    \node [vertex] (k) at ([shift={(310:2.25)}]j) {k};
    \node [vertex] (l) at ([shift={(10:2.15)}]j) {l};
    \node [vertex] (m) at ([shift={(60:3.05)}]j) {m};
    \node [vertex] (o) at ([shift={(275:3.75)}]j) {o};
    \node [vertex] (q) at ([shift={(45:2.5)}]m) {q};
    \node [vertex] (s) at ([shift={(30:2.15)}]q) {s};
    \node [vertex] (r) at ([shift={(260:3.5)}]h) {r};

    \draw [edge] (a) -- (b);
    \draw [edge] (a) -- (d);
    \draw [edge] (a) -- (f);
    \draw [edge] (b) -- (e);
    \draw [edge] (b) -- (g);
    \draw [edge] (b) -- (d);
    \draw [edge] (c) -- (h);
    \draw [edge] (d) -- (c);
    \draw [edge] (d) -- (e);
    \draw [edge] (d) -- (g);
    \draw [edge] (d) -- (f);
    \draw [edge] (e) -- (g);
    \draw [edge] (f) -- (g);
    \draw [edge] (f) -- (h);
    \draw [edge] (g) -- (h);
    \draw [edge] (h) -- (i);
    \draw [edge] (h) -- (r);
    \draw [edge] (i) -- (j);
    \draw [edge] (i) -- (p);
    \draw [edge] (i) -- (n);
    \draw [edge] (j) -- (k);
    \draw [edge] (j) -- (l);
    \draw [edge] (j) -- (m);
    \draw [edge] (j) -- (o);
    \draw [edge] (j) -- (p);
    \draw [edge] (k) -- (p);
    \draw [edge] (k) -- (l);
    \draw [edge] (l) -- (m);
    \draw [edge] (p) -- (n);
    \draw [edge] (n) -- (o);
    \draw [edge] (m) -- (q);
    \draw [edge] (q) -- (s);

    \matrix (m) [
      right=2cm of m,
      matrix of nodes,
      column sep=2pt,
      nodes={inner sep=1, minimum size=0}
    ] {
      \node[vertex, fill=gray!30, inner sep=.75, minimum size=15] {d};
      & \node[vertex, fill=gray!30, inner sep=.75, minimum size=15] {h};
      & \node[vertex, fill=gray!30, inner sep=.75, minimum size=15] {i};
      & \node[vertex, fill=gray!30, inner sep=.75, minimum size=15] {j}; \\
      \midrule
      \node {\tc}; & \node {\bc}; & \node {\cc}; & \node {\dc}; \\
      \node {\dc}; & {} & {} & \node {\pr}; \\
      \node {\ev}; & {} & {} & {} \\
    };
  \end{tikzpicture}
\caption{\label{fig:borgatti}Borgatti 2006~\cite[Fig.~3]{bib:borgatti2006}.}
\end{figure}

Figure~\ref{fig:karate} depicts a similar illustration of Zachary's karate club
social network~\cite{bib:zachary1977} that appears
in~\cite[Figure~2]{bib:whang2015}. This is a well-studied network and serves as
a benchmark for clustering and community
detection~\cite{bib:girvan_newman2002}. The network was constructed by
Zachary~\cite{bib:zachary1977} after observing 34 members in a karate club from
1970 to 1972. Following a dispute between the instructor (vertex~$1$) and
administrator (vertex~$34$), the karate club network split into two respective
communities~\cite{bib:girvan_newman2002}. The instructor and administrator were
ranked highest by all the centrality measures except for triangle
centrality. The triangle centrality is alone in ranking vertex~$14$ as the most
central. Vertex degree plays a significant role in this network. The instructor
and administrator have the two highest degrees in the graph, respectively, 16
and 17. In contrast, vertex~$14$ has degree 5. We remind the reader that the
rankings rely on the network structure. While the functional roles of the karate
club's two main adversaries imply their influence, it may be that vertex~$14$ is
more important from a structural standpoint. In addition to vertex~$14$ being
central with respect to triangles, it also connects the two communities and
appears in the overlap between them~\cite{bib:zarei2009, bib:whang2015,
  bib:lierde2020}.

\begin{figure}[H]
  \centering
  \begin{tikzpicture}
    [scale=0.7, inner sep=.75, minimum size=14,
    vertex/.style={circle,draw=black,thin},
    edge/.style={thin}]

    \node [vertex,fill=gray!30] (14) {14};
    \node [vertex,fill=gray!30] (1) at ([shift={(345:4)}]14) {1};
    \node [vertex,fill=gray!30] (34) at ([shift={(180:4)}]14) {34};

    \node [vertex] (2) at ([shift={(245:2.5)}]1) {2};
    \node [vertex] (3) at ([shift={(195:4)}]1) {3};
    \node [vertex] (4) at ([shift={(110:2.5)}]1) {4};
    \node [vertex] (5) at ([shift={(310:2.5)}]1) {5};
    \node [vertex] (6) at ([shift={(15:2.75)}]1) {6};
    \node [vertex] (7) at ([shift={(345:2.75)}]1) {7};
    \node [vertex] (8) at ([shift={(230:3.5)}]1) {8};
    \node [vertex] (9) at ([shift={(185:5)}]1) {9};
    \node [vertex] (11) at ([shift={(50:2.5)}]1) {11};
    \node [vertex] (12) at ([shift={(0:2.15)}]1) {12};
    \node [vertex] (13) at ([shift={(80:2)}]1) {13};
    \node [vertex] (18) at ([shift={(270:1.5)}]1) {18};
    \node [vertex] (20) at ([shift={(150:4.75)}]1) {20};
    \node [vertex] (22) at ([shift={(290:2.25)}]1) {22};

    \node [vertex] (10) at ([shift={(10:3)}]34) {10};
    \node [vertex] (15) at ([shift={(120:2)}]34) {15};
    \node [vertex] (16) at ([shift={(160:2.15)}]34) {16};
    \node [vertex] (19) at ([shift={(205:4.25)}]34) {19};
    \node [vertex] (21) at ([shift={(190:4)}]34) {21};
    \node [vertex] (23) at ([shift={(60:2.25)}]34) {23};
    \node [vertex] (24) at ([shift={(220:4)}]34) {24};
    \node [vertex] (27) at ([shift={(140:2.5)}]34) {27};
    \node [vertex] (28) at ([shift={(250:2.75)}]34) {28};
    \node [vertex] (29) at ([shift={(315:4)}]34) {29};
    \node [vertex] (30) at ([shift={(175:3.25)}]34) {30};
    \node [vertex] (31) at ([shift={(350:2.5)}]34) {31};
    \node [vertex] (32) at ([shift={(295:4.25)}]34) {32};
    \node [vertex] (33) at ([shift={(270:1.75)}]34) {33};

    \node [vertex] (17) at ([shift={(320:1)}]6) {17};
    \node [vertex] (25) at ([shift={(340:1.25)}]28) {25};
    \node [vertex] (26) at ([shift={(320:1.5)}]24) {26};

    \draw [edge] (14) -- (1);
    \draw [edge] (14) -- (2);
    \draw [edge] (14) -- (3);
    \draw [edge] (14) -- (4);
    \draw [edge] (14) -- (34);

    \draw [edge] (1) -- (2);
    \draw [edge] (1) -- (3);
    \draw [edge] (1) -- (4);
    \draw [edge] (1) -- (5);
    \draw [edge] (1) -- (6);
    \draw [edge] (1) -- (7);
    \draw [edge] (1) -- (8);
    \draw [edge] (1) -- (9);
    \draw [edge] (1) -- (11);
    \draw [edge] (1) -- (12);
    \draw [edge] (1) -- (13);
    \draw [edge] (1) -- (18);
    \draw [edge] (1) -- (20);
    \draw [edge] (1) -- (22);
    \draw [edge] (1) -- (32);

    \draw [edge] (34) -- (9);
    \draw [edge] (34) -- (10);
    \draw [edge] (34) -- (15);
    \draw [edge] (34) -- (16);
    \draw [edge] (34) -- (19);
    \draw [edge] (34) -- (20);
    \draw [edge] (34) -- (21);
    \draw [edge] (34) -- (23);
    \draw [edge] (34) -- (24);
    \draw [edge] (34) -- (27);
    \draw [edge] (34) -- (28);
    \draw [edge] (34) -- (29);
    \draw [edge] (34) -- (30);
    \draw [edge] (34) -- (31);
    \draw [edge] (34) -- (32);
    \draw [edge] (34) -- (33);

    \draw [edge] (2) -- (3);
    \draw [edge] (2) -- (4);
    \draw [edge] (2) -- (8);
    \draw [edge] (2) -- (18);
    \draw [edge] (2) -- (20);
    \draw [edge] (2) -- (22);
    \draw [edge] (2) -- (31);
    \draw [edge] (3) -- (4);
    \draw [edge] (3) -- (8);
    \draw [edge] (3) -- (9);
    \draw [edge] (3) -- (10);
    \draw [edge] (3) -- (28);
    \draw [edge] (3) -- (29);
    \draw [edge] (3) -- (33);
    \draw [edge] (4) -- (8);
    \draw [edge] (4) -- (13);
    \draw [edge] (5) -- (7);
    \draw [edge] (5) -- (11);
    \draw [edge] (6) -- (7);
    \draw [edge] (6) -- (11);
    \draw [edge] (6) -- (17);
    \draw [edge] (7) -- (17);
    \draw [edge] (9) -- (31);
    \draw [edge] (15) -- (33);
    \draw [edge] (16) -- (33);
    \draw [edge] (19) -- (33);
    \draw [edge] (21) -- (33);
    \draw [edge] (23) -- (33);
    \draw [edge] (24) -- (26);
    \draw [edge] (24) -- (28);
    \draw [edge] (24) -- (30);
    \draw [edge] (24) -- (33);
    \draw [edge] (25) -- (26);
    \draw [edge] (25) -- (28);
    \draw [edge] (25) -- (32);
    \draw [edge] (26) -- (32);
    \draw [edge] (27) -- (30);
    \draw [edge] (29) -- (32);
    \draw [edge] (30) -- (33);
    \draw [edge] (32) -- (33);

    \matrix (m) [
      right=3cm of 13,
      matrix of nodes,
      column sep=2pt,
      nodes={inner sep=1, minimum size=0}
    ] {
      \node[vertex, fill=gray!30, inner sep=.75, minimum size=15] {1};
      & \node[vertex, fill=gray!30, inner sep=.75, minimum size=15] {14};
      & \node[vertex, fill=gray!30, inner sep=.75, minimum size=15] {34}; \\
      \midrule
      \node {\bc}; & \node {\tc}; & \node {\dc}; \\
      \node {\cc}; & {} & \node {\ev}; \\
      {} & {} & \node {\pr}; \\
    };

    \draw[dashed] (-3.5, -1) ellipse (5.5cm and 3.75cm);
    \draw[dashed] (2.5, -1) ellipse (5.5cm and 3.75cm);
  \end{tikzpicture}
  \caption{\label{fig:karate}Zachary's karate club.}
\end{figure}

Figure~\ref{fig:dolphin} depicts Lusseau's social network of 62 bottlenose
dolphins living off Doubtful Sound in New Zealand between 1994 and
2001~\cite{bib:lusseau2003}. This is another benchmark network for
clustering. According to Lusseau and Newman~\cite{bib:lusseau_newman2004}, the
disappearance of one dolphin, named SN100 (vertex~$37$), split the network into
two communities, but when SN100 reappeared the communities rejoined. Then, it
comes as no surprise that SN100 is considered the most central by the
betweenness and closeness centralities. But it is the dolphin named Grin
(vertex~$15$) that is ranked highest by the remaining centralities including
triangle centrality. Grin also has the highest degree in the graph.

\begin{figure}[H]
  \centering
  \begin{tikzpicture}
    [scale=0.55, inner sep=.75, minimum size=15,
    vertex/.style={circle,draw=black,thin},
    edge/.style={thin},
    tname/.style args={#1:#2}{%
      label={[label distance=0,fill=white,minimum height=0,font=\tiny]#1:#2}
    },
    tname/.default={#1:#2:-4}]

    \node [vertex,tname=350:Beak] (1) {1};
    \node [vertex,tname=270:Fish] (11) at ([shift={(10:3.5)}]1) {11};
    \node [vertex,tname=90:Grin,fill=gray!30] (15) at ([shift={(70:8)}]1) {15};
    \node [vertex,tname=below:SN96] (43) at ([shift={(300:3.5)}]1) {43};
    \node [vertex,tname=below:TR77] (48) at ([shift={(265:2)}]1) {48};
    \node [vertex,tname=10:CCL] (4) at ([shift={(25:2.5)}]1) {4};
    \node [vertex,tname=80:Beescratch] (2) at ([shift={(175:5.5)}]1) {2};
    \node [vertex,tname=below:Bumper] (3) at ([shift={(340:5.5)}]1) {3};

    \node [vertex,tname=270:Jet] (18) at ([shift={(210:3)}]2) {18};
    \node [vertex,tname=below:Knit] (20) at ([shift={(290:3.75)}]2) {20};
    \node [vertex,tname=below:Notch] (27) at ([shift={(270:3.75)}]2) {27};
    \node [vertex,tname=above:Number1] (28) at ([shift={(245:2.5)}]2) {28};
    \node [vertex,tname=0:Oscar] (29) at ([shift={(340:3.5)}]2) {29};
    \node [vertex,tname=90:SN100,fill=gray!30] (37) at ([shift={(90:1.5)}]2) {37};
    \node [vertex,tname=above:SN90] (42) at ([shift={(155:2.75)}]2) {42};
    \node [vertex,tname=50:Upbang] (55) at ([shift={(185:2.5)}]2) {55};
    \node [vertex,tname=above:DN16] (6) at ([shift={(175:7)}]2) {6};
    \node [vertex,tname=above:DN63] (8) at ([shift={(320:2)}]2) {8};

    \node [vertex,tname=90:Thumper] (45) at ([shift={(95:3)}]3) {45};
    \node [vertex,tname=85:Zipfel] (62) at ([shift={(165:1.5)}]3) {62};
    \node [vertex,tname=275:TSN83] (54) at ([shift={(70:2)}]3) {54};

    \node [vertex,tname=above:Double] (9) at ([shift={(180:5)}]4) {9};

    \node [vertex,tname=left:Feather] (10) at ([shift={(300:1.5)}]6) {10};
    \node [vertex,tname=above:Gallatin] (14) at ([shift={(340:3)}]6) {14};
    \node [vertex,tname=above:Wave] (57) at ([shift={(220:2.5)}]6) {57};
    \node [vertex,tname=100:Web] (58) at ([shift={(20:2.5)}]6) {58};
    \node [vertex,tname=below:DN21] (7) at ([shift={(300:3.25)}]6) {7};
    \node [vertex,tname=above:TR82] (49) at ([shift={(70:2.5)}]6) {49};

    \node [vertex,tname=below:PL] (31) at ([shift={(310:3)}]8) {31};

    \node [vertex,tname=above:Kringel] (21) at ([shift={(15:4.5)}]9) {21};
    \node [vertex,tname=95:SN4] (38) at ([shift={(50:2.5)}]9) {38};
    \node [vertex,tname=above:Topless] (46) at ([shift={(95:6.5)}]9) {46};

    \node [vertex,tname=below:Ripplefluke] (33) at ([shift={(240:2)}]10) {33};

    \node [vertex,tname=above:Fork] (13) at ([shift={(80:8.5)}]11) {13};

    \node [vertex,tname=5:Hook] (17) at ([shift={(290:5)}]15) {17};
    \node [vertex,tname=above:MN83] (25) at ([shift={(135:5.5)}]15) {25};
    \node [vertex,tname=290:Shmuddel] (35) at ([shift={(325:6.25)}]15) {35};
    \node [vertex,tname=0:SN63] (39) at ([shift={(310:7)}]15) {39};
    \node [vertex,tname=80:Stripes] (44) at ([shift={(350:3)}]15) {44};
    \node [vertex,tname=90:TR99] (51) at ([shift={(215:3.75)}]15) {51};
    \node [vertex,tname=above:TSN103] (53) at ([shift={(85:2.5)}]15) {53};
    \node [vertex,tname=80:Jonah] (19) at ([shift={(175:2)}]15) {19};
    \node [vertex,tname=5:Scabs] (34) at ([shift={(280:3.5)}]15) {34};

    \node [vertex,tname=below:MN23] (23) at ([shift={(225:2.5)}]18) {23};
    \node [vertex,tname=below:Mus] (26) at ([shift={(290:2.5)}]18) {26};
    \node [vertex,tname=below:Quasi] (32) at ([shift={(255:3)}]18) {32};

    \node [vertex,tname=90:MN105] (22) at ([shift={(165:2.25)}]19) {22};
    \node [vertex,tname=left:Trigger] (52) at ([shift={(150:5)}]19) {52};
    \node [vertex,tname=above:Patchback] (30) at ([shift={(95:3)}]19) {30};

    \node [vertex,tname=above:SMN5] (36) at ([shift={(20:2)}]30) {36};

    \node [vertex,tname=above:Zig] (61) at ([shift={(200:2)}]33) {61};

    \node [vertex,tname=above:TR88] (50) at ([shift={(40:1.5)}]35) {50};

    \node [vertex,tname=90:MN60] (24) at ([shift={(95:6)}]37) {24};
    \node [vertex,tname=above:SN89] (40) at ([shift={(160:4)}]37) {40};
    \node [vertex,tname=95:SN9] (41) at ([shift={(70:3.5)}]37) {41};
    \node [vertex,tname=above left:Zap] (60) at ([shift={(125:2.5)}]37) {60};

    \node [vertex,tname=275:Whitetip] (59) at ([shift={(270:1.5)}]39) {59};

    \node [vertex,tname=above right:TR120] (47) at ([shift={(310:2)}]44) {47};

    \node [vertex,tname=above:Cross] (5) at ([shift={(145:2)}]52) {5};
    \node [vertex,tname=above:Five] (12) at ([shift={(90:1.5)}]52) {12};
    \node [vertex,tname=above:Vau] (56) at ([shift={(210:5)}]52) {56};

    \node [vertex,tname=left:Haecksel] (16) at ([shift={(100:2)}]60) {16};

    \begin{pgfonlayer}{background}
    \draw (1) -- (11);
    \draw (1) -- (15);
    \draw (1) -- (16);
    \draw (1) -- (41);
    \draw (1) -- (43);
    \draw (1) -- (48);

    \draw (2) -- (18);
    \draw (2) -- (20);
    \draw (2) -- (27);
    \draw (2) -- (28);
    \draw (2) -- (29);
    \draw (2) -- (37);
    \draw (2) -- (42);
    \draw (2) -- (55);

    \draw (3) -- (11);
    \draw (3) -- (43);
    \draw (3) -- (45);
    \draw (3) -- (62);

    \draw (4) -- (9);
    \draw (4) -- (15);
    \draw (4) -- (60);

    \draw (5) -- (52);

    \draw (6) -- (10);
    \draw (6) -- (14);
    \draw (6) -- (57);
    \draw (6) -- (58);

    \draw (7) -- (10);
    \draw (7) -- (14);
    \draw (7) -- (18);
    \draw (7) -- (55);
    \draw (7) -- (57);
    \draw (7) -- (58);

    \draw (8) -- (20);
    \draw (8) -- (28);
    \draw (8) -- (31);
    \draw (8) -- (41);
    \draw (8) -- (55);

    \draw (9) -- (21);
    \draw (9) -- (29);
    \draw (9) -- (38);
    \draw (9) -- (46);
    \draw (9) -- (60);

    \draw (10) -- (14);
    \draw (10) -- (18);
    \draw (10) -- (33);
    \draw (10) -- (42);
    \draw (10) -- (58);

    \draw (11) -- (30);
    \draw (11) -- (43);
    \draw (11) -- (48);

    \draw (12) -- (52);

    \draw (13) -- (34);

    \draw (14) -- (18);
    \draw (14) -- (33);
    \draw (14) -- (42);
    \draw (14) -- (55);
    \draw (14) -- (58);

    \draw (15) -- (17);
    \draw (15) -- (25);
    \draw (15) -- (34);
    \draw (15) -- (38);
    \draw (15) -- (39);
    \draw (15) -- (41);
    \draw (15) -- (44);
    \draw (15) -- (51);
    \draw (15) -- (53);

    \draw (16) -- (19);
    \draw (16) -- (25);
    \draw (16) -- (41);
    \draw (16) -- (46);
    \draw (16) -- (56);
    \draw (16) -- (60);

    \draw (17) -- (21);
    \draw (17) -- (34);
    \draw (17) -- (38);
    \draw (17) -- (39);
    \draw (17) -- (51);

    \draw (18) -- (23);
    \draw (18) -- (26);
    \draw (18) -- (28);
    \draw (18) -- (32);
    \draw (18) -- (58);

    \draw (19) -- (21);
    \draw (19) -- (22);
    \draw (19) -- (25);
    \draw (19) -- (30);
    \draw (19) -- (46);
    \draw (19) -- (52);

    \draw (20) -- (31);
    \draw (20) -- (55);

    \draw (21) -- (29);
    \draw (21) -- (37);
    \draw (21) -- (39);
    \draw (21) -- (45);
    \draw (21) -- (48);
    \draw (21) -- (51);

    \draw (22) -- (30);
    \draw (22) -- (34);
    \draw (22) -- (38);
    \draw (22) -- (46);
    \draw (22) -- (52);

    \draw (24) -- (37);
    \draw (24) -- (46);
    \draw (24) -- (52);

    \draw (25) -- (30);
    \draw (25) -- (46);
    \draw (25) -- (52);

    \draw (26) -- (27);
    \draw (26) -- (28);

    \draw (27) -- (28);

    \draw (29) -- (31);
    \draw (29) -- (48);

    \draw (30) -- (36);
    \draw (30) -- (44);
    \draw (30) -- (46);
    \draw (30) -- (52);
    \draw (30) -- (53);

    \draw (31) -- (43);
    \draw (31) -- (48);

    \draw (33) -- (61);

    \draw (34) -- (35);
    \draw (34) -- (38);
    \draw (34) -- (39);
    \draw (34) -- (41);
    \draw (34) -- (44);
    \draw (34) -- (51);

    \draw (35) -- (38);
    \draw (35) -- (45);
    \draw (35) -- (50);

    \draw (37) -- (38);
    \draw (37) -- (40);
    \draw (37) -- (41);
    \draw (37) -- (60);

    \draw (38) -- (41);
    \draw (38) -- (44);
    \draw (38) -- (46);
    \draw (38) -- (62);

    \draw (39) -- (44);
    \draw (39) -- (45);
    \draw (39) -- (53);
    \draw (39) -- (59);

    \draw (40) -- (58);

    \draw (41) -- (53);

    \draw (42) -- (55);
    \draw (42) -- (58);

    \draw (43) -- (48);
    \draw (43) -- (51);

    \draw (44) -- (47);
    \draw (44) -- (54);

    \draw (46) -- (51);
    \draw (46) -- (52);
    \draw (46) -- (60);

    \draw (47) -- (50);

    \draw (49) -- (58);

    \draw (51) -- (52);

    \draw (52) -- (56);

    \draw (54) -- (62);

    \draw (55) -- (58);
    \end{pgfonlayer}

    \matrix (m) [
      right=3.5cm of 15,
      matrix of nodes,
      column sep=2pt,
      nodes={inner sep=1, minimum size=0}
    ] {
      \node[vertex, fill=gray!30] {15}; & \node[vertex,fill=gray!30] {37}; \\
      \midrule
      \node {\tc}; & \node {\bc}; \\
      \node {\dc}; & \node {\cc}; \\
      \node {\ev}; & {} \\
      \node {\pr}; & {} \\
    };
  \end{tikzpicture}
  \caption{\label{fig:dolphin}Lusseau's Dolphin network.}
\end{figure}

Figure~\ref{fig:911_hijackers} depicts Krebs' network of the 9/11 hijackers and
their accomplices~\cite{bib:krebs2002, bib:krebs2002b}. This network of 62
vertices includes the 19 hijackers and 43 co-conspirators. The functional leader
of the hijackers was Mohamed Atta (vertex 38), and it is evident from the
network that he played an important structural role. All centrality measures in
this study ranked Mohamed Atta the highest. This is an example showing that
triangle centrality aligns with the consensus on a central node. We also note
that Marwan Al-Shehhi is ranked the second highest by triangle centrality, in
agreement with closeness, degree, and eigenvector centralities. Thus, triangle
centrality is with the majority in ranking the top two vertices in this network.

Mohamed Atta has the highest degree and second-highest triangle count,
respectively, 22 and 42. Marwan Al-Shehhi (vertex 35) has the second-highest
degree and highest triangle count, respectively, 18 and 47. The largest clique
size in the graph is six, and one such clique (vertices 2, 20, 35, 38, 55, and
58) contains both Mohamed Atta and Marwan Al-Shehhi, but Marwan Al-Shehhi is
also a member of an overlapping 6-clique (vertices 2, 20, 35, 55, 58, and
59). These two highly ranked vertices gain the same triangle contribution from
14 common neighbors, but Mohamed Atta is at the center of more triangles.

\begin{figure}[H]
  \centering
  \begin{tikzpicture}
    [scale=0.5, inner sep=.75, minimum size=15,
    vertex/.style={circle,draw=black,thin},
    edge/.style={thin},
    tname/.style args={#1:#2}{%
      label={[label distance=0,fill=white,minimum height=0,font=\tiny]#1:#2}
    },
    tname/.default={#1:#2:-4}]

    \node [vertex,tname=270:Mohamed Atta,fill=gray!30] (38) {38};
    \node [vertex,tname=350:Abdelghani Mzoudi] (1) at ([shift={(340:6)}]38) {1};
    \node [vertex,tname=0:{\parbox[t]{1.25cm}{Abdul Aziz\\Al-Omari*}}] (2)
    at ([shift={(100:8.5)}]38) {2};
    \node [vertex,tname=10:Agus Budiman] (7) at ([shift={(20:5)}]38) {7};
    \node [vertex,tname=0:Ahmed Al Haznawi] (9) at ([shift={(135:8.5)}]38) {9};
    \node [vertex,
    tname=190:{\parbox[t]{1.5cm}{Ahmed Khalil\\Ibrahim Samir Al-Ani}}] (11)
    at ([shift={(195:5.5)}]38) {11};
    \node [vertex,tname=270:{\parbox[t]{1.5cm}{Essid Sami\\Ben Khemais}}] (16)
    at ([shift={(225:10)}]38) {16};
    \node [vertex,tname=0:Fayez Ahmed] (20) at ([shift={(45:6)}]38) {20};
    \node [vertex,tname=270:Hani Hanjour] (22) at ([shift={(170:8.5)}]38) {22};
    \node [vertex,tname=80:{\parbox[t]{1.75cm}{Imad Eddin\\Barakat Yarkas}}] (24)
    at ([shift={(285:6)}]38) {24};
    \node [vertex,tname=190:Lofti Raissi] (30) at ([shift={(180:5)}]38) {30};
    \node [vertex,tname=270:{\parbox[t]{1.15cm}{Mamoun\\Darkazanli}}] (34)
    at ([shift={(210:5.5)}]38) {34};
    \node [vertex,tname=270:Marwan Al-Shehhi] (35) at ([shift={(115:5.5)}]38)
    {35};
    \node [vertex,tname=0:Mounir El Motassadeq] (42) at ([shift={(325:6)}]38)
    {42};
    \node [vertex,tname=0:{\parbox[t]{1.75cm}{Mustafa Ahmed\\al-Hisawi}}] (43)
    at ([shift={(30:6)}]38) {43};
    \node [vertex,tname=270:Nawaf Alhazmi] (45) at ([shift={(160:10.5)}]38) {45};
    \node [vertex,tname=0:Ramzi Bin al-Shibh] (49) at ([shift={(0:4.5)}]38)
    {49};
    \node [vertex,tname=270:Said Bahaji] (52) at ([shift={(265:4.5)}]38)
    {52};
    \node [vertex,tname=0:Satam Suqami] (55) at ([shift={(80:9.75)}]38) {55};
    \node [vertex,tname=0:Wail Alshehri] (58) at ([shift={(60:9.5)}]38) {58};
    \node [vertex,tname=270:Zacarias Moussaoui] (60) at ([shift={(300:7.25)}]38)
    {60};
    \node [vertex,tname=190:Zakariya Essabar] (61) at ([shift={(240:5)}]38)
    {61};
    \node [vertex,tname=200:Ziad Jarrah] (62) at ([shift={(150:5.5)}]38) {62};

    \node [vertex,tname=190:Abdussattar Shaikh] (3) at ([shift={(180:3.5)}]45)
    {3};
    \node [vertex,tname=270:Hamza Alghamdi] (21) at ([shift={(65:2)}]45) {21};
    \node [vertex,tname=180:Khalid Al-Mihdhar] (28) at ([shift={(210:2.5)}]45)
    {28};
    \node [vertex,tname=180:Mohamed Abdi] (37) at ([shift={(135:3)}]45) {37};
    \node [vertex,tname=180:Osama Awadallah] (47) at ([shift={(155:2.5)}]45)
    {47};
    \node [vertex,tname=270:Salem Alhazmi*] (53) at ([shift={(10:3)}]45) {53};

    \node [vertex,tname=270:Ahmed Alghamdi] (8) at ([shift={(60:3.5)}]21) {8};
    \node [vertex,tname=180:Ahmed Alnami] (10) at ([shift={(150:2.5)}]21) {10};
    \node [vertex,tname=0:Mohand Alshehri*] (41) at ([shift={(25:4.5)}]21)
    {41};
    \node [vertex,tname=180:Saeed Alghamdi*] (51) at ([shift={(115:3)}]21) {51};

    \node [vertex,tname=190:Bandar Alhazmi] (13) at ([shift={(250:4)}]22) {13};
    \node [vertex,tname=190:Faisal Al Salmi] (19) at ([shift={(205:3.5)}]22)
    {19};
    \node [vertex,tname=190:Majed Moqed] (32) at ([shift={(185:3)}]22) {32};
    \node [vertex,tname=180:{\parbox[t]{1.2cm}{Rayed\\Mohammed Abdullah}}] (50)
    at ([shift={(225:4)}]22) {50};

    \node [vertex,tname=190:Mamduh Mahmud Salim] (33) at ([shift={(200:3)}]34)
    {33};

    \node [vertex,tname=180:Nabil al-Marabh] (44) at ([shift={(60:1.75)}]51) {44};
    \node [vertex,tname=270:Raed Hijazi] (48) at ([shift={(20:8)}]51) {48};

    \node [vertex,tname=0:Waleed Alshehri] (59) at ([shift={(75:3)}]43)
    {59};

    \node [vertex,tname=350:Essoussi Laaroussi] (17) at ([shift={(335:3)}]16)
    {17};
    \node [vertex,tname=180:Fahid al Shakri] (18) at ([shift={(190:4)}]16) {18};
    \node [vertex,tname=300:Haydar Abu Doha] (23) at ([shift={(295:3.5)}]16) {23};
    \node [vertex,tname=180:Lased Ben Heni] (29) at ([shift={(175:3)}]16) {29};
    \node [vertex,tname=180:Madjid Sahoune] (31) at ([shift={(210:4)}]16) {31};
    \node [vertex,tname=180:Mehdi Khammoun] (36) at ([shift={(230:4)}]16) {36};
    \node [vertex,tname=350:Mohamed Bensakhria] (39) at ([shift={(270:2)}]16)
    {39};
    \node [vertex,tname=180:Samir Kishk] (54) at ([shift={(145:2)}]16) {54};
    \node [vertex,tname=180:Seifallah ben Hassine] (56) at ([shift={(100:2)}]16)
    {56};
    \node [vertex,tname=270:Tarek Maaroufi] (57) at ([shift={(10:4)}]16) {57};

    \node [vertex,tname=0:Mohammed Belfas] (40) at ([shift={(335:2)}]7) {40};

    \node [vertex,tname=180:Abu Qatada] (4) at ([shift={(205:3.5)}]60) {4};
    \node [vertex,tname=270:Ahmed Ressam] (12) at ([shift={(230:3.5)}]60) {12};
    \node [vertex,tname=270:David Courtaillier] (14) at ([shift={(250:4.75)}]60)
    {14};
    \node [vertex,tname=0:Djamal Beghal] (15) at ([shift={(330:3.75)}]60) {15};
    \node [vertex,tname=0:Jerome Courtaillier] (26) at ([shift={(275:4.5)}]60)
    {26};
    \node [vertex,tname=0:Kamel Daoudi] (27) at ([shift={(35:2.75)}]60) {27};

    \node [vertex,tname=0:Abu Walid] (5) at ([shift={(250:2)}]15) {5};
    \node [vertex,tname=320:Abu Zubeida] (6) at ([shift={(330:2)}]15) {6};
    \node [vertex,tname=0:{\parbox[t]{1.25cm}{Jean-Marc\\Grandvisir}}] (25)
    at ([shift={(70:2.5)}]15) {25};
    \node [vertex,tname=350:Nizar Trabelsi] (46) at ([shift={(30:2)}]15) {46};

    \begin{pgfonlayer}{background}
    \draw (1) -- (38);
    \draw (2) -- (8);
    \draw (2) -- (20);
    \draw (2) -- (22);
    \draw (2) -- (35);
    \draw (2) -- (38);
    \draw (2) -- (53);
    \draw (2) -- (55);
    \draw (2) -- (58);
    \draw (2) -- (59);
    \draw (3) -- (28);
    \draw (3) -- (45);
    \draw (3) -- (47);
    \draw (4) -- (5);
    \draw (4) -- (15);
    \draw (4) -- (24);
    \draw (4) -- (57);
    \draw (4) -- (60);
    \draw (5) -- (15);
    \draw (5) -- (27);
    \draw (6) -- (15);
    \draw (7) -- (35);
    \draw (7) -- (38);
    \draw (7) -- (40);
    \draw (7) -- (49);
    \draw (7) -- (62);
    \draw (8) -- (21);
    \draw (8) -- (22);
    \draw (8) -- (44);
    \draw (8) -- (53);
    \draw (9) -- (21);
    \draw (9) -- (38);
    \draw (9) -- (51);
    \draw (9) -- (62);
    \draw (10) -- (21);
    \draw (10) -- (45);
    \draw (10) -- (51);
    \draw (11) -- (38);
    \draw (12) -- (23);
    \draw (12) -- (60);
    \draw (13) -- (22);
    \draw (13) -- (50);
    \draw (14) -- (26);
    \draw (14) -- (60);
    \draw (15) -- (25);
    \draw (15) -- (26);
    \draw (15) -- (27);
    \draw (15) -- (46);
    \draw (15) -- (60);
    \draw (16) -- (17);
    \draw (16) -- (18);
    \draw (16) -- (23);
    \draw (16) -- (29);
    \draw (16) -- (31);
    \draw (16) -- (36);
    \draw (16) -- (38);
    \draw (16) -- (39);
    \draw (16) -- (54);
    \draw (16) -- (56);
    \draw (16) -- (57);
    \draw (17) -- (57);
    \draw (19) -- (22);
    \draw (19) -- (50);
    \draw (20) -- (35);
    \draw (20) -- (38);
    \draw (20) -- (41);
    \draw (20) -- (43);
    \draw (20) -- (55);
    \draw (20) -- (58);
    \draw (20) -- (59);
    \draw (21) -- (35);
    \draw (21) -- (41);
    \draw (21) -- (45);
    \draw (21) -- (51);
    \draw (22) -- (28);
    \draw (22) -- (30);
    \draw (22) -- (32);
    \draw (22) -- (35);
    \draw (22) -- (38);
    \draw (22) -- (45);
    \draw (22) -- (50);
    \draw (22) -- (53);
    \draw (22) -- (62);
    \draw (23) -- (36);
    \draw (24) -- (38);
    \draw (24) -- (49);
    \draw (24) -- (57);
    \draw (26) -- (27);
    \draw (26) -- (60);
    \draw (27) -- (60);
    \draw (28) -- (32);
    \draw (28) -- (45);
    \draw (28) -- (47);
    \draw (29) -- (39);
    \draw (30) -- (35);
    \draw (30) -- (38);
    \draw (30) -- (50);
    \draw (30) -- (62);
    \draw (32) -- (45);
    \draw (32) -- (53);
    \draw (33) -- (34);
    \draw (34) -- (35);
    \draw (34) -- (38);
    \draw (34) -- (52);
    \draw (35) -- (38);
    \draw (35) -- (42);
    \draw (35) -- (43);
    \draw (35) -- (49);
    \draw (35) -- (52);
    \draw (35) -- (53);
    \draw (35) -- (55);
    \draw (35) -- (58);
    \draw (35) -- (59);
    \draw (35) -- (61);
    \draw (35) -- (62);
    \draw (36) -- (39);
    \draw (37) -- (45);
    \draw (38) -- (42);
    \draw (38) -- (43);
    \draw (38) -- (45);
    \draw (38) -- (49);
    \draw (38) -- (52);
    \draw (38) -- (55);
    \draw (38) -- (58);
    \draw (38) -- (60);
    \draw (38) -- (61);
    \draw (38) -- (62);
    \draw (39) -- (57);
    \draw (40) -- (49);
    \draw (42) -- (49);
    \draw (42) -- (52);
    \draw (43) -- (59);
    \draw (44) -- (48);
    \draw (44) -- (51);
    \draw (44) -- (55);
    \draw (45) -- (47);
    \draw (45) -- (51);
    \draw (45) -- (53);
    \draw (48) -- (51);
    \draw (48) -- (55);
    \draw (49) -- (52);
    \draw (49) -- (60);
    \draw (49) -- (61);
    \draw (49) -- (62);
    \draw (52) -- (61);
    \draw (52) -- (62);
    \draw (53) -- (62);
    \draw (55) -- (58);
    \draw (55) -- (59);
    \draw (56) -- (57);
    \draw (58) -- (59);
    \draw (61) -- (62);

    \end{pgfonlayer}

    \matrix (m) [
      right=5.75cm of 38,
      matrix of nodes,
      column sep=2pt,
      nodes={inner sep=1, minimum size=0}
    ] {
      \node[vertex, fill=gray!30, inner sep=.75, minimum size=15] {38}; \\
      \midrule
      \node {\tc}; \\
      \node {\bc}; \\
      \node {\cc}; \\
      \node {\dc}; \\
      \node {\ev}; \\
      \node {\pr}; \\
    };
  \end{tikzpicture}
  \caption{\label{fig:911_hijackers}Krebs' 9/11 hijackers network.}
\end{figure}

The heatmaps in Figure~\ref{fig:ranksimilarity} illustrate mutual agreement
among the centrality measures. Figure~\ref{fig:binary_heatmap} depicts a
$20\times 5$ binary heatmap for each of the six centrality measures in this
study, where the rows are the test graphs in Table~\ref{tbl:graphdata} and the
columns are competitor measures. A non-empty $(i,j)$ entry in the binary heatmap
for a centrality measure indicates agreement with some other centrality measure
$j$ on the most central vertex in graph $i$. The overall pairwise similarity is
illustrated by the heatmap in Figure~\ref{fig:similarity}, which holds the
column sums from the binary heatmaps where the maximum value for a sum is the
number of graphs (20 in this case).

\begin{figure}[H]
  \centering
  \begin{subfigure}[t]{.6\linewidth}
    \centering
    \begin{tikzpicture}[
      nodes={
        font=\scriptsize
      },
      dotmatrix/.style={
        rectangle, draw, inner sep=0pt,
        matrix of nodes,
        nodes in empty cells,
        row sep=-\pgflinewidth,
        column sep=-\pgflinewidth,
        nodes={
          inner sep=0pt,
          outer sep=0pt,
          minimum width=8,
          minimum height=8,
          anchor=center,
        }
      }
    ]

    \matrix(tc) [%
      dotmatrix
    ]{
      {} & {} & \mdot & \mdot & {} \\
      {} & {} & {} & {} & {} \\
      {} & {} & \mdot & \mdot & \mdot \\
      \mdot & \mdot & \mdot & \mdot & \mdot \\
      {} & {} & {} & \mdot & {} \\
      {} & {} & {} & \mdot & {} \\
      \mdot & \mdot & \mdot & \mdot & \mdot \\
      {} & {} & {} & {} & {} \\
      {} & {} & {} & \mdot & {} \\
      {} & {} & {} & {} & {} \\
      {} & {} & {} & {} & {} \\ 
      {} & {} & {} & \mdot & {} \\
      {} & {} & {} & \mdot & {} \\
      {} & {} & \mdot & \mdot & {} \\
      {} & \mdot & {} & \mdot & {} \\
      {} & \mdot & \mdot & \mdot & {} \\
      \mdot & \mdot & \mdot & \mdot & \mdot \\
      {} & {} & {} & {} & {} \\
      {} & {} & {} & {} & {} \\
      {} & {} & \mdot & \mdot & {} \\
    };
    \foreach \i in {1, ..., 20} {
      \node [label={[label distance=0pt,font=\scriptsize]left:\i}] at (tc-\i-1) {};
    }
    \foreach \x [count=\i] in {\bc, \cc, \dc, \ev, \pr} {
      \node [
        pin={[pin distance=1ex]above:{}},
        label={[label distance=0pt,
            font=\tiny,
            xshift=-2pt,
            anchor=south west,
            rotate=45]above:\x}
      ] at (tc-1-\i) {};
    }

    \matrix(bc) [right=5pt of tc] [%
      dotmatrix
    ]{
      {} & {} & {} & {} & {} \\
      {} & \mdot & {} & {} & {} \\
      {} & \mdot & {} & {} & {} \\
      \mdot & \mdot & \mdot & \mdot & \mdot \\
      {} & \mdot & \mdot & {} & \mdot \\
      {} & \mdot & {} & {} & {} \\
      \mdot & \mdot & \mdot & \mdot & \mdot \\
      {} & {} & {} & {} & {} \\
      {} & \mdot & \mdot & {} & \mdot \\
      {} & {} & {} & {} & \mdot \\
      {} & \mdot & {} & {} & \mdot \\ 
      {} & {} & {} & {} & {} \\
      {} & {} & {} & {} & {} \\
      {} & {} & {} & {} & \mdot \\
      {} & {} & \mdot & {} & \mdot \\
      {} & {} & {} & {} & {} \\
      \mdot & \mdot & \mdot & \mdot & \mdot \\
      {} & {} & {} & {} & \mdot \\
      {} & {} & {} & {} & {} \\
      {} & {} & {} & {} & {} \\
    };
    \foreach \x [count=\i] in {\tc, \cc, \dc, \ev, \pr} {
      \node [
        pin={[pin distance=1ex]above:{}},
        label={[label distance=0pt,
            font=\tiny,
            xshift=-2pt,
            anchor=south west,
            rotate=45]above:\x}
      ] at (bc-1-\i) {};
    }

    \matrix(cc) [right=5pt of bc] [%
      dotmatrix
    ]{
      {} & {} & {} & {} & {} \\
      {} & \mdot & {} & {} & {} \\
      {} & \mdot & {} & {} & {} \\
      \mdot & \mdot & \mdot & \mdot & \mdot \\
      {} & \mdot & \mdot & {} & \mdot \\
      {} & \mdot & {} & {} & {} \\
      \mdot & \mdot & \mdot & \mdot & \mdot \\
      {} & {} & {} & {} & {} \\
      {} & \mdot & \mdot & {} & \mdot \\
      {} & {} & {} & {} & {} \\
      {} & \mdot & {} & {} & \mdot \\ 
      {} & {} & {} & {} & {} \\
      {} & {} & {} & {} & {} \\
      {} & {} & {} & {} & {} \\
      \mdot & {} & {} & \mdot & {} \\
      \mdot & {} & \mdot & \mdot & {} \\
      \mdot & \mdot & \mdot & \mdot & \mdot \\
      {} & {} & {} & {} & {} \\
      {} & {} & {} & {} & {} \\
      {} & {} & {} & {} & {} \\
    };
    \foreach \x [count=\i] in {\tc, \bc, \dc, \ev, \pr} {
      \node [
        pin={[pin distance=1ex]above:{}},
        label={[label distance=0pt,
            font=\tiny,
            xshift=-2pt,
            anchor=south west,
            rotate=45]above:\x}
      ] at (cc-1-\i) {};
    }

    \matrix(dc) [right=5pt of cc] [%
      dotmatrix
    ]{
      \mdot & {} & {} & \mdot & \mdot \\
      {} & {} & {} & \mdot & \mdot \\
      \mdot & {} & {} & \mdot & \mdot \\
      \mdot & \mdot & \mdot & \mdot & \mdot \\
      {} & \mdot & \mdot & {} & \mdot \\
      {} & {} & {} & {} & \mdot \\
      \mdot & \mdot & \mdot & \mdot & \mdot \\
      {} & {} & {} & \mdot & \mdot \\
      {} & \mdot & \mdot & {} & \mdot \\
      {} & {} & {} & \mdot & {} \\
      {} & {} & {} & \mdot & {} \\ 
      {} & {} & {} & {} & {} \\
      {} & {} & {} & {} & \mdot \\
      \mdot & {} & {} & \mdot & {} \\
      {} & \mdot & {} & {} & \mdot \\
      \mdot & {} & \mdot & \mdot & {} \\
      \mdot & \mdot & \mdot & \mdot & \mdot \\
      {} & {} & {} & {} & {} \\
      {} & {} & {} & {} & \mdot \\
      \mdot & {} & {} & \mdot & {} \\
    };
    \foreach \x [count=\i] in {\tc, \bc, \cc, \ev, \pr} {
      \node [
        pin={[pin distance=1ex]above:{}},
        label={[label distance=0pt,
            font=\tiny,
            xshift=-2pt,
            anchor=south west,
            rotate=45]above:\x}
      ] at (dc-1-\i) {};
    }
   
    \matrix(ev) [right=5pt of dc] [%
      dotmatrix
    ]{
      \mdot & {} & {} & \mdot & {} \\
      {} & {} & {} & \mdot & \mdot \\
      \mdot & {} & {} & \mdot & \mdot \\
      \mdot & \mdot & \mdot & \mdot & \mdot \\
      \mdot & {} & {} & {} & {} \\
      \mdot & {} & {} & {} & {} \\
      \mdot & \mdot & \mdot & \mdot & \mdot \\
      {} & {} & {} & \mdot & {} \\
      \mdot & {} & {} & {} & {} \\
      {} & {} & {} & \mdot & {} \\
      {} & {} & {} & \mdot & {} \\ 
      \mdot & {} & {} & {} & {} \\
      \mdot & {} & {} & {} & {} \\
      \mdot & {} & {} & \mdot & {} \\
      \mdot & {} & \mdot & {} & {} \\
      \mdot & {} & \mdot & \mdot & {} \\
      \mdot & \mdot & \mdot & \mdot & \mdot \\
      {} & {} & {} & {} & {} \\
      {} & {} & {} & {} & {} \\
      \mdot & {} & {} & \mdot & {} \\
    };
    \foreach \x [count=\i] in {\tc, \bc, \cc, \dc, \pr} {
      \node [
        pin={[pin distance=1ex]above:{}},
        label={[label distance=0pt,
            font=\tiny,
            xshift=-2pt,
            anchor=south west,
            rotate=45]above:\x}
      ] at (ev-1-\i) {};
    }
    
    \matrix(pr) [right=5pt of ev] [%
    dotmatrix
    ]{
      {} & {} & {} & \mdot & {} \\
      {} & {} & {} & \mdot & \mdot \\
      \mdot & {} & {} & \mdot & \mdot \\
      \mdot & \mdot & \mdot & \mdot & \mdot \\
      {} & \mdot & \mdot & \mdot & {} \\
      {} & {} & {} & \mdot & {} \\
      \mdot & \mdot & \mdot & \mdot & \mdot \\
      {} & {} & {} & \mdot & {} \\
      {} & \mdot & \mdot & \mdot & {} \\
      {} & \mdot & {} & {} & {} \\
      {} & \mdot & \mdot & {} & {} \\ 
      {} & {} & {} & {} & {} \\
      {} & {} & {} & \mdot & {} \\
      {} & \mdot & {} & {} & {} \\
      {} & \mdot & {} & \mdot & {} \\
      {} & {} & {} & {} & {} \\
      \mdot & \mdot & \mdot & \mdot & \mdot \\
      {} & \mdot & {} & {} & {} \\
      {} & {} & {} & \mdot & {} \\
      {} & {} & {} & {} & {} \\
    };
    \foreach \x [count=\i] in {\tc, \bc, \cc, \dc, \ev} {
      \node [
        pin={[pin distance=1ex]above:{}},
        label={[label distance=0pt,
            font=\tiny,
            xshift=-2pt,
            anchor=south west,
            rotate=45]above:\x}
      ] at (pr-1-\i) {};
    }

    \node [below=2pt of tc] {\tc};
    \node [below=2pt of bc] {\bc};
    \node [below=2pt of cc] {\cc};
    \node [below=2pt of dc] {\dc};
    \node [below=2pt of ev] {\ev};
    \node [below=2pt of pr] {\pr};
  \end{tikzpicture}
  \subcaption{\label{fig:binary_heatmap}}
  \end{subfigure}%
  \begin{subfigure}[t]{0.40\linewidth}
    \centering
    \sisetup{round-mode=places, round-precision=2}
    \begin{tikzpicture}[
        every node/.style = {
          outer sep=0,
          minimum size=20pt,
          align=center,
          font=\footnotesize
        },
        scale=0.70
      ]
      \foreach \x [count=\i] in {
        {20,3,5,8,14,4},  
        {3,20,9,6,3,10},  
        {5,9,20,6,5,6},   
        {8,6,6,20,12,13}, 
        {14,3,5,12,20,5}, 
        {4,10,6,13,5,20}, 
      }{
        \foreach \y [count=\j,
          evaluate=\y as \h using \y*100/20.0,
          evaluate=\y as \l using \y/20.0] in \x {
          \ifthenelse{\y=20}{
            \node[fill=black!\h!white, text=white] at (\j, -\i) {\y};
          }{
            \node[fill=black!\h!white, text=black] at (\j, -\i) {\y};
          }
        }
      }

      %
      \def\array{\tc,\bc,\cc,\dc,\ev,\pr}
      \foreach \x [count=\i, evaluate=\x as \k using \i+1] in \array {
        \xdef\n{\k}
      }

      \foreach \c [count=\i] in \array  {
        \node at (0, -\i) {\c};

        \node at (\i, -\n) {\c};
      }

      \node[draw=none] (A) at (1,-1) {};
      \node[draw=none] (D) at (6,-6) {};
      \draw (A.north west) rectangle (D.south east);

    \end{tikzpicture}
    \subcaption{\label{fig:similarity}}
  \end{subfigure}
  \caption{\label{fig:ranksimilarity}
    \subref{fig:binary_heatmap} Binary heatmap ($20\times 5$) for each
    centrality measure (labeled below); rows correspond to graphs numbered 1--20
    from Table~\ref{tbl:graphdata}, and columns are labeled by competitor
    centrality measures. \subref{fig:similarity} Heatmap of pairwise similarity
    with respect to ranking agreement.}

\end{figure}

Figure~\ref{fig:normality} summarizes how well each centrality measure aligns
with the norm as opposed to being an outlier. The overall percent agreement is
given in Figure~\ref{fig:agreement}, taken simply as the row (or column) sum in
the heatmap of Figure~\ref{fig:similarity}, ignoring the diagonal entry
(self-similarity). This is equivalent to the number of non-zero entries in a
binary heatmap. Figure~\ref{fig:commonality} depicts the degree of concordance,
meaning the percentage of graphs in which a centrality measure is in agreement
with at least one other. It is taken as the number of non-empty rows in the
binary heatmap. In essence, this captures the normality of a centrality
measure. A measure with a very low percent concordance indicates it is abnormal
with respect to identifying importance, which could mean it is too highly
specialized or esoteric. On the other hand, high concordance may indicate a lack
of novelty.

\begin{figure}[H]
  \centering
  \begin{subfigure}[t]{0.5\linewidth}
    \centering
    \begin{tikzpicture}[
        every node/.style = {
          outer sep=0,
          minimum size=20pt,
          align=center,
        },
      ]
      \begin{axis}[
          scale=0.70,
          ybar,
          ymin=25,
          enlargelimits=0.15,
          ylabel={Percent agreement},
          ylabel near ticks,
          symbolic x coords={\tc, \bc, \cc, \dc, \ev, \pr},
          ymajorgrids=true,
          major grid style={dotted, black},
          xtick pos=lower,
          ytick pos=left,
          xtick=data,
          tick label style={font=\footnotesize},
          ytick={30,35,40,45}
        ]
        \addplot [draw=gray,fill=gray] coordinates {
          (\tc, 34) (\bc, 31) (\cc, 31) (\dc, 45) (\ev, 39) (\pr, 38)  
        };
      \end{axis}
    \end{tikzpicture}
    \subcaption{\label{fig:agreement}}
  \end{subfigure}%
  \begin{subfigure}[t]{0.5\linewidth}
    \centering
    \begin{tikzpicture}[
        every node/.style = {
          outer sep=0,
          minimum size=20pt,
          align=center,
        },
      ]
      \begin{axis}[
          scale=0.70,
          ybar,
          enlargelimits=0.15,
          ylabel={Percent concordance},
          ylabel near ticks,
          symbolic x coords={\tc, \bc, \cc, \dc, \ev, \pr},
          ymajorgrids=true,
          major grid style={dotted, black},
          xtick pos=lower,
          ytick pos=left,
          xtick=data,
          tick label style={font=\footnotesize},
          ytick={60,70,80,90}
        ]
        \addplot [draw=gray,fill=gray] coordinates {
          (\tc, 70) (\bc, 65) (\cc, 55) (\dc, 90) (\ev, 90) (\pr, 85)
        };
      \end{axis}
    \end{tikzpicture}
    \subcaption{\label{fig:commonality}}
  \end{subfigure}
  \caption{\label{fig:normality}
    \subref{fig:agreement} Percent ranking agreement (row or column sum in
    \subref{fig:similarity} ignoring self-similarity).}
\end{figure}

Observe that triangle centrality agrees with 34\% of the choices made for the
most central vertex. It also has agreement with at least one other centrality
measure 70\% of the time. Conversely, in 30\% of the graphs, triangle centrality
identified a central vertex that differed from the central vertices given by the
other measures. We believe this level of concordance strikes a good balance
between normality and novelty. We also see that eigenvector centrality agrees
more often with triangle centrality than the other measures, selecting the same
central vertex in 14 of the graphs (70\%). This happens to coincide with the
concordance of triangle centrality, but reader should note it is not generally
true.

Now observe that closeness centrality has nine empty rows in its binary heatmap
leading to 55\% concordance and has only 31\% agreement overall. This pattern is
followed closely by betweenness centrality. These results suggest that closeness
and betweenness centrality may be less versatile in measuring importance. It is
also evident that degree centrality is the most mutually associated measure,
indicating the ubiquitous role that degree plays in structural importance. But
if a centrality measure does not differ significantly from degree centrality,
then it does not offer any new insight.

In our comparative analysis of centrality, we used each centrality measure to
answer only the most basic question: \emph{who is the most important?} The
answer to this question is relative to the centrality measure, and hence, the
top-ranked vertex by a centrality measure is, by virtue of definition, the most
important. But the notion of ``most'' central under even just a single
centrality measure is not well defined because there can be ties for the top
rank; moreover, ranks can be real numbers, and hence, the separation between the
top and next rank may be arbitrarily close. Nonetheless, nominating the most
important node in a network, after handling ties and numeric precision, is a
convenient point of analysis. To check the robustness on agreement across the
centrality measures, we will compare the top $k$ vertices ranked by each measure
using the Jaccard index~\cite{bib:jaccard} (see
Appendix~\ref{sec:jaccard_similarity} for a brief review of the Jaccard index).

Figure~\ref{fig:jaccard_summary} illustrates the Jaccard similarity between the
six centrality measures using the top 10 ($k=10$) ranked vertices in each of the
graphs from Table~\ref{tbl:graphdata}. The binary heatmap for a centrality
measure in Figure~\ref{fig:jaccard_heatmap} indicates for each test graph, which
other measure is most similar to it. Ties are broken by choosing the competing
measure that is first to rank a vertex in its top 10 list higher than the other
competitors (see Appendix~\ref{sec:jaccard_similarity} for full details). Hence,
each row in the binary heatmap has at most one entry. It follows that the column
with the most entries indicates the measure that is most similar overall to that
centrality measure. The heatmap in Figure~\ref{fig:jaccard_similarity}
summarizes the overall Jaccard similarity for each pair of centrality measures;
specifically, each row tallies the column sums from the corresponding binary
heatmap in Figure~\ref{fig:jaccard_heatmap}. An interested reader can find in
Appendix~\ref{sec:jaccard_similarity} a tabulation
(Table~\ref{tbl:best_jaccard}) of the six centrality measures with their closest
competitor by Jaccard index and the full table
(Table~\ref{tbl:allpairs_jaccard}) of all-pairs Jaccard similarity.

\begin{figure}[H]
  \centering
  \begin{subfigure}[t]{.6\linewidth}
    \centering
  \begin{tikzpicture}[
      nodes={
        font=\scriptsize
      },
      dotmatrix/.style={
        rectangle, draw, inner sep=0pt,
        matrix of nodes,
        nodes in empty cells,
        row sep=-\pgflinewidth,
        column sep=-\pgflinewidth,
        nodes={
          inner sep=0pt,
          outer sep=0pt,
          minimum width=8,
          minimum height=8,
          anchor=center,
        }
      }
    ]

    \matrix(tc) [%
      dotmatrix,
    ]{
      {} & {} & {} & \mdot & {} \\
      {} & \mdot & {} & {} & {} \\
      {} & {} & {} & \mdot & {} \\
      {} & {} & {} & \mdot & {} \\
      {} & {} & {} & \mdot & {} \\
      {} & {} & \mdot & {} & {} \\
      {} & {} & {} & \mdot & {} \\
      {} & {} & \mdot & {} & {} \\
      {} & {} & {} & \mdot & {} \\
      {} & {} & {} & \mdot & {} \\
      {} & {} & \mdot & {} & {} \\ 
      {} & {} & {} & \mdot & {} \\
      {} & {} & {} & \mdot & {} \\
      {} & {} & {} & \mdot & {} \\
      {} & {} & {} & \mdot & {} \\
      {} & {} & {} & \mdot & {} \\
      {} & {} & {} & \mdot & {} \\
      {} & {} & {} & \mdot & {} \\
      {} & {} & \mdot & {} & {} \\
      {} & {} & {} & \mdot & {} \\
    };
    \foreach \i in {1, ..., 20} {
      \node [label={[label distance=0pt,font=\scriptsize]left:\i}] at (tc-\i-1) {};
    }
    \foreach \x [count=\i] in {\bc, \cc, \dc, \ev, \pr} {
      \node [
        pin={[pin distance=1ex]above:{}},
        label={[label distance=0pt,
            font=\tiny,
            xshift=-2pt,
            anchor=south west,
            rotate=45]above:\x}
      ] at (tc-1-\i) {};
    }

    \matrix(bc) [right=5pt of tc] [%
      dotmatrix
    ]{
      {} & \mdot & {} & {} & {} \\
      {} & \mdot & {} & {} & {} \\
      {} & \mdot & {} & {} & {} \\
      {} & {} & {} & {} & \mdot \\
      {} & {} & {} & {} & \mdot \\
      {} & \mdot & {} & {} & {} \\
      {} & {} & {} & {} & \mdot \\
      {} & \mdot & {} & {} & {} \\
      {} & {} & \mdot & {} & {} \\
      {} & {} & {} & {} & \mdot \\
      {} & \mdot & {} & {} & {} \\ 
      {} & \mdot & {} & {} & {} \\
      {} & {} & {} & {} & \mdot \\
      {} & {} & {} & {} & \mdot \\
      {} & {} & {} & {} & \mdot \\
      {} & {} & {} & {} & \mdot \\
      {} & {} & {} & {} & \mdot \\
      {} & {} & {} & {} & \mdot \\
      {} & {} & \mdot & {} & {} \\
      {} & {} & {} & {} & {} \\
    };
    \foreach \x [count=\i] in {\tc, \cc, \dc, \ev, \pr} {
      \node [
        pin={[pin distance=1ex]above:{}},
        label={[label distance=0pt,
            font=\tiny,
            xshift=-2pt,
            anchor=south west,
            rotate=45]above:\x}
      ] at (bc-1-\i) {};
    }

    \matrix(cc) [right=5pt of bc] [%
      dotmatrix
    ]{
      {} & \mdot & {} & {} & {} \\
      {} & \mdot & {} & {} & {} \\
      {} & \mdot & {} & {} & {} \\
      {} & {} & \mdot & {} & {} \\
      {} & {} & \mdot & {} & {} \\
      {} & \mdot & {} & {} & {} \\
      {} & {} & \mdot & {} & {} \\
      {} & \mdot & {} & {} & {} \\
      {} & {} & \mdot & {} & {} \\
      {} & {} & \mdot & {} & {} \\
      {} & \mdot & {} & {} & {} \\ 
      {} & \mdot & {} & {} & {} \\
      {} & \mdot & {} & {} & {} \\
      {} & {} & \mdot & {} & {} \\
      \mdot & {} & {} & {} & {} \\
      {} & {} & \mdot & {} & {} \\
      {} & \mdot & {} & {} & {} \\
      {} & \mdot & {} & {} & {} \\
      {} & \mdot & {} & {} & {} \\
      {} & {} & {} & {} & {} \\
    };
    \foreach \x [count=\i] in {\tc, \bc, \dc, \ev, \pr} {
      \node [
        pin={[pin distance=1ex]above:{}},
        label={[label distance=0pt,
            font=\tiny,
            xshift=-2pt,
            anchor=south west,
            rotate=45]above:\x}
      ] at (cc-1-\i) {};
    }

    \matrix(dc) [right=5pt of cc] [%
      dotmatrix
    ]{
      {} & {} & {} & \mdot & {} \\
      {} & {} & {} & \mdot & {} \\
      {} & {} & {} & {} & \mdot \\
      {} & {} & \mdot & {} & {} \\
      {} & {} & {} & {} & \mdot \\
      {} & {} & {} & {} & \mdot \\
      {} & {} & \mdot & {} & {} \\
      {} & {} & {} & {} & \mdot \\
      {} & {} & {} & {} & \mdot \\
      {} & {} & {} & {} & \mdot \\
      {} & {} & {} & {} & \mdot \\ 
      {} & {} & {} & {} & \mdot \\
      {} & {} & {} & {} & \mdot \\
      \mdot & {} & {} & {} & {} \\
      {} & {} & {} & {} & \mdot \\
      {} & {} & \mdot & {} & {} \\
      {} & {} & {} & \mdot & {} \\
      {} & \mdot & {} & {} & {} \\
      {} & {} & {} & {} & \mdot \\
      {} & {} & {} & {} & \mdot \\
    };
    \foreach \x [count=\i] in {\tc, \bc, \cc, \ev, \pr} {
      \node [
        pin={[pin distance=1ex]above:{}},
        label={[label distance=0pt,
            font=\tiny,
            xshift=-2pt,
            anchor=south west,
            rotate=45]above:\x}
      ] at (dc-1-\i) {};
    }

    \matrix(ev) [right=5pt of dc] [%
      dotmatrix
    ]{
      \mdot & {} & {} & {} & {} \\
      {} & {} & {} & \mdot & {} \\
      {} & {} & {} & \mdot & {} \\
      \mdot & {} & {} & {} & {} \\
      \mdot & {} & {} & {} & {} \\
      \mdot & {} & {} & {} & {} \\
      {} & {} & {} & \mdot & {} \\
      {} & {} & {} & \mdot & {} \\
      \mdot & {} & {} & {} & {} \\
      \mdot & {} & {} & {} & {} \\
      {} & {} & {} & \mdot & {} \\ 
      \mdot & {} & {} & {} & {} \\
      \mdot & {} & {} & {} & {} \\
      \mdot & {} & {} & {} & {} \\
      \mdot & {} & {} & {} & {} \\
      \mdot & {} & {} & {} & {} \\
      {} & {} & {} & \mdot & {} \\
      \mdot & {} & {} & {} & {} \\
      \mdot & {} & {} & {} & {} \\
      \mdot & {} & {} & {} & {} \\
    };
    \foreach \x [count=\i] in {\tc, \bc, \cc, \dc, \pr} {
      \node [
        pin={[pin distance=1ex]above:{}},
        label={[label distance=0pt,
            font=\tiny,
            xshift=-2pt,
            anchor=south west,
            rotate=45]above:\x}
      ] at (ev-1-\i) {};
    }

    \matrix(pr) [right=5pt of ev] [%
      dotmatrix
    ]{
      {} & {} & {} & \mdot & {} \\
      {} & {} & {} & \mdot & {} \\
      {} & {} & {} & \mdot & {} \\
      {} & \mdot & {} & {} & {} \\
      {} & \mdot & {} & {} & {} \\
      {} & {} & {} & \mdot & {} \\
      {} & \mdot & {} & {} & {} \\
      {} & {} & {} & \mdot & {} \\
      {} & {} & {} & \mdot & {} \\
      {} & {} & {} & \mdot & {} \\
      {} & \mdot & {} & {} & {} \\ 
      {} & {} & {} & \mdot & {} \\
      {} & {} & {} & \mdot & {} \\
      {} & {} & \mdot & {} & {} \\
      {} & {} & {} & \mdot & {} \\
      {} & \mdot & {} & {} & {} \\
      {} & \mdot & {} & {} & {} \\
      {} & \mdot & {} & {} & {} \\
      {} & {} & {} & \mdot & {} \\
      {} & {} & {} & \mdot & {} \\
    };
    \foreach \x [count=\i] in {\tc, \bc, \cc, \dc, \ev} {
      \node [
        pin={[pin distance=1ex]above:{}},
        label={[label distance=0pt,
            font=\tiny,
            xshift=-2pt,
            anchor=south west,
            rotate=45]above:\x}
      ] at (pr-1-\i) {};
    }

    \node [below=2pt of tc] {\tc};
    \node [below=2pt of bc] {\bc};
    \node [below=2pt of cc] {\cc};
    \node [below=2pt of dc] {\dc};
    \node [below=2pt of ev] {\ev};
    \node [below=2pt of pr] {\pr};
  \end{tikzpicture}
  \subcaption{\label{fig:jaccard_heatmap}}
  \end{subfigure}
  \centering
  \begin{subfigure}[t]{.4\linewidth}
    \centering
    \begin{tikzpicture}[
        every node/.style = {
          outer sep=0,
          minimum size=20pt,
          align=center,
          font=\footnotesize
        },
        scale=0.70
      ]
      \foreach \x [count=\i] in {
        {20,0,1,4,15,0},  
        {0,20,7,2,0,10},  
        {1,11,20,7,0,0},  
        {1,1,3,20,3,12},  
        {14,0,0,6,20,0},  
        {0,7,1,12,0,20},  
      }{
        \foreach \y [count=\j,
          evaluate=\y as \h using \y*100/20.0,
          evaluate=\y as \l using \y/20.0] in \x {
          \ifthenelse{\y=20}{
            \node[fill=black!\h!white, text=white] at (\j, -\i) {\y};
          }{
            \node[fill=black!\h!white, text=black] at (\j, -\i) {\y};
          }
        }
      }

      %
      \def\array{\tc,\bc,\cc,\dc,\ev,\pr}
      \foreach \x [count=\i, evaluate=\x as \k using \i+1] in \array {
        \xdef\n{\k}
      }

      \foreach \c [count=\i] in \array  {
        \node[draw=none] at (0, -\i) {\c};

        \node[draw=none] at (\i, -\n) {\c};
      }

      \node[draw=none] (A) at (1,-1) {};
      \node[draw=none] (D) at (6,-6) {};
      \draw (A.north west) rectangle (D.south east);
      
    \end{tikzpicture}
    \subcaption{\label{fig:jaccard_similarity}}
  \end{subfigure}
\caption{\label{fig:jaccard_summary}
  \subref{fig:jaccard_heatmap} Jaccard-based binary heatmap ($20\times 5$) for
  each centrality measure (labeled below); rows correspond to graphs numbered
  1--20 from Table~\ref{tbl:graphdata}, and columns are labeled by competitor
  centrality measure. \subref{fig:jaccard_similarity} Heatmap of overall Jaccard
  similarities between centrality measures (column sums
  from~\subref{fig:jaccard_heatmap}).}

\end{figure}

For each centrality measure in Figure~\ref{fig:jaccard_similarity}, the most
similar measure to it (row-wise) coincides with the most similar in
Figure~\ref{fig:similarity}. This suggests that the top-ranked vertex was a
reasonable proxy in comparing measures. But observe that the overall Jaccard
similarity relationship is not symmetric between the measures. The results in
Figure~\ref{fig:jaccard_similarity} demonstrate that a measure $j$ can be the
most similar to $k$ but $k$ may not be the most similar to $j$. This is evident
between betweenness and closeness centralities. It is clear in
Figure~\ref{fig:jaccard_summary} that each centrality has a competitor measure
that is predominantly more similar to it than the others. The degree of
similarity is indicated by the count given in
Figure~\ref{fig:jaccard_similarity}. From this data, we again surmise that
eigenvector and triangle centrality more often agree on the most central
vertices than with the other measures. Our results show that closeness is most
similar to betweenness, betweenness is most similar to PageRank, PageRank is
most similar to degree, and degree is most similar to PageRank. These results
also support our earlier observation that degree centrality is often similar to
the other measures as indicated by the degree centrality column in
Figure~\ref{fig:jaccard_similarity}.

Overall, we conclude that triangle centrality finds the central vertex in many
of the same graphs as other measures and aligns with the consensus when there is
no ambiguity. Yet, it uniquely identified central vertices in 30\% of the
graphs, which suggests it is not overly specialized. It is therefore a versatile
measure that is able to center on characteristics that are missed by other
measures. This makes it a valuable and complementary tool for graph centrality
analysis. Moreover, it is asymptotically faster to compute on sparse graphs than
the other measures with the exception of degree centrality.

\section{\label{sec:algorithm}Algorithm}
The first task in computing triangle centrality is to get the triangle count of
every vertex in the graph. This task can be achieved by any efficient triangle
counting algorithm in $O(m\bar\delta)$ time. See~\cite{bib:ortmann_brandes2014,
  bib:latapy2008} for a review. The second task is summing the triangle counts
from neighbors of a vertex, with separate sums for triangle and non-triangle
neighbors. This requires first identifying the specific triangle neighbors of
each vertex. Then, for every vertex, we need to calculate the core triangle sum,
$\sum_{u\in N_\triangle[v]} \triangle(u)$, and the non-core triangle sum,
$\sum_{w\in \{N(v) \setminus N_\triangle(v)\}} \triangle(w)$. Given these sums
and $\triangle(G)$, the triangle centrality follows. This procedure leads to our
elementary algorithm in Figure~\ref{fig:tricent_elementary}. We will refer to
the elementary steps in Figure~\ref{fig:tricent_elementary} throughout the
remaining algorithm descriptions and results.

\begin{figure}[H]
\centering
\fbox{%
\begin{minipage}{.90\textwidth}
\begin{enumerate}
\item\label{itm:tricent_step1}
  For each vertex $v\in V$, compute the local triangle count, $\triangle(v)$,
  and update $\triangle(G)$
\item\label{itm:tricent_step2}
  For each vertex $v\in V$, find and store $v$'s triangle neighbors
\item\label{itm:tricent_step3}
  For each vertex $v\in V$ do
  \begin{enumerate}[label = \roman*., ref=\theenumi.\roman*]
  \item\label{itm:tricent_step3_i}
    Calculate the core triangle sum, $x = \sum_{u\in
      N_\triangle[v]} \triangle(u)$
  \item \label{itm:tricent_step3_ii}
    Sum all neighbor triangle counts, $s = \sum_{u\in N(v)} \triangle(u)$
  \item \label{itm:tricent_step3_iii}
    Get non-core triangle sum,
    $\sum_{w\in \{N(v)\setminus N_\triangle(v)\}} \triangle(w) =
    s - x + \triangle(v)$
  \item \label{itm:tricent_step3_iv}
    Compute the triangle centrality $\tc(v) =
    \frac{\frac{1}{3}\sum_{u\in N_\triangle[v]} \triangle(u)
      + \sum_{w\in \{N(v)\setminus N_\triangle(v)\}}
      \triangle(w)}{\triangle(G)}$
  \end{enumerate}
\end{enumerate}
\end{minipage}
}
\caption{\label{fig:tricent_elementary}Triangle centrality elementary
  algorithm.}
\end{figure}

Since a triangle is symmetric, it can be reported by the lowest degree vertex in
the triangle. Hence, with degree-ordering, all triangles can be counted and
listed in optimal time. This has been well known since 1985 when Chiba and
Nishizeki~\cite{bib:chiba_nishizeki1985} listed triangles in $m\bar\delta \le
2ma = O(m\sqrt{m})$ time. Our \textproc{triangleneighbor} algorithm given in
Procedure~\ref{alg:trineighbor} applies the ideas
from~\cite{bib:chiba_nishizeki1985} to efficiently get triangle counts and list
triangle neighbors.

\setcounter{algorithm}{0}
\begin{algorithm}[H]
  \floatname{algorithm}{Procedure}
  \caption{\label{alg:trineighbor}TriangleNeighbor}
  \begin{algorithmic}[1]
    \Require $L$, array to hold triangle neighbor lists for each $v\in V$.
    \Require $T$, zero-initialized array of size $n$.
    \For{$v\in V$}
      \For{$u\in N_H(v)$}
        \State set $T(u) := 1$
      \EndFor
      \For{$u\in N_L(v))$}
        \label{ln:trineighbor_start}
        \For{$w\in N_H(u)$}
          \If{$T(u)$ is $1$}
            \State increment
             $\triangle(v),\triangle(u),\triangle(w),\triangle(G)$
            \State $L(v) \gets u, L(v) \gets w$
            \State $L(u) \gets v, L(v) \gets w$
            \State $L(w) \gets v, L(v) \gets u$
          \EndIf
        \EndFor
      \EndFor
      \label{ln:trineighbor_end}  
      \For{$u\in N_H(v)$}
        \State set $T(u) := 0$
      \EndFor
    \EndFor
  \end{algorithmic}
\end{algorithm}

First observe that degree-ordering prevents directed cycles so that each $\{u,
v, w\}$ triangle is a directed acyclic subgraph, like the one depicted in
Figure~\ref{fig:ordered_triangle} corresponding to the ordering $\pi(u) < \pi(v)
< \pi(w)$. Observe that Procedure~\ref{alg:trineighbor} detects each triangle
once from the $(u,v)$ oriented edge. This leads to the following result.

\begin{figure}[H]
\centering
\begin{tikzpicture}
  [scale=.75, inner sep=1.5, minimum size=14, node distance=.5,
  vertex/.style={circle,draw=black,thin},
  edge/.style={thin}]
  \node [vertex] (u) {u};
  \node [vertex] (v) [above right=of u] {v};
  \node [vertex] (w) [below right=of u] {w};
  \draw [edge,->,>=latex] (u) to (v);
  \draw [edge,->,>=latex] (u) to (w);
  \draw [edge,->,>=latex] (v) to (w);
\end{tikzpicture}
\caption{\label{fig:ordered_triangle}Degree-ordered triangle pattern.}
\end{figure}

\begin{lemma}
  \label{lem:trineighbor}
  It is possible to count triangles and find triangle neighbors in
  $O(m\bar\delta)$ time and $O(m+n)$ space for all vertices in $G$.
\end{lemma}

\begin{proof}
  Procedure~\ref{alg:trineighbor} achieves the claim as follows.

  Each adjacency set $N(v)$ can be partitioned into subsets $N_L(v),N_H(v)$ in
  linear time without additional space by maintaining a back-pointer and
  swapping elements. The operations in
  lines~\ref{ln:trineighbor_start}--\ref{ln:trineighbor_end} are carried out
  only for the neighborhood set of the lower-degree vertex in each edge, taking
  $\sum_{(v,u)\in E} \min\{d(v),d(u)\}$ time and hence $O(m\bar\delta)$
  time. The remaining loops take $\sum_{v\in V} d(v) = O(m)$ time. All other
  operations take $O(1)$ time. Therefore, in total, it takes $O(m\bar\delta)$
  time and $O(m+n)$ space.
\end{proof}

We use Procedure~\ref{alg:trineighbor} to find triangle neighbors
(Step~\ref{itm:tricent_step2} of Figure~\ref{fig:tricent_elementary}) in our
main triangle centrality algorithm described next. An alternative triangle
neighbor algorithm using a single pass that may be more amenable to parallel
implementations is described in Appendix~\ref{sec:trineighbor_alt}.

\subsection{\label{sec:main_alg}Main Algorithm}
Triangle centrality requires the partitioned triangle counts between triangle
and non-triangle neighbors, which can be determined in $O(m\bar\delta)$ time
according to Lemma~\ref{lem:trineighbor}. We can now assert
Theorem~\ref{thm:tricent} to establish the complexity of computing triangle
centrality. Our Algorithm~\ref{alg:tricent_main} serves as a basis for an
efficient implementation.

\setcounter{algorithm}{1}
\begin{algorithm}[H]
\caption{\label{alg:tricent_main}}
  \begin{algorithmic}[1]
    \Require $X$, array of size $n$ indexed by $v\in V$, initialized to zero
    \Require $L$, array to hold triangle neighbor lists for each $v\in V$.
    \State Call \Call{TriangleNeighbor}{}
    \Comment Steps~\ref{itm:tricent_step1},~\ref{itm:tricent_step2}
    \For{$v\in V$}
      \Comment Step~\ref{itm:tricent_step3}
      \For{$u\in L(v)$}
        \State set $X(v) := X(v) + \triangle(u)$
      \EndFor
      \State set $x := X(v) + \triangle(v)$
      \Comment core triangle sum, Step~\ref{itm:tricent_step3_i}
      \For{$u\in N(v)$}
        \State set $s := s + \triangle(u)$
        \Comment neighbor triangle sum, Step~\ref{itm:tricent_step3_ii}
      \EndFor
      \State set $y := s - x + \triangle(v)$
      \Comment non-core triangle sum, Step~\ref{itm:tricent_step3_iii}
      \State output $\tc(v) = \bigl(\frac{1}{3}x + y\bigr)/\triangle(G)$
      \Comment Definition~\ref{def:tricent}, Step~\ref{itm:tricent_step3_iv}
    \EndFor
  \end{algorithmic}
\end{algorithm}

\begin{theorem}
  \label{thm:tricent}
  Triangle centrality can be computed in $O(m\bar\delta)$ time and $O(m+n)$ space
  for all vertices in $G$.
\end{theorem}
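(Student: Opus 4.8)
The plan is to prove the theorem by analyzing Algorithm~\ref{alg:tricent_basic}, establishing its correctness and then separately bounding its time and its space. I would organize the argument around the two passes over $V$: the first pass performs degree-ordered triangle listing and simultaneously accumulates the triangle counts and the triangle neighborhoods, and the second pass assembles the centrality value of each vertex from the sums in Steps~\ref{itm:tricent_step3_i}--\ref{itm:tricent_step3_iv}.

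For correctness of the first pass I would invoke the standard fact that degree-ordered enumeration lists every triangle exactly once: a triangle $\{a,b,c\}$ with $\pi(a)<\pi(b)<\pi(c)$ is discovered precisely at the iteration $v=a$, $u=b$, $w=c$, where the guard $\{u,w\}\in E$ together with $\pi(v)<\pi(u)<\pi(w)$ holds. Incrementing $\triangle(v),\triangle(u),\triangle(w)$ at each discovery makes $\triangle(x)$ count the triangles through $x$, and incrementing $\triangle(G)$ once per discovery yields the global count. Since $u\in N_\triangle(v)$ exactly when $v$ and $u$ share a common neighbor, each listed triangle certifies all three of its edges as triangle edges; recording the corresponding endpoints in one another's lists, deduplicated by the hash structure of Listing~\ref{lst:hash_pair_trineighbor}, therefore accumulates exactly $T(v)=N_\triangle(v)$ (Step~\ref{itm:tricent_step2}). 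I would then verify the second pass by direct substitution: with $x=\triangle(v)+\sum_{u\in T(v)}\triangle(u)=\sum_{u\in N^+_\triangle(v)}\triangle(u)$ and $s=\sum_{u\in N(v)}\triangle(u)$, the quantity $y=s-x+\triangle(v)$ collapses to $\sum_{w\in\{N(v)\setminus N_\triangle(v)\}}\triangle(w)$ because $N_\triangle(v)\subseteq N(v)$, so the emitted value $(\tfrac{1}{3}x+y)/\triangle(G)$ is exactly Equation~\ref{eqn:tricent}.

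For the running time, the first pass executes $\sum_{v\in V}|N_\pi(v)|\,d(v)$ inner iterations, each performing one $O(1)$ edge-membership test $\{u,w\}\in E$ via hashing. Invoking the bound $|N_\pi(v)|=O(\sqrt{m})$ from Section~\ref{sec:notation} gives $\sum_{v}|N_\pi(v)|\,d(v)\le O(\sqrt{m})\sum_{v}d(v)=O(\sqrt{m})\cdot 2m=O(m\sqrt{m})$. The second pass costs $\sum_{v}\bigl(|T(v)|+d(v)\bigr)=O(m)$, since $|T(v)|=|N_\triangle(v)|\le d(v)$ and $\sum_{v}d(v)=2m$, and preprocessing the $N_\pi(v)$ sets is $O(m)$ by Claim~\ref{clm:abbrev_adjacency}; the total is therefore $O(m\sqrt{m})$. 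For space, the abbreviated adjacency sets use $O(m)$ (Claim~\ref{clm:abbrev_adjacency}), the triangle-neighbor lists use $\sum_{v}|N_\triangle(v)|\le 2m=O(m)$, the count array $\triangle(\cdot)$ uses $O(n)$, and the hashed edge dictionary uses $O(m)$, giving $O(m+n)$ overall.

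I expect the main obstacle to be the running-time bound of the first pass, and within it the justification that each edge-membership query is $O(1)$. This is where hashing enters: a plain hash table yields expected $O(1)$ per query, hence $O(m\sqrt{m})$ expected time, whereas a static perfect-hash dictionary on the edge set, built in $O(m)$ space, delivers worst-case $O(1)$ queries; the fully deterministic, hash-free realization of the same bounds is deferred to Algorithm~\ref{alg:tricent_main} and Theorem~\ref{thm:tricent_main}. The only other delicate point is keeping the triangle-neighbor lists simultaneously complete and duplicate-free within $O(m)$ total space even though a single edge may lie in $\Theta(\sqrt{m})$ triangles, which the per-edge markers of the hash-based listing resolve.
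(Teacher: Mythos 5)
Your proposal is correct and follows essentially the same route as the paper's own proof: analyze Algorithm~\ref{alg:tricent_basic}, bound the first pass by $\sum_v O(\sqrt{m})\cdot d(v)=O(m\sqrt{m})$ using $\lvert N_\pi(v)\rvert=O(\sqrt{m})$ and $O(1)$ hashed edge tests, bound the second pass by $O(m)$ via $\lvert T(v)\rvert\le d(v)$, and account for space exactly as the paper does. Your added correctness details (unique discovery of each ordered triangle and the algebraic collapse of $y=s-x+\triangle(v)$) are sound and merely make explicit what the paper leaves implicit.
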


\begin{proof}
  Algorithm~\ref{alg:tricent_main} achieves this. The triangle counts and
  triangle neighbors are computed by \textproc{triangleneighbor}. It follows
  from Lemma~\ref{lem:trineighbor} that this takes $O(m\bar\delta)$ time.

  Summing triangle counts from triangle and non-triangle neighbors each takes
  $\sum_{v\in V} d(v) = 2m$ time.

  Each $L_v$ triangle neighbor list takes $d(v)$ space, thus overall these take
  $O(m)$ space. An additional $O(n)$ space is needed to hold triangle counts and
  the triangle core sum for every vertex.
\end{proof}

Next, we describe a linear algebraic algorithm for triangle centrality. For
sparse graphs, this algebraic algorithm matches the bounds of the combinatorial
algorithms introduced earlier. But the algebraic algorithm may have practical
benefits because it can leverage highly optimized matrix libraries. In recent
years, there has been interest in applying decades of experience with matrix
computation in the design of linear algebraic graph
algorithms~\cite{bib:kepner_gilbert2011, bib:buluc_madduri2011, bib:bucker2014,
  bib:yang2015, bib:azad_buluc2017, bib:yang2022, bib:burkhardt2021},
culminating into the GraphBLAS specification and implementations that can
already compute the graph triangle matrix~\cite{bib:davis2018, bib:davis2019,
  bib:suitesparsegraphblas, bib:buluc_graphblas2017, bib:yang2018}.

\subsection{\label{sec:algebraic_alg}Algebraic Algorithm}
Given the graph triangle matrix $\mat{T}$, the linear algebraic triangle
centrality defined in Proposition~\ref{prop:tricent_algebraic} is simple to
compute, requiring just two matrix-vector products, two matrix additions, and an
inner-product. Our Algorithm~\ref{alg:tricent_algebraic} computes it in optimal
time and linear space for sparse graphs if $\mat{T}$ can be constructed in
$O(m\bar\delta)$ time and $O(m+n)$ space.

\begin{algorithm}[H]
  \caption{\label{alg:tricent_algebraic}}
  \begin{algorithmic}[1]
    \Require $\mat{A}$, adjacency matrix in sparse matrix representation
    \Require $\mat{T}=\mat{A}^2\circ \mat{A}$, graph triangle matrix in sparse
    matrix representation
    \State Create binary matrix $\binr{\mat{T}}$ 
    \State set $\mat{X} := 3\mat{A} - 2\binr{\mat{T}} + \mat{I}$
    \State set $\vec{y} := \mat{T}\vec{1}$
    \State set $k := \vec{1}^\transpose \vec{y}$
    \State output $\cntvec{\tc} := \frac{1}{k}\mat{X}\vec{y}$
    \Comment Proposition~\ref{prop:tricent_algebraic}
  \end{algorithmic}
\end{algorithm}

\begin{claim}
  \label{clm:triangle_matrix}
  A sparse matrix representing $\mat{T}=\mat{A}^2\circ \mat{A}$ can be built in
  $O(m\bar\delta)$ time and $O(m+n)$ space.
\end{claim}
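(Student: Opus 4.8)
The plan is to exploit the fact that the Hadamard product $A^2\circ A$ keeps only those entries of $A^2$ that coincide with edges of $G$, so $T$ inherits the sparsity pattern of $A$ and has at most $2m$ nonzeros. Concretely, for an edge $\{i,j\}\in E$ we have $t(i,j)=(A^2)_{ij}=|N(i)\cap N(j)|$, which is exactly the number of triangles incident to that edge, while $t(i,j)=0$ whenever $\{i,j\}\notin E$. Storing $T$ in a sparse representation therefore costs the same $O(m+n)$ space as storing $A$, and the task reduces to computing the per-edge triangle counts without ever forming the dense product $A^2$.

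First I would allocate the sparsity structure of $T$ as a copy of that of $A$ with all values initialized to zero; by Claim~\ref{clm:abbrev_adjacency} the abbreviated adjacency sets $N_\pi(v)$ are already available in $O(m)$ time and space. I would then enumerate triangles using the degree-ordered neighbor-pairing employed in Algorithm~\ref{alg:tricent_basic}: for each $v$ and each higher-ordered $u\in N_\pi(v)$, scan $w\in N(v)$ and test whether $\{u,w\}\in E$. Whenever a triangle $\{v,u,w\}$ is detected I would increment the three entries $t(v,u)$, $t(v,w)$, and $t(u,w)$. The number of examined pairs is $\sum_{v\in V}|N_\pi(v)|\cdot|N(v)| \le \sqrt{m}\sum_{v\in V} d(v) = O(m\sqrt{m})$, since $|N_\pi(v)|=O(\sqrt{m})$, which matches the optimal triangle-listing bound.

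The total number of increments is $O(\triangle(G))$, and because $G$ contains $O(m\sqrt{m})$ triangles this adds only $O(m\sqrt{m})$ further work, so the overall running time stays $O(m\sqrt{m})$. The main obstacle is carrying out each edge-membership test $\{u,w\}\in E$ and each entry update in $O(1)$ time: a naive search for the position of $\{u,w\}$ inside a row of the sparse structure would inflate the bound. I would resolve this exactly as elsewhere in the paper, with a hash table on edges, or—to remain deterministic and within linear space—the array-based indexing furnished by the \textproc{triangleneighbor} construction of Theorem~\ref{thm:trineighbor}, which returns the storage location of $t(u,w)$ in constant time. Combining the zero-initialization in $O(m)$, the enumeration in $O(m\sqrt{m})$, and the $O(m\sqrt{m})$ constant-time updates yields $O(m\sqrt{m})$ time and $O(m+n)$ space, establishing the claim.
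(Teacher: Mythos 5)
Your proposal is correct and rests on the same two pillars as the paper's proof: the Hadamard product confines $T$ to the sparsity pattern of $A$, so there are at most $2m$ nonzeros, and the work reduces to an $O(m\sqrt{m})$ degree-ordered triangle enumeration. The mechanics differ in a way worth noting. You build the literal $A^2\circ A$: for each detected triangle $\{v,u,w\}$ you increment the three edge entries $t(v,u)$, $t(v,w)$, $t(u,w)$, so each stored value is the per-edge triangle multiplicity $\lvert N(i)\cap N(j)\rvert$. The paper instead runs \textproc{triangleneighbor} (or a hash-based variant) once, then in a second $O(m)$ pass emits a weighted edge $(v,u,\triangle(u))$ for every triangle-neighbor pair and assembles the sparse matrix from that batch edge list at the end. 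Deferring assembly to an edge list is how the paper sidesteps the one issue you rightly flag, namely locating the storage slot of $t(u,w)$ in $O(1)$ during enumeration; your in-place accumulation needs either the edge hash table or the merge-based intersection of Procedure~\ref{alg:trineighbor}, where the positions of $w$ in $N_\pi(v)$ and $N_\pi(u)$ fall out of the two-pointer scan. Be aware that the neighbor-pairing enumeration you actually describe (scan $w\in N(v)$ and test $\{u,w\}\in E$) does not by itself hand you the position of $w$ inside $N_\pi(u)$, so for the deterministic array-based route you should commit to the intersection-based enumeration rather than treating the two as interchangeable. That caveat aside, your construction produces exactly the entry values that $A^2\circ A$ prescribes, and both arguments arrive at the claimed $O(m\sqrt{m})$ time and $O(m+n)$ space.
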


\begin{proof}
  First, count triangles and for each unique $(u,v)$ triangle edge, update the
  list of triangle neighbors for $u$ with $v$ and vice versa. This can be
  accomplished in $O(m\bar\delta)$ time using \textproc{triangleneighbor}
  according to Lemma~\ref{lem:trineighbor}. At the end of counting, iterate over
  the triangle neighbor lists and output weighted edges
  $(v,u,\triangle(u))$. Since each $v$ has at most $d(v)$ triangle neighbors,
  then writing the edges takes $\sum_{v\in V} d(v) = O(m)$ time. Finally, build
  a sparse matrix $\mat{T}$ on these weighted edges, which takes $O(m)$
  time. This $\mat{T}$ is equivalent to the graph triangle matrix produced by
  $\mat{A}^2\circ \mat{A}$. Storing the triangle counts to complete the triangle
  neighbor edge list takes $O(n)$ space. All triangle neighbors and hence
  non-zeros in $\mat{T}$ take $O(m)$ space. Thus, in total, $\mat{T}$ can be
  built in $O(m\bar\delta)$ time and $O(m+n)$ space.
\end{proof}

\begin{theorem}
  \label{thm:tricent_algebraic}
  There is a linear algebraic algorithm that computes the triangle centrality in
  $O(m\bar\delta)$ time and $O(m+n)$ space for all vertices in $G$ given a
  sparse matrix $\mat{A}$.
\end{theorem}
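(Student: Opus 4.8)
The plan is to prove the theorem by accounting for Algorithm~\ref{alg:tricent_algebraic} line by line, leaning on Claim~\ref{clm:triangle_matrix} for the only superlinear step. Correctness of the underlying formula is already secured by the algebraic derivation culminating in Definition~\ref{def:tricent_algebraic}, so what remains is purely a resource analysis: I must show that, given a sparse $A$, the graph triangle matrix $T$ can be assembled and then combined into $\vec{C}=(3A-2\check{T}+I)T\vec{1}/(\vec{1}^\top T\vec{1})$ within $O(m\sqrt{m})$ time and $O(m+n)$ space.

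First I would invoke Claim~\ref{clm:triangle_matrix} to obtain the sparse matrix $T=A^2\circ A$ in $O(m\sqrt{m})$ time and $O(m+n)$ space; this is the sole place where the $\sqrt{m}$ factor enters. The remaining steps must then be shown to be linear, and the key structural fact enabling this is that $T$ has no more nonzeros than $A$ because of the Hadamard operation, so $A$, $\check{T}$, and $T$ each carry $O(m)$ nonzeros while $I$ carries $n$. Binarizing $T$ into $\check{T}$ is a single pass over the nonzeros of $T$ in $O(m)$ time. Forming $X=3A-2\check{T}+I$ is a sum of three operands each with $O(m+n)$ nonzeros, so $X$ also has $O(m+n)$ nonzeros and is built in $O(m+n)$ time. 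The vector $\vec{y}=T\vec{1}$ is just the row sums of $T$, computed in one pass in $O(m)$ time; the inner product $k=\vec{1}^\top\vec{y}$ costs $O(n)$; the sparse matrix–vector product $X\vec{y}$ costs $O(m+n)$ (proportional to the nonzeros of $X$); and the final scaling by $1/k$ costs $O(n)$.

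Summing the post-construction work gives $O(m+n)$, so the construction of $T$ dominates and the total time is $O(m\sqrt{m})$. Every stored object — $A$, $\check{T}$, $T$, $X$, $\vec{y}$, and $\vec{C}$ — occupies $O(m+n)$ in a sparse representation, matching the space bound of Claim~\ref{clm:triangle_matrix}. The hard part here is not the per-line bounds, which are routine, but ensuring that the sparse addition forming $X$ truly runs in time proportional to the combined number of nonzeros rather than $\Theta(n^2)$. I would address this by fixing a standard sparse format such as compressed sparse row and merging the sorted nonzero patterns of the three summands column by column, which keeps the cost linear in the total nonzero count. With the dominant $O(m\sqrt{m})$ term supplied by Claim~\ref{clm:triangle_matrix} and everything else linear, the stated bounds follow.
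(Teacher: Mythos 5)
Your proposal is correct and follows essentially the same route as the paper's proof: invoke Claim~\ref{clm:triangle_matrix} to build $T$ in $O(m\sqrt{m})$ time and $O(m+n)$ space, then account for the binarization, the sparse addition $3A-2\check{T}+I$, the two sparse matrix--vector products, and the inner product as linear-time operations over $O(m+n)$ nonzeros. Your extra remark about implementing the sparse addition by merging sorted nonzero patterns in CSR is a reasonable elaboration of a step the paper simply asserts, but it does not change the argument.
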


\begin{proof}
  Algorithm~\ref{alg:tricent_algebraic} achieves this, computing triangle
  centrality in accordance with Proposition~\ref{prop:tricent_algebraic} using
  sparse matrix representation for both the adjacency matrix $\mat{A}$ and graph
  triangle matrix $\mat{T}$. Thus, all operations on these matrices are over
  non-zeros only.

  It was established by Claim~\ref{clm:triangle_matrix} that the matrix
  $\mat{T}$ can be built in $O(m\bar\delta)$ time and $O(m+n)$ space. Now recall
  that both $\mat{A}$ and $\mat{T}$ have $O(m)$ non-zero values. Then, setting
  all non-zero values to unity in $\mat{T}$ to construct $\binr{\mat{T}}$ takes
  $O(m)$ time. Scalar operations and matrix additions on these matrices take
  $O(m)$ time. Therefore, it takes $O(m)$ time to produce $3\mat{A}$ and
  $2\binr{\mat{T}}$, and subsequently, the matrix addition
  $3\mat{A}-2\binr{\mat{T}}+\mat{I}$ also takes $O(m)$ time.

  A sparse matrix-vector multiplication takes $O(m)$ time. There are two
  matrix-vector multiplications in the algorithm. The first produces the
  $\mat{T}\vec{1}$ vector, and the second is the product of the
  $3\mat{A}-2\binr{\mat{T}}+\mat{I}$ matrix with this $\mat{T}\vec{1}$
  vector. The inner product between $\vec{1}^\transpose$ and the
  $\mat{T}\vec{1}$ vector takes $O(n)$ time. Therefore, all algebraic
  multiplications take $O(m+n)$ total time.

  Since $\mat{T}$ holds only triangle neighbors, then $\mat{T}$ and
  $\binr{\mat{T}}$ take $O(m)$ space. The effective addition of triangle
  neighbors to non-triangle neighbors resulting from the
  $3\mat{A}-2\binr{\mat{T}}+\mat{I}$ matrix addition leads to the same amount of
  space as required by $\mat{A}$. Thus, these matrices combined take $O(m)$
  space. All other space is for holding the vector $\mat{T}\vec{1}$, which takes
  $O(n)$ space.

  Altogether with the construction of $\mat{T}$ and $\binr{\mat{T}}$, the
  algorithm completes in $O(m\bar\delta)$ time and $O(m+n)$ space.
\end{proof}

Using fast matrix multiplication, we can calculate triangle centrality in
$n^{\omega+o(1)}$ time or similar bounds in terms of $m$ using the result of
Yuster and Zwick~\cite{bib:yuster_zwick2005}.

\begin{theorem}
  \label{thm:tricent_fastmatrix}
  Triangle centrality can be computed in $n^{\omega+o(1)}$ time using fast
  matrix multiplication, where $\omega$ is the matrix multiplication exponent,
  for all vertices in $G$.
\end{theorem}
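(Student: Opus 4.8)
The plan is to compute the algebraic expression of Definition~\ref{def:tricent_algebraic} directly, observing that the only step exceeding $O(n^2)$ work is forming $A^2$. First I would compute $A^2$ via fast matrix multiplication in $n^{\omega+o(1)}$ time, where the $o(1)$ term absorbs the usual gap between $\omega$ (defined as an infimum of achievable exponents) and the exponent $\omega+\varepsilon$ attained by any concrete algorithm. From $A^2$ I would obtain the graph triangle matrix $T=A^2\circ A$ by a single Hadamard product against $A$, costing only $O(n^2)$ time, and then binarize its nonzeros to get $\check{T}$, also in $O(n^2)$ time.

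Next I would assemble the remaining pieces of Equation~\ref{eqn:tricent_algebraic}. The matrix $3A-2\check{T}+I$ is built by scalar multiplications and matrix additions, each $O(n^2)$. The vector $T\vec{1}$ is a matrix-vector product costing $O(n^2)$, and multiplying the assembled matrix by this vector is again $O(n^2)$. Finally, the normalizing scalar $\vec{1}^\top T\vec{1}$ is an inner product taking $O(n)$ time, and dividing through yields the centrality vector $\vec{C}$.

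The observation that makes the bound tight is that $\omega\ge 2$, so every auxiliary step — the Hadamard product, the binarization, the additions, the two matrix-vector products, and the inner product — is subsumed by the $n^{\omega+o(1)}$ cost of the single multiplication $A^2$. There is no real obstacle beyond the bookkeeping of the $o(1)$ factor; the substance is simply that all triangle information required by Equation~\ref{eqn:tricent_algebraic} is encoded in $A^2$, after which a constant number of $O(n^2)$ operations suffices. For a bound phrased in terms of $m$ rather than $n$, I would instead invoke the sparse matrix multiplication algorithm of Yuster and Zwick~\cite{bib:yuster_zwick2005} to form $A^2$, feeding the result through the same inexpensive post-processing.
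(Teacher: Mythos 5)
Your proposal is correct and follows exactly the route the paper intends: the paper's own proof is a one-line remark that the theorem is an immediate consequence of computing Equation~\ref{eqn:tricent_algebraic} with fast matrix products, and your write-up simply supplies the details (the single $n^{\omega+o(1)}$ multiplication for $A^2$, with all remaining steps bounded by $O(n^2)\le n^{\omega+o(1)}$). The closing remark about Yuster--Zwick for $m$-parameterized bounds likewise matches the paper's accompanying text.
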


The proof follows as an immediate consequence of fast matrix products in
computing the triangle centrality formulation given in
Proposition~\ref{prop:tricent_algebraic}.

\section{\label{sec:parallel_alg}Parallel Algorithm}
\subsection{\label{sec:pram}PRAM Algorithm}
We will describe PRAM algorithms for triangle centrality in this section. In a
PRAM~\cite{bib:fortune1978}, each processor can access any global memory
location in unit time. Processors can read from global memory, perform a
computation, and write a result to global memory in a single clock cycle. All
processors execute these instructions at the same time. Writes to a memory
location are restricted to a one processor at a time in a CREW PRAM.

It follows directly from matrix multiplication that the algebraic triangle
centrality (Proposition~\ref{prop:tricent_algebraic}) can be computed on a CREW
PRAM in $O(\log n)$ time using $O(n^3)$ processors. The work is bounded by the
matrix multiplication of $\mat{A}^2$ needed to get the graph triangle matrix
$\mat{T}$, and it is well known that multiplying two $n\times n$ matrices takes
$O(\log n)$ time using $O(n^3)$ CREW processors~\cite{bib:jaja1992}.

However, we can show that triangle centrality can be computed in $O(\log n)$
time using $O(m\sqrt{m})$ CREW processors and is therefore work-optimal up to a
logarithmic factor.\footnote{Work for parallel processing is $T(n)\times p$ for
runtime $T(n)$ using $p$ processors. It is optimal if it equals the sequential
runtime.} This is achieved by Algorithm~\ref{alg:tricent_pram}, in which
statements contained within a \textbf{for all} construct are performed
concurrently and all statements are executed in top-down order.

\begin{algorithm}[H]
  \caption{\label{alg:tricent_pram}}
  \begin{algorithmic}[1]
    \Require Array $P_v$ of size $2d(v)$ for each $v\in V$ initialized to zero
    \Require Array $X_v$ of size $d(v)$ for each $v\in V$ initialized to zero
    \Require Arrays $T_v, S_v$ of size $d(v)$ for each $v\in V$ initialized to
    zero
    \Require Processor $p$ for each $\{u,w\}$ pair in $N_H(v)$ assigned to
    cell $p$ in $P_w, P_u, P_v, X_w, X_u, X_v$.
    \ForAll{$\{u,w\} \in N_H(v), \forall v\in V$}
      \If{$\{u,w\} \in E$ and $\pi(v) < \pi(u) < \pi(w)$}
        \State write $v,u$ to $P_w[p]$,
        and $v,w$ to $P_u[p]$,
        and $u,w$ to $P_v[p]$
        \State set $X_w[p] := 1$, $X_u[p] := 1$, $X_v[p] := 1$
      \EndIf
    \EndFor
    \ForAll{$v\in V$}
      \State parallel sum over $X_v$ and set sum to $\triangle(v)$
    \EndFor
    \ForAll{$v\in V$}
      \State parallel sum over all $\triangle(v)$ and set sum to $\triangle(G)$
      \Comment Step~\ref{itm:tricent_step1}
      \State parallel sort/scan over $P_v$, write to $T_v$
      \Comment Step~\ref{itm:tricent_step2}, $T_v$ stores $N_\triangle(v)$
    \EndFor
    \ForAll{$u \in T_v, \forall v\in V$}
      \State set $T_v[u] := \triangle(u)$
      \Comment replace each triangle neighbor $u$ of $v$ with $\triangle(u)$
    \EndFor
    \ForAll{$(u,v) \in E$}
      \State set $S_v[u] := \triangle(u)$
    \EndFor
    \ForAll{$v\in V$}
      \State parallel sum over $T_v$ and set sum to $x$
      \Comment Step~\ref{itm:tricent_step3_i}, sans $\triangle(v)$
      \State parallel sum over $S_v$ and set sum to $s$
      \Comment Step~\ref{itm:tricent_step3_ii}
    \EndFor
    \ForAll{$v\in V$}
      \State set $x := x + \triangle(v)$
      \State set $y := s - x + \triangle(v)$
      \Comment non-core triangle sum, Step~\ref{itm:tricent_step3_iii}
      \State output $\tc(v) = \bigl(\frac{1}{3}x + y\bigr)/\triangle(G)$
      \Comment Definition~\ref{def:tricent}, Step~\ref{itm:tricent_step3_iv}
    \EndFor
  \end{algorithmic}
\end{algorithm}

\begin{theorem}
  \label{thm:tricent_crew}
  Triangle centrality can be computed on a CREW PRAM in $O(\log n)$ time using
  $O(m\sqrt{m})$ processors for all vertices in $G$.
\end{theorem}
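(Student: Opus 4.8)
The plan is to analyze Algorithm~\ref{alg:tricent_pram} phase by phase, first arguing correctness against the elementary steps of Figure~\ref{fig:tricent_elementary} and then bounding the processor count and the running time separately. For correctness I would verify that the opening \textbf{for all} block detects exactly the triangles $\{v,u,w\}$ satisfying $\pi(v) < \pi(u) < \pi(w)$ — so each triangle is discovered once, from its lowest-ordered vertex — and scatters the two triangle-neighbor labels into each of $P_v, P_u, P_w$ together with the indicator bits in $X_v, X_u, X_w$. The per-vertex reductions over $X_v$ then yield $\triangle(v)$, a global reduction yields $\triangle(G)$, and the sort/scan over $P_v$ turns the scattered multiset into the deduplicated triangle neighborhood $N_\triangle(v)$ held in $T_v$. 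The final two blocks compute the core sum $x$, the neighbor sum $s$, the non-core sum $y = s - x + \triangle(v)$, and hence $TC(v)$, matching Equation~\ref{eqn:tricent} term by term.

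For the processor bound, the dominant phase is triangle detection, which assigns one processor to each pair $\{u,w\}$ of higher-ordered neighbors in $N_\pi(v)$, i.e. $\sum_{v\in V}\binom{|N_\pi(v)|}{2}$ processors in total. Since $|N_\pi(v)| = O(\sqrt{m})$ and $\sum_{v\in V}|N_\pi(v)| = m$ (each edge is counted once, at its lower-ordered endpoint), this sum is $O\bigl(\max_v |N_\pi(v)| \cdot \sum_v |N_\pi(v)|\bigr) = O(\sqrt{m}\cdot m) = O(m\sqrt{m})$. Every later phase touches only the arrays $P_v, X_v, T_v, S_v$, whose sizes total $O(m)$, and so stays comfortably within the same budget.

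For the time bound I would observe that each phase is either a constant-time local computation performed in parallel or one of the standard CREW PRAM collective primitives — segmented reduction (the parallel sums), sorting, and prefix scan — each running in $O(\log n)$ time with the stated processor count~\cite{bib:jaja1992}. The edge test $\{u,w\}\in E$ is resolved by concurrent reads of a sorted adjacency structure, which is permitted under CREW and costs $O(\log n)$, while every scatter writes to a distinct pre-assigned cell so the exclusive-write restriction is never violated. As there is a constant number of phases, the total time is $O(\log n)$.

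The hard part, I expect, is not the arithmetic but enforcing the exclusive-write discipline throughout while remaining inside the $O(m\sqrt{m})$ processor budget. The scatter of each detected triangle into three separate per-vertex arrays must be laid out so that no two processors ever target the same cell, and the deduplication that converts the scattered multiset in $P_v$ into the set $N_\triangle(v)$ must be realized by sort-then-scan rather than by concurrent writes. Confirming that the processor-to-cell assignment achieves both properties simultaneously, and that the sort/scan genuinely recovers $N_\triangle(v)$ within $O(\log n)$, is the crux; once that is in place, the reductions producing $\triangle(v)$, $\triangle(G)$, $x$, and $s$ are routine.
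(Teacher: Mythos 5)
Your proposal is correct and follows essentially the same route as the paper's proof: a constant number of phases, with the $O(m\sqrt{m})$-processor triangle-detection scatter into pre-assigned exclusive cells of $P_v,X_v$, followed by $O(\log n)$-time parallel sums, sort, and scan over arrays of total size $O(m)$ to deduplicate triangle neighbors and accumulate the core and non-core sums. The only cosmetic difference is that you charge $O(\log n)$ for the edge test via binary search over sorted adjacency lists where the paper asserts $O(1)$ lookup, and you spell out the bound $\sum_{v}\binom{|N_\pi(v)|}{2}\le\sqrt{m}\sum_v|N_\pi(v)|=O(m\sqrt{m})$ explicitly; neither changes the stated bounds.
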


\begin{proof}
  Observe there are $O(m\sqrt{m})$ processors for the first step. Each of these
  processors concurrently reads its assigned $u,w$ pair from $N_H(v)$ and if
  $\{u,w\}\in E$ and $\pi(v) < \pi(u) < \pi(w)$, then a unique triangle is
  found. Subsequently, each unique triangle can only be found by a unique
  processor $p$. Each $p$ exclusively writes the three unique triangle vertex
  pairs to its designated cell in the arrays $P_w, P_u, P_v$. That same
  processor also exclusively writes 1 to it designated cell in the arrays $X_w,
  X_u, X_v$. This first step takes $O(1)$ time using $O(m\sqrt{m})$ processors.

  The remaining steps in the algorithm utilizes parallel sum and scan primitives
  which are known to take $O(\log n)$ time on a CREW~\cite{bib:jaja1992}.

  Then, the triangle counts are computed by parallel sum over each $X_v$. Since
  $X_v$ is $O(d(v))$ in size, then it takes $O(m)$ processors and $O(\log n)$
  time to compute and store the counts. A second parallel sum to get
  $\triangle(G)$ takes the same time and number of processors. Finally, getting
  the unique triangle neighbors of each vertex requires a parallel sort and scan
  over each $P_v$. Parallel sorting can be accomplished in $O(\log n)$
  time~\cite{bib:cole1988} and scanning to remove duplicates takes the same
  time. Thus, altogether using $O(m)$ processors, the triangle neighbors are
  written to each $T_v$ in $O(\log n)$ time using $O(m)$ processors.

  Then, each triangle neighbor $u$ in $T_v$ is replaced with
  $\triangle(u)$. Given a processor for each of the $d(v)$ cells in $T_v$, then
  for all $v\in V$ it takes $O(m)$ processors to exclusively overwrite the
  entries in $O(1)$ time. Similarly, given a processor for each $(u,v) \in E$
  edge that writes to the corresponding $S_v[u]$ cell, the $\triangle(u)$ for
  all $d(v)$ neighbors of $u$ of $v$ are exclusively written in each $S_v$. This
  takes $O(1)$ time and $\sum_v d(v) = O(m)$ processors.

  Next, the sum of triangle counts for neighbors are obtained by parallel sum
  over the arrays $T_v,S_v$, taking $O(\log n)$ time and $O(m)$
  processors. Finally, the triangle centrality for each $v\in V$ is computed in
  parallel, taking $O(1)$ time and $O(n)$ processors.

  Therefore, in total, the triangle centrality can be computed in $O(\log n)$
  time using $O(m\sqrt{m})$ CREW processors.
\end{proof}

\subsection{\label{sec:mapreduce}MapReduce Algorithm}
The MapReduce model~\cite{bib:mrc2010,bib:goodrich2011,bib:pietracaprina2012} is
used to design distributed computing algorithms where efficiency is
parameterized by the number of rounds and communication bits. The model appeared
some years after the programming paradigm was popularized by
Google~\cite{bib:mapreduce2004}. It has been successfully employed in practice
for massive-scale algorithms~\cite{bib:chennupati2020, bib:harvey2018,
  bib:burkhardt2017, bib:burkhardt2015, bib:kolda2014, bib:kang2011,
  bib:vernica2010, bib:tsourakakis2009}. Algorithms in MapReduce use map and
reduce functions, executed in sequence. The input is a set of $\langle key,value
\rangle$ pairs that are ``mapped'' by instances of the map function into a
multiset of $\langle key,value \rangle$ pairs. The map output pairs are
``reduced'' and also output as a multiset of $\langle key,value \rangle$ pairs
by instances of the reduce function where a single reduce instance gets all
values associated with a key. A \emph{round} of computation is a single sequence
of map and reduce executions where there can be many instances of map and reduce
functions. Each map or reduce function can complete in polynomial time for input
$n$. Each map or reduce instance is limited to $O(n^{1-\epsilon})$ memory for a
constant $\epsilon > 0$, and an algorithm is allowed $O(n^{2-2\epsilon})$ total
memory. The number of machines/processors is bounded to $O(n^{1-\epsilon})$, but
each machine can run more than one instance of a map or reduce function.

We give a straightforward, 4-round MapReduce algorithm for triangle centrality
in Algorithm~\ref{alg:tricent_mapreduce}. The basic procedure is to list all
triangle edges, then separately combine the endpoints of these edges with the
original edge endpoints, and finally accumulate the endpoint counts for each
vertex to get the triangle counts and compute the triangle centrality. We will
show that it takes $O(1)$-rounds and communicates $O(m\sqrt{m})$ bits. Our
approach does not require storing any previous state between rounds and is
simple to implement. The MapReduce rounds are described next. The input is
presumed to be degree-annotated edges in the form of $\langle
(v,d(v)),(u,d(u))\rangle$ key-value pairs.

In the first round, the map function returns only the degree-ordered edges. This
ensures that the subsequent reduce function communicates $O(\sqrt{m}d(v))$
unique neighbor pairs for each vertex. Thus, the reduce function in this round
gets $\langle v, \{u \mid u\in N(v)\} \rangle$ and returns $\langle uw,v
\rangle$ for all $u,w \in N(v)$ pairs where $u < w$. In addition, a $\langle
uv,0 \rangle$ pair is returned where $u < v$ and $0$ signify that $u,v$ are
endpoints of an edge. The goal is to identify in the next round if a pair of
neighbors from $v$ are also adjacent and therefore form a triangle with $v$. The
emitted keys are sorted ordered pairs so that $(u,v)$ and $(v,u)$ will be
collected together in the next round. Overall, this round communicates
$O(m\sqrt{m})$ bits.

The map function in the second round is the identity function. The reduce
function returns triangle neighbor pairs from each triangle obtained in $\langle
uw, \{0,\{v\}\}\rangle$ where the value set contains a $0$, which denotes that
$u,w$ are triangle endpoints. The reduce function returns all possible pairs for
a $\{u,w,v\}$ triangle as key-value pairs but annotated with $1$ so these can be
distinguished as triangle neighbors in the next round. Thus, for each $v$ in the
value set, the reduce function returns the pairs,

\begin{displaymath}
  \langle v,(u,1) \rangle, \langle u,(v,1) \rangle, \langle v,(w,1) \rangle,
  \langle w,(v,1) \rangle, \langle u,(w,1) \rangle, \langle w,(u,1) \rangle. 
\end{displaymath}

The third round reads the original edge input in addition to the output from the
second round to complete the neighborhood set of each vertex. The rounds up to
this point discarded the adjacency information and kept only triangle
neighbors. The map function therefore maps $\langle (v,d(v)),(u,d(u))\rangle$ to
$\langle v,(u,0) \rangle$. Any $\langle v,(u,1) \rangle$ pair from the second
round is mapped to itself (identity). The reduce function gets the local
triangle count $\triangle(v)$ for the key $v$ by counting the number of $(u,1)$
values. It should be noted that there can be a multiplicity of a specific
$(u,1)$ value because $(u,v)$ can be in many triangles. Moreover, the second
round returned $v,u$ and $v,w$ from the same $\{v,u,w\}$ triangle and so that
triangle will be double counted in this round. Hence, the count of all $(u,1)$
values equates to $2\triangle(v)$ and must be halved for the output. At this
point, the triangle count for each $v$ is known, but in order to compute the
triangle centrality the triangle counts for each neighbor $u$ of $v$ is needed
and moreover these counts must be separated between triangle and non-triangle
neighbors. Therefore, the reduce function will return each neighbor $u$,
annotated if it is a triangle neighbor or not, with the triangle count of $v$ so
in the next round each vertex will be able to sum the triangle counts of its
neighbors accordingly. Recall that only triangle neighbors from the previous
round will have the number $1$ annotation, and edges from the original graph
have number $0$. These numbers are used to distinguish if a neighbor $u$ of $v$
is a triangle neighbor or not. Thus, if $u$ appears in the values with only $0$,
then it is not a triangle neighbor. The reduce function identifies the triangle
and non-triangle neighbors and then for each unique $u\in N(v)$ from the values
it returns $\langle u,(\triangle(v),\$) \rangle$ where

\begin{align*}
  \$ =
  \begin{cases}
    1 & \text{if u is a triangle neighbor,} \\
    0 & \text{otherwise.}
  \end{cases}
\end{align*}

Also, the key $v$ is returned with its triangle count since it is needed to
complete the triangle core sum on the closed triangle neighborhood,
$N_\triangle[v]$.

The fourth and final round calculates the triangle centrality for every
vertex. The map function is the identity. The reduce function takes $\langle
v,\{(\triangle(u),\$)\}\rangle$ and sums all $\triangle(u)$ in the values
separately for triangle neighbors ($\$=1$) and non-triangle neighbors
($\$=0$). Then, the triangle centrality is calculated for each $v\in V$ and
returned. We remark that the $\triangle(G)$ can be accumulated at the end of
third round and provided to each map and reduction function in the final round,
but we leave out the details for brevity.

\begin{algorithm}[H]
\caption{\label{alg:tricent_mapreduce}}
\begin{algorithmic}
  \Statex \textbf{Round 1}
  \Comment{return edge and neighbor pairs}
  \State\hspace{\algorithmicindent} \Call{Map}{}:
  \begin{displaymath}
    \bigl\langle \bigl( v,d(v) \bigr), \bigl( u,d(u) \bigr) \bigr\rangle
    \longrightarrow \langle v,u \rangle \tag{if $\pi(v) < \pi(u)$}
  \end{displaymath}
  \State\hspace{\algorithmicindent} \Call{Reduce}{}:
  \begin{align*}
    \langle v,N(v) \rangle &
    \longrightarrow 
    \begin{cases}
      \langle uv,0 \rangle & u < v \\
      \{\langle uw,v \rangle\} & u,w \in N(v), u < w
    \end{cases}
  \end{align*}

  \Statex \textbf{Round 2}
  \Comment{return oriented triangle edges}
  \State\hspace{\algorithmicindent} $\textproc{Map} \mapsto \text{Identity}$
  \State\hspace{\algorithmicindent} \Call{Reduce}{}:
  \begin{displaymath}
    \langle uw, \{0,\{v\}\} \rangle \longrightarrow
    \{\langle v,(u,1) \rangle,
    \langle u,(v,1) \rangle,
    \langle v,(w,1) \rangle,
    \langle w,(v,1) \rangle,
    \langle u,(w,1) \rangle,
    \langle w,(u,1) \rangle\}_{\forall v}
  \end{displaymath}

  \Statex \textbf{Round 3}
  \Comment{add $E$ again and compute triangle counts}
  \State\hspace{\algorithmicindent} \Call{Map}{}:
  \begin{align*}
    \langle v,(u,1) \rangle &\longrightarrow \langle v,(u,1) \rangle \\
    \bigl\langle \bigl( v,d(v) \bigr), \bigl( u,d(u) \bigr) \bigr\rangle
    &\longrightarrow \langle v,(u,0) \rangle
  \end{align*}
  \State\hspace{\algorithmicindent} \Call{Reduce}{}:
  \begin{align*}
    \bigl\langle v,\bigl(\{(u,0)|u\in N(v)\},(u,1),\ldots\bigr)\bigr\rangle
    \longrightarrow
    \begin{cases}
      \{\langle u,(\triangle(v),\$) \rangle\}_{\forall u\in N(v)} \\
      \langle v,(\triangle(v),1)\rangle
    \end{cases} \\
    \tag{$\triangle(v) = \frac{1}{2}\sum_u (u,1)$,
      \$ = 1 if u is a triangle neighbor, 0 otherwise}
  \end{align*}

  \Statex \textbf{Round 4}
  \Comment{calculate triangle centrality}
  \State\hspace{\algorithmicindent} $\textproc{Map} \mapsto \text{Identity}$
  \State\hspace{\algorithmicindent} \Call{Reduce}{}:
  \begin{displaymath}
    \langle v,\{(\triangle(u),\$)\} \rangle \longrightarrow
    \Bigl\{\Bigl\langle v,\frac{\frac{1}{3}x+y}{\triangle(G)}\Bigr\rangle\Bigr\}
    \tag{$x = \sum_u (\triangle(u),1), y = \sum_u (\triangle(u),0)$}
  \end{displaymath}
\end{algorithmic}
\end{algorithm}

\begin{theorem}
  \label{thm:tricent_mapreduce}
  Triangle centrality can be computed using MapReduce in four rounds and
  $O(m\sqrt{m})$ communication bits for all vertices in $G$.
\end{theorem}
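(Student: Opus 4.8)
The plan is to verify three properties of Algorithm~\ref{alg:tricent_mapreduce}: that its map--reduce stages correctly evaluate Equation~\ref{eqn:tricent} at every vertex, that there are exactly four rounds, and that the total communication is $O(m\sqrt{m})$. The round count is read off directly from the algorithm, so the substance is the correctness trace and the communication bound, the latter being the crux.

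For correctness I would follow the annotated key--value pairs through the rounds. Round~1 emits, for each $v$, an edge marker $\langle uv,0\rangle$ per incident edge and a wedge $\langle uw,v\rangle$ for every pair $u,w\in N_\pi(v)$; since a triangle $\{v,u,w\}$ with $\pi(v)<\pi(u)<\pi(w)$ has both $u,w\in N_\pi(v)$, each triangle produces its wedge keyed on the closing edge $uw$ exactly once. Round~2 gathers, for each key $uw$, the marker $0$ (present iff $uw\in E$) together with the middle vertices; when the marker is present a genuine triangle is confirmed and the six oriented endpoint pairs $\langle v,(u,1)\rangle,\ldots$ are emitted. Round~3 re-injects the original edges as $(u,0)$ values and recovers $\triangle(v)=\frac{1}{2}\lvert\{(\cdot,1)\text{ values at }v\}\rvert$, where the factor $\frac{1}{2}$ corrects the overcount that each triangle through $v$ contributes both $(u,1)$ and $(w,1)$; it then forwards to each neighbor $u$ the pair $(\triangle(v),\$)$ with $\$=1$ exactly when $u$ is a triangle neighbor of $v$, together with $v$'s own count flagged $1$. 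Round~4 accumulates at each vertex $x=\sum_{\$=1}\triangle(\cdot)$ and $y=\sum_{\$=0}\triangle(\cdot)$, which by symmetry of the triangle-neighbor relation equal $\sum_{u\in N^+_\triangle(v)}\triangle(u)$ and $\sum_{w\in N(v)\setminus N_\triangle(v)}\triangle(w)$, so $(\frac{1}{3}x+y)/\triangle(G)$ is exactly $TC(v)$; the normalizer $\triangle(G)=\frac{1}{3}\sum_v\triangle(v)$ is accumulated at the close of Round~3.

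For the communication bound I would proceed round by round, with Round~1 decisive. The wedges emitted there number $\sum_v\binom{|N_\pi(v)|}{2}=O\!\bigl(\sum_v|N_\pi(v)|^2\bigr)$; because degree ordering forces $|N_\pi(v)|=O(\sqrt{m})$ while $\sum_v|N_\pi(v)|=m$ counts each edge once, this collapses to $O(\sqrt{m})\sum_v|N_\pi(v)|=O(m\sqrt{m})$, and the $O(m)$ edge markers are absorbed. Round~2 reads these $O(m\sqrt{m})$ pairs and, since $G$ has at most $O(m\sqrt{m})$ triangles, emits $6\triangle(G)=O(m\sqrt{m})$ pairs. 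Round~3 reads $O(m\sqrt{m})$ values but emits only $\sum_v(d(v)+1)=O(m+n)$ pairs, and Round~4 emits one value per vertex, $O(n)$. The total is therefore dominated by the first three rounds at $O(m\sqrt{m})$.

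The main obstacle is exactly the Round~1 estimate: everything rests on $\sum_v|N_\pi(v)|^2=O(m\sqrt{m})$, which follows from the degree-ordering fact that no vertex has more than $\sqrt{2m}$ higher-ordered neighbors (otherwise each such neighbor would itself have degree exceeding $\sqrt{2m}$, forcing more than $2m$ edge endpoints). A secondary pitfall is bookkeeping the two distinct effects of the orientation $\pi(v)<\pi(u)<\pi(w)$: it makes each triangle appear once as a wedge in Round~1, yet each triangle still deposits two flagged values at each of its vertices in Round~2, which is the overcount Round~3 halves; conflating these is the easiest place to err. Finally I would confirm the per-instance and aggregate data stay within the model's $O(n^{1-\epsilon})$ and $O(n^{2-2\epsilon})$ memory bounds, which follows from the same per-vertex counts, so the four-round, $O(m\sqrt{m})$-communication cost is genuinely realizable.
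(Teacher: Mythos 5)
Your proposal is correct and follows essentially the same route as the paper's proof: a round-by-round accounting in which the Round~1 wedge emission $\sum_v\binom{|N_\pi(v)|}{2}=O(m\sqrt{m})$ (via the degree-ordering bound $|N_\pi(v)|=O(\sqrt{m})$) dominates, Round~2 is bounded by six pairs per triangle, and Rounds~3--4 cost only $O(m+n)$ and $O(n)$ bits. Your correctness trace, including the factor-$\frac{1}{2}$ correction for the double deposit of flagged values at each triangle vertex and the accumulation of $\triangle(G)$ at the end of Round~3, matches the paper's argument.
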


\begin{proof}
  We will show that Algorithm~\ref{alg:tricent_mapreduce} achieves this
  claim. The accounting for the number of communication bits is as follows.

  The first round makes unique pairwise combinations of neighbor vertices for
  each vertex. Only degree-ordered edges $(v,u)$, where $\pi(v) < \pi(u)$, are
  used to create neighbor pairs, leaving half the edges. Then, each vertex $v$
  has $O(\sqrt{m})$ neighbors, all with higher degree; otherwise, it would lead
  to a contradiction of $\sum_{u\in N(v)} d(u) > O(m)$. This leads to
  ${\sum_{v\in V}\binom{d(v)}{2} \le \sqrt{m}\sum_{v\in V} d(v) \le
    O(m\sqrt{m})}$ unique neighbor pairs. Each degree-ordered edge is also
  returned with the number $0$, where there are $m$ instead of $2m$ edges. Thus,
  in total, there are $m+m\sqrt{m} = O(m\sqrt{m})$ bits communicated by this
  round.

  The second round takes all $O(m\sqrt{m})$ key-value pairs from the first round
  but ignores those key-value pairs that do not have the number $0$ in the
  values. This leaves only key-value pairs that correspond to triangle
  edges. The reduce step then returns the three edges of each triangle as
  directed pairs. This leads to a total of six triangle neighbor pairs for every
  triangle. Since there are $O(m\sqrt{m})$ triangles, then overall this round
  communicates $O(m\sqrt{m})$ bits.

  The third round combines all $2m$ edges with the triangle neighbor pairs from
  the second round to complete the neighborhood of each vertex. The triangle
  counts for triangle and non-triangle neighbors are computed in the reduce
  step. Then, for each vertex $v$, the unique neighbors $u\in N(v)$ are returned
  with $\triangle(v)$ and, respectively, the number $1$ or $0$ if $u$ is a
  triangle neighbor or not. Also $v$ is returned with $(\triangle(v),1)$ to
  complete the triangle core $N_\triangle[v]$ of $v$. Altogether this amounts to
  $2m + \sum_{v\in V} d(v)+1 \le n+4m \le O(n+m)$ bits communicated in this
  round.

  The fourth and final round totals the triangle counts from triangle neighbors
  and non-triangle neighbors separately and then computes the triangle
  centrality. This round returns each vertex with its triangle centrality and
  therefore communicates $O(n)$ bits.

  Altogether each round communicates $O(m\sqrt{m})$ bits and there are four
  rounds. Therefore, the algorithm takes four MapReduce rounds and communicates
  $O(m\sqrt{m})$ bits as claimed.
\end{proof}

\section{\label{sec:performance}Performance}
Next, we give performance results for computing triangle centrality on larger
graphs from the Stanford Network Analysis Project (SNAP)~\cite{bib:snapnets}.
Because the asymptotic runtime bound for triangle centrality differs from the
other centrality measures we have discussed, it would be meaningless to compare
empirical wallclock times. Therefore, we give benchmarks only for triangle
centrality.

We implement our main algorithm given in Algorithm~\ref{alg:tricent_main} in C++
and POSIX threads using the \textproc{triangleneighbor} procedure in
Appendix~\ref{sec:trineighbor_alt}. The benchmarks were run on a single
workstation with 256 GB of RAM and 28 Intel Xeon E5-2680
cores. Table~\ref{tbl:runtime} tabulates the runtime results.

\begin{table}[H]
\caption{\label{tbl:runtime} Runtime}
\centering
\sisetup{input-ignore={,},group-separator={,}}
\begin{tabular}{l
    S[table-format=8]
    *2{S[table-format=10]}
    S[table-format=2.3,round-mode=places]}
  \toprule
  \multicolumn{1}{c}{Graph}
  & {$n$ (vertices)}
  & {$m$ (edges)}
  & {$\triangle(G)$ (triangles)}
  & {wallclock (seconds)} \\
  \midrule
  com-Youtube & 1,134,890 & 2,987,624 & 3,056,386 & 0.231 \\
  as-Skitter & 1,696,415 & 11,095,298 & 28,769,868 & 0.589 \\
  com-LiveJournal & 3,997,962 & 34,681,189 & 177,820,130 & 1.51 \\
  com-Orkut & 3,072,441 & 117,185,083 & 627,584,181 & 4.87 \\
  com-Friendster & 65,608,366 & 1,806,067,135 & 4,173,724,142 & 68.4 \\
  \bottomrule
\end{tabular}
\end{table}

\section*{Acknowledgments}
The author is grateful to David G. Harris for helpful comments. The author also
thanks the anonymous reviewers for their suggestions that helped improve the
quality of the article.

\bibliographystyle{abbrvurl}
\bibliography{triangle_centrality}

\appendix
\section{\label{sec:review}Centrality Review}
The following is a brief review of popular centrality measures, specifically
closeness~\cite{bib:bavelas1950}, degree~\cite{bib:shaw1954,bib:nieminen1974},
eigenvector~\cite{bib:bonacich1972}, betweenness~\cite{bib:freeman1977}, and
PageRank~\cite{bib:pagerank1998}.

Closeness centrality was defined by Bavelas in 1950~\cite{bib:bavelas1950},
making one of the earliest measures for identifying central actors in a
network. In closeness centrality, a central vertex is closest to all others. The
closeness centrality of a vertex $v$ is the inverse of the average distance from
$v$ to all other vertices. Hence it is defined by,

\begin{displaymath}
  \cc(v) = \frac{n-1}{\sum_{u\in V} d_G(v,u)}.
\end{displaymath}

Some texts refer to the reciprocal of this equation for closeness
centrality~\cite{bib:freeman1979} where larger values indicate farness as
opposed to closeness. In matrix notation, the closeness centrality for all
vertices is,

\begin{displaymath}
  \cntvec{\cc} = (n-1)\frac{1}{\mat{A}^n\vec{1}}.
\end{displaymath}

The time complexity of closeness centrality is bounded by computing all-pairs
shortest paths and hence takes $O(mn)$ time.

Degree centrality was proposed in 1954 by Shaw~\cite{bib:shaw1954} and refined
later in 1974 by Nieminen~\cite{bib:nieminen1974}. It measures importance based
on the degree of a vertex, and therefore, important vertices are those with high
degree. The degree centrality of a vertex $v$ is then $\dc(v)=d(v)$. Using the
adjacency matrix $\mat{A}$, the degree centrality $\dc(v)$ is given by,

\begin{displaymath}
  \dc(v)
  = \sum_{u\in V} a(v,u)
  = \norm{\vec{A_v}}_1
  = \vec{A_v}^\transpose \vec{1}.
\end{displaymath}

\noindent
It then follows that all $\dc(v)$ values is computed by,

\begin{displaymath}
  \cntvec{\dc} = \mat{A}\vec{1}.
\end{displaymath}

This measure is very simple because it depends only on the degrees and thus is
also easy to compute, taking only $O(m)$ time.

Eigenvector centrality was formalized by Bonacich in
1972~\cite{bib:bonacich1972}. It measures the number and quality of connections
associated with a vertex based on the largest eigenvalue of the adjacency
matrix. Important vertices under this measure are those with many connections,
especially if the connections are to other important vertices. Eloquently put by
Borgatti and Everett, a central actor is therefore one who ``knows everybody who
is anybody''~\cite{bib:borgatti_everett2006}. The eigenvector centrality of a
vertex $v$ is given by the $v$th component of the eigenvector corresponding to
the largest eigenvalue in $\mat{A}$. Specifically, the relative eigenvector
centrality $\ev(v)$ for a vertex $v$ where $\lambda$ is the largest eigenvalue
of $\mat{A}$ is then,

\begin{displaymath}
  \ev(v) = \frac{1}{\lambda} \sum_{u\in V} a(v,u)\ev(u).
\end{displaymath}

Given a vector $\cntvec{\ev}$ to hold all $\ev(v)$, then it follows from the
familiar $\mat{A}\vec{x} = \lambda\vec{x}$ form of eigenvalue equations that the
eigenvector centrality in matrix notation for all vertices is,

\begin{displaymath}
  \cntvec{\ev} = \frac{1}{\lambda}\mat{A}\cntvec{\ev}.
\end{displaymath}

The advantage of this measure is it gives more weight to connections from
important vertices than those that are not important. Eigenvector centrality can
be computed in $O(n^3)$ time using the power iteration
method~\cite{bib:golub_vanloan2013}.

Betweenness centrality, introduced in 1977 by Freeman~\cite{bib:freeman1977},
measures importance by the number of shortest paths a vertex intersects. An
important vertex in this measure has a large fraction of shortest paths that
intersect it over all possible shortest paths. This implies that an important or
central vertex in a graph under this measure is one whose removal will disrupt
the flow of information to many other vertices. Let $\sigma_{ij}$ be the number
of shortest paths from $i$ to $j$, and $\sigma_{ij}(v)$ be the number of these
shortest paths that intersect $v$. Then, the betweenness centrality $\bc(v)$ for
a vertex $v$ is,

\begin{displaymath}
  \bc(v) = \sum_{i\ne v\ne j} \frac{\sigma_{ij}(v)}{\sigma_{ij}}.
\end{displaymath}

Computing betweenness centrality requires finding all-pairs shortest paths,
taking $O(mn)$ time~\cite{bib:brandes2001}.

PageRank centrality was published in 1998 by Brin and Page as the underlying
technology for the Google search engine~\cite{bib:pagerank1998}. The PageRank
centrality is a variant of eigenvector centrality. Importance is due to the
quantity and quality of links, just as in eigenvector centrality. But PageRank
also allows some randomness through a damping factor. Applied to web search,
PageRank centrality treats the World Wide Web as a graph where Web pages are
vertices and hyperlinks are edges. An important Web page in PageRank has many
hyperlinks and especially so if the hyperlinks are from other important Web
pages. The PageRank centrality $\pr(v)$ of a vertex $v$ and damping factor $d$
is,

\begin{displaymath}
  \pr(v) = \frac{1-d}{n} + d\sum_{u\in N(u)} \frac{\pr(u)}{d(u)}.
\end{displaymath}

Let the vector $\cntvec{\pr}$ hold all $\pr(v)$ values, then the algebraic form
is,

\begin{displaymath}
  \cntvec{\pr}=d\mat{M}\cntvec{\pr} + \frac{1-d}{n}\vec{1},
\end{displaymath}

\noindent where $\mat{M}=(\mat{K}^{-1}\mat{A})^\transpose$ and $\mat{K}$ is a
diagonal matrix in which the entries are the vertex degrees.

Computing the PageRank takes the same time as eigenvector centrality.

\subsection{\label{sec:trineighbor_alt}Alternative Triangle Neighbor Algorithm}
In Section~\ref{sec:algorithm}, we introduced our \textproc{triangleneighbor}
algorithm listed in Procedure~\ref{alg:trineighbor}. It is a practical and
deterministic algorithm for counting triangles and listing triangle neighbors in
optimal time and linear space. In this section, we describe an alternative
\textproc{triangleneighbor} algorithm that avoids linked lists to store triangle
neighbors of a vertex but takes $O(m\sqrt{m})$ time. Despite the slower runtime,
it may be practically more advantageous in parallel implementations.

\setcounter{algorithm}{1}
\begin{algorithm}[H]
  \floatname{algorithm}{Procedure}
  \caption{\label{alg:trineighbor_alt}Triangle Neighbor}
  \begin{algorithmic}[1]
    \Require $L_v$, array of size $|N_H(v)|$ for each $v\in V$,
    initialized to 0
    \Require sorted $N_H(v)$ for each $v\in V$
    \For{$v\in V$}
      \For{$u\in N_H(v)$}
        \State set $t := 0$
        \State set $i$ to the index of $u\in N_H(v)$
        \For{$w\in N_H(v)\cap N_H(u)$}
          \State set $t := 1$
          \State set $l$ to index of $w\in N_H(v)$ and $r$ to index of $w\in
          N_H(u)$
          \Comment{indices from set intersection}
          \State set $L_v(l) := 1$ and $L_u(r) := 1$
          \State increment
          $\triangle(v),\triangle(u),\triangle(w),\triangle(G)$
        \EndFor
        \If{$t \ne 0$}
          \State set $L_v(i) := 1$
        \EndIf
      \EndFor
    \EndFor
  \end{algorithmic}
\end{algorithm}

Let us call $v,u,w$, respectively, the \emph{low}, \emph{middle}, and
\emph{high} vertices. In Procedure~\ref{alg:trineighbor_alt}, we compute the set
intersection over the higher-ordered adjacency sets, $N_H(v)$. Then, any
$\{v,u,w\}$ triangle can only be reported by computing $N_H(v) \cap N_H(u)$
because $v$ is not in $N_H(u)$ or $N_H(w)$. Therefore, only the $\{v,u\}$ edge
can facilitate the $N_H(v) \cap N_H(u)$ set intersection. Meaning the set
intersections will only involve the higher-ordered adjacencies of the low and
middle vertices in each triangle, and only the high vertex can be returned by
the set intersections. Since all set intersections are over the higher-ordered
sets, it takes $O(m\sqrt{m})$ time. Then, the handling of marking triangle
neighbors to avoid duplication requires only simple arrays. We describe this in
detail next.

For every $v\in V$, an array $L_v$ of size $|N_H(v)|$ is used to mark the
position of each $u \in N_H(v)$ that is a triangle neighbor of $v$, as
determined by the set intersections. Thus, the higher-ordered triangle neighbors
of $v$ have the same position in $L_v$ as in $N_H(v)$. The set intersection
$N_H(v)\cap N_H(u)$ is computed as a linear scan over $N_H(v)$ and $N_H(u)$,
with two pointers to track the positions in $N_H(v),N_H(u)$. Then, any triangle
neighbor $w$ found by the set intersection must, respectively, correspond to
positions in $L_v$ and $L_u$. For example, suppose $w$ is found from $N_H(v)\cap
N_H(u)$ where $w$ is the $5$th neighbor in $N_H(v)$ and it is the $8$th neighbor
in $N_H(u)$. Then, $L_v(5)$ and $L_u(8)$ are set to 1 to mark that $w$ is a
triangle neighbor of $v$ and $u$. At the end of computing $N_H(v)\cap N_H(u)$,
if any common neighbor $w$ was found, then the position of $u\in N_H(v)$ is set
to 1 in $L_v$.

It is important to note that Procedure~\ref{alg:trineighbor_alt} does not output
the triangle neighbors explicitly, leaving it to subsequent
applications. Observe that only the low vertex $v$ in a $\{v,u,w\}$ triangle has
marked both $u,w$ as triangle neighbors, whereas $u$ has marked only $w$, and
$w$ has not marked either $v$ or $u$. But this is sufficient information to
identify all triangle neighbors of each vertex. For each $v$, simply scan over
all $u\in N_H(v)$ and if the corresponding position for $u$ in $L_v$ is 1, then
add $u$ to the triangle neighbor list of $v$ and also add $v$ to the triangle
neighbor list of $u$. This effectively ensures that the endpoints of each
triangle edge are paired and thus accomplishes the goal of optimally finding all
triangle neighbors of each vertex in linear space without hash tables, while
also computing all triangle counts. The correctness of
Procedure~\ref{alg:trineighbor_alt} is immediate because the endpoints of each
triangle edge are oppositely updated. The triangle counts follow naturally and
are essentially free in the computation.

\section{\label{sec:jaccard_similarity}Jaccard Similarity of Centralities}
The Jaccard index~\cite{bib:jaccard} is given by,

\begin{displaymath}
  J(i,j) = \frac{\lvert S_i \cap S_j \rvert}{\lvert S_i \cup S_j \rvert}
  = \frac{\lvert S_i \cap S_j \rvert}{\lvert S_i \rvert + \lvert S_j \rvert -
    \lvert S_i \cap S_j \rvert},
\end{displaymath}

\noindent
where $S_i,S_j$ are two sets. Hence, $J(i,j)=J(j,i)$ is a rational number in the
interval $[0,1]$ and denotes the fraction of overlap between the sets, where
values of 0 and 1, respectively, indicate disjoint and equivalent
sets. Therefore, we interpret two sets to be similar if the Jaccard index is
close to 1. The difference $1-J(i,j)$ suggests a distance between sets $S_i,S_j$
and can be used as a similarity metric.

Here $\lvert S_i \rvert = \lvert S_j \rvert$ for all pairs then the range of
$J(i,j)$ follows $\frac{c}{20-c}$, where $c=0..10$. Therefore, the possible
Jaccard index values are
$0,\frac{1}{19},\frac{2}{18},\frac{3}{17},\ldots,1$. These are unique so given
the real values it is not difficult to determine the rational form. Also note
since the sets are fixed size, then as the numerator increases the denominator
decreases by the same amount. Therefore, it is easy to see that a larger set
intersection corresponds to a larger Jaccard index.

For each centrality measure $i$, we denote $C_j$ as the closest centrality
measure $j$ to $i$. This is independent of the ordering within the top ten
rankings unless there are ties. In the case of a tie, we look at the first node
in $i$'s top ten list and choose the $j$ that ranks that same node higher than
the other tied measures, moving down $i$'s list on ties or misses. Empty entries
in Table~\ref{tbl:best_jaccard} indicate $J(i,j)=0$ for each pair of centrality
measures.

\begin{table}[H]
  \captionsetup{justification=centering}
  \caption{\label{tbl:best_jaccard} Most Similar Centrality Pairs for Graphs in
    Table~\ref{tbl:graphdata}\\
    (Jaccard Index $J(i,j)$ over Top $k=10$ Rankings)}
  \centering
  \footnotesize
  \begin{tabular}{S[table-format=2] *{12}{S[table-format=1.2]}}
    \toprule
    & \multicolumn{2}{c}{\tc}
    & \multicolumn{2}{c}{\bc}
    & \multicolumn{2}{c}{\cc}
    & \multicolumn{2}{c}{\dc}
    & \multicolumn{2}{c}{\ev}
    & \multicolumn{2}{c}{\pr} \\
    \cmidrule(r){2-3}
    \cmidrule(l){4-5}
    \cmidrule(l){6-7}
    \cmidrule(l){8-9}
    \cmidrule(l){10-11}
    \cmidrule(l){12-13}
    {Graph no.}
    & {$C_j$} & {$J(i,j)$}
    & {$C_j$} & {$J(i,j)$}
    & {$C_j$} & {$J(i,j)$}
    & {$C_j$} & {$J(i,j)$}
    & {$C_j$} & {$J(i,j)$}
    & {$C_j$} & {$J(i,j)$} \\
    \midrule
    1 & \ev & .82 & \cc & .67 & \bc & .67 & \ev & .67 & \tc & .82 & \dc & .67 \\
    2 & \cc & .67 & \cc & .82 & \bc & .82 & \ev & .82 & \dc & .82 & \dc & 1.00 \\
    3 & \ev & .54 & \cc & .43 & \bc & .43 & \pr & .67 & \dc & .54 & \dc & .67 \\
    4 & \ev & .54 & \pr & .67 & \dc & 1.00 & \cc & 1.00 & \tc & .54 & \bc & .67 \\
    5 & \ev & 1.00 & \pr & .67 & \dc & .54 & \pr & .54 & \tc & 1.00 & \bc & .67 \\
    6 & \dc & .67 & \cc & .54 & \bc & .54 & \pr & .82 & \tc & .54 & \dc & .82 \\
    7 & \ev & .54 & \pr & .82 & \dc & .82 & \cc & .82 & \dc & .67 & \bc & .82 \\
    8 & \dc & .25 & \cc & .33 & \bc & .33 & \pr & .67 & \dc & .33 & \dc & .67 \\
    9 & \ev & 1.00 & \dc & .54 & \dc & .54 & \pr & 1.00 & \tc & 1.00 & \dc & 1.00 \\
    10 & \ev & .82 & \pr & .82 & \dc & .82 & \pr & 1.00 & \tc & .82 & \dc & 1.00 \\
    11 & \dc & .25 & \cc & .54 & \bc & .54 & \pr & .33 & \dc & .18 & \bc & .33 \\
    12 & \ev & .82 & \cc & .25 & \bc & .25 & \pr & .54 & \tc & .82 & \dc & .54 \\
    13 & \ev & .67 & \pr & .43 & \bc & .33 & \pr & .54 & \tc & .67 & \dc & .54 \\
    14 & \ev & .67 & \pr & .33 & \dc & .54 & \tc & .67 & \tc & .67 & \cc & .54 \\
    15 & \ev & .82 & \pr & .67 & \tc & .67 & \pr & .67 & \tc & .82 & \dc & .67 \\
    16 & \ev & 1.00 & \pr & .43 & \dc & .43 & \cc & .43 & \tc & 1.00 & \bc & .43 \\
    17 & \ev & .33 & \pr & .33 & \bc & .18 & \ev & .82 & \dc & .82 & \bc & .33 \\
    18 & \ev & .82 & \pr & .43 & \bc & .25 & \bc & .25 & \tc & .82 & \bc & .43 \\
    19 & \dc & .33 & \dc & .25 & \bc & .18 & \pr & .54 & \tc & .11 & \dc & .54 \\
    20 & \ev & .05 & {} & {} & {} & {} & \pr & .05 & \tc & .05 & \dc & .05 \\
    \bottomrule
  \end{tabular}
\end{table}

\begin{table}[H]
  \caption{\label{tbl:allpairs_jaccard} All-Pairs Jaccard Similarity of Top
    $k=10$ Rankings over Graphs in Table~\ref{tbl:graphdata}}
  \centering
  \footnotesize
  \setlength{\arraycolsep}{3.33pt}
  \renewcommand{\frac}{\sfrac}
  \renewcommand{\arraystretch}{2}
  \[
  \begin{array}{*{21}{c}}
    \toprule
    \multicolumn{20}{c}{\text{Graph no.}} \\
    \cmidrule{2-21}
    J(i,j) & 1 & 2 & 3 & 4 & 5 & 6 & 7 & 8 & 9 & 10
    & 11 & 12 & 13 & 14 & 15 & 16 & 17 & 18 & 19 & 20 \\
    \midrule
    J(\tc,\bc) & \frac{5}{15} & \frac{8}{12} & \frac{1}{19} & \frac{5}{15}
    & \frac{4}{16} & \frac{5}{15} & \frac{5}{15} & 0
    & \frac{6}{14} & \frac{5}{15} & 0 & 0
    & 0 & \frac{3}{17} & \frac{2}{18} & \frac{2}{18} 
    & \frac{2}{18} & 0 & \frac{1}{19} & 0 \\
    J(\tc,\cc) & \frac{5}{15} & \frac{8}{12} & \frac{3}{17} & \frac{6}{14}
    & \frac{5}{15} & \frac{2}{18} & \frac{5}{15} & \frac{2}{18}
    & \frac{6}{14} & \frac{5}{15} & 0 & 0
    & 0 & \frac{7}{13} & \frac{8}{12} & \frac{5}{15}
    & \frac{2}{18} & 0 & \frac{2}{18} & 0 \\
    J(\tc,\dc) & \frac{7}{13} & \frac{7}{13} & \frac{5}{15} & \frac{6}{14}
    & \frac{7}{13} & \frac{8}{12} & \frac{6}{14} & \frac{4}{16}
    & \frac{9}{11} & \frac{5}{15} & \frac{4}{16} & \frac{1}{19}
    & 0 & \frac{8}{12} & \frac{4}{16} & \frac{4}{16}
    & \frac{5}{15} & \frac{1}{19} & \frac{5}{15} & \frac{1}{19} \\
    J(\tc,\ev) & \frac{9}{11} & \frac{8}{12} & \frac{7}{13} & \frac{7}{13}
    & 1 & \frac{7}{13} & \frac{7}{13} & \frac{3}{17}
    & 1 & \frac{9}{11} & 0 & \frac{9}{11}
    & \frac{8}{12} & \frac{8}{12} & \frac{9}{11} & 1
    & \frac{5}{15} & \frac{9}{11} & \frac{2}{18} & \frac{1}{19} \\
    J(\tc,\pr) & \frac{6}{14} & \frac{7}{13} & \frac{5}{15} & \frac{5}{15}
    & \frac{4}{16} & \frac{8}{12} & \frac{5}{15} & \frac{2}{18}
    & \frac{9}{11} & \frac{5}{15} & 0 & 0
    & 0 & \frac{5}{15} & \frac{2}{18} & \frac{2}{18}
    & \frac{2}{18} & 0 & \frac{3}{17} & 0 \\
    J(\bc,\cc) & \frac{8}{12} & \frac{9}{11} & \frac{6}{14} & \frac{8}{12}
    & \frac{6}{14} & \frac{7}{13} & \frac{7}{13} & \frac{5}{15}
    & \frac{6}{14} & \frac{9}{11} & \frac{7}{13} & \frac{4}{16}
    & \frac{5}{15} & \frac{4}{16} & \frac{3}{17} & \frac{5}{15}
    & \frac{3}{17} & \frac{4}{16} & \frac{3}{17} & 0 \\
    J(\bc,\dc) & \frac{8}{12} & \frac{8}{12} & \frac{6}{14} & \frac{8}{12}
    & \frac{7}{13} & \frac{5}{15} & \frac{7}{13} & \frac{1}{19}
    & \frac{7}{13} & \frac{9}{11} & \frac{2}{18} & 0
    & \frac{5}{15} & \frac{5}{15} & \frac{7}{13} & \frac{3}{17}
    & \frac{3}{17} & \frac{4}{16} & \frac{4}{16} & 0 \\
    J(\bc,\ev) & \frac{6}{14} & \frac{8}{12} & \frac{3}{17} & \frac{5}{15}
    & \frac{4}{16} & \frac{3}{17} & \frac{5}{15} & 0
    & \frac{6}{14} & \frac{5}{15} & \frac{1}{19} & 0
    & 0 & \frac{1}{19} & \frac{3}{17} & \frac{2}{18}
    & \frac{2}{18} & 0 & 0 & 0 \\
    J(\bc,\pr) & \frac{8}{12} & \frac{8}{12} & \frac{5}{15} & \frac{8}{12}
    & \frac{8}{12} & \frac{5}{15} & \frac{9}{11} & \frac{2}{18}
    & \frac{7}{13} & \frac{9}{11} & \frac{5}{15} & 0
    & \frac{6}{14} & \frac{5}{15} & \frac{8}{12} & \frac{6}{14}
    & \frac{5}{15} & \frac{6}{14} & \frac{3}{17} & 0 \\
    J(\cc,\dc) & \frac{7}{13} & \frac{9}{11} & \frac{5}{15} & 1
    & \frac{7}{13} & \frac{2}{18} & \frac{9}{11} & \frac{3}{17}
    & \frac{7}{13} & \frac{9}{11} & \frac{2}{18} & 0
    & \frac{4}{16} & \frac{7}{13} & \frac{4}{16} & \frac{6}{14}
    & \frac{2}{18} & \frac{2}{18} & \frac{3}{17} & 0 \\
    J(\cc,\ev) & \frac{6}{14} & \frac{9}{11} & \frac{4}{16} & \frac{7}{13}
    & \frac{5}{15} & \frac{3}{17} & \frac{7}{13} & \frac{2}{18}
    & \frac{6}{14} & \frac{5}{15} & \frac{1}{19} & 0
    & 0 & \frac{6}{14} & \frac{8}{12} & \frac{5}{15}
    & \frac{2}{18} & 0 & 0 & 0 \\
    J(\cc,\pr) & \frac{8}{12} & \frac{9}{11} & \frac{4}{16} & \frac{8}{12}
    & \frac{7}{13} & \frac{2}{18} & \frac{8}{12} & \frac{4}{16}
    & \frac{7}{13} & \frac{9}{11} & \frac{4}{16} & 0
    & \frac{3}{17} & \frac{7}{13} & \frac{2}{18} & \frac{3}{17}
    & \frac{2}{18} & \frac{1}{19} & \frac{2}{18} & 0 \\
    J(\dc,\ev) & \frac{8}{12} & \frac{9}{11} & \frac{7}{13} & \frac{7}{13}
    & \frac{7}{13} & \frac{5}{15} & \frac{8}{12} & \frac{5}{15}
    & \frac{9}{11} & \frac{5}{15} & \frac{3}{17} & \frac{1}{19}
    & 0 & \frac{6}{14} & \frac{5}{15} & \frac{4}{16}
    & \frac{9}{11} & \frac{1}{19} & \frac{1}{19} & \frac{1}{19} \\
    J(\dc,\pr) & \frac{8}{12} & 1 & \frac{8}{12} & \frac{8}{12}
    & \frac{7}{13} & \frac{9}{11} & \frac{8}{12} & \frac{8}{12}
    & 1 & 1 & \frac{5}{15} & \frac{7}{13}
    & \frac{7}{13} & \frac{7}{13} & \frac{8}{12} & \frac{2}{18}
    & \frac{3}{17} & \frac{4}{16} & \frac{7}{13} & \frac{1}{19} \\
    J(\ev,\pr) & \frac{7}{13} & \frac{9}{11} & \frac{6}{14} & \frac{5}{15}
    & \frac{4}{16} & \frac{5}{15} & \frac{6}{14} & \frac{3}{17}
    & \frac{9}{11} & \frac{5}{15} & \frac{3}{17} & 0
    & 0 & \frac{4}{16} & \frac{3}{17} & \frac{2}{18}
    & \frac{2}{18} & 0 & 0 & 0 \\
    \bottomrule
  \end{array}
  \]
  \end{table}

\end{document}